\newif\ifconf
\ifconf
    \documentclass[UKenglish,cleveref,autoref,thm-restate]{lipics-v2021} 
    \bibliographystyle{plainurl}

    \author{Matthias Bentert}{University of Bergen, Norway}{matthias.bentert@uib.no}{}{Supported by the European Research Council (ERC) under the European Union’s Horizon 2020 research and innovation programme (grant agreement No. 819416).}
    \author{Fedor V. Fomin}{University of Bergen, Norway}{fedor.fomin@uib.no}{}{Supported by the Research Council of Norway via the project BWCA (grant no. 314528).}
    \author{Fanny Hauser}{Technische Universität Berlin, Germany \and University of Bergen, Norway}{f.hauser@tu-berlin.de}{}{Supported by the German Academic Exchange Service under the Erasmus+ program (Grant Agreement 2023\#009).}
    \author{Saket Saurabh}{The Institute of Mathematical Sciences, India \and University of Bergen, Norway}{saket@imsc.res.in}{}{Supported by the European Research Council (ERC) under the European Union’s Horizon 2020 research and innovation programme (grant agreement No. 819416).}
    \authorrunning{M. Bentert, F. V. Fomin, F. Hauser, and S. Saurabh} 
    \Copyright{Matthias Bentert, Fedor V. Fomin, Fanny Hauser, and Saket Saurabh}
    
    \ccsdesc[500]{Theory of computation~Parameterized complexity and exact algorithms}
    \ccsdesc[500]{Theory of computation~Problems, reductions and completeness}
    \ccsdesc[300]{Mathematics of computing~Paths and connectivity problems}
    \ccsdesc[300]{Mathematics of computing~Graph algorithms}
    \ccsdesc[100]{Mathematics of computing~Graph coloring}
    \keywords{Fixed-parameter tractability, Polynomial Kernels, W[1]-hardness, XP, para-NP-Hardness} 
    
    \EventEditors{John Q. Open and Joan R. Access}
    \EventNoEds{2}
    \EventLongTitle{42nd Conference on Very Important Topics (CVIT 2016)}
    \EventShortTitle{CVIT 2016}
    \EventAcronym{CVIT}
    \EventYear{2016}
    \EventDate{December 24--27, 2016}
    \EventLocation{Little Whinging, United Kingdom}
    \EventLogo{}
    \SeriesVolume{42}
    \ArticleNo{23}
    
\else
    \documentclass{article}

    \usepackage[english]{babel}
    \usepackage[letterpaper,top=2cm,bottom=2cm,left=3cm,right=3cm,marginparwidth=1.75cm]{geometry}

    \usepackage[numbers]{natbib}
    \bibliographystyle{plainnat}
    \author{Matthias Bentert \and Fedor V. Fomin \and Fanny Hauser \and Saket Saurabh}
\fi

\usepackage{mathdots} 
\usepackage{amsthm}
\usepackage{amsmath}
\usepackage{amssymb}
\usepackage{dsfont}
\usepackage{graphicx}
\usepackage{nicefrac}
\usepackage{tikz}
\usetikzlibrary{calc,math,patterns,shapes,backgrounds,patterns.meta}
\usepackage{todonotes} 
\setuptodonotes{inline}
\usepackage{tabularx}
\usepackage[boxed,noline,noend,ruled,linesnumbered]{algorithm2e}
\usepackage{xintexpr}
\usepackage{boxedminipage}
\usepackage{framed}
\usepackage{tcolorbox}
\usepackage{etoolbox}
\usepackage{xifthen}
\usepackage{mathtools}

\title{The Parameterized Complexity Landscape of Two-Sets Cut-Uncut}
\date{}    

\newlength{\RoundedBoxWidth}
\newsavebox{\GrayRoundedBox}
\newenvironment{GrayBox}[1]%
   {\setlength{\RoundedBoxWidth}{.93\textwidth}
    \def\boxheading{#1}
    \begin{lrbox}{\GrayRoundedBox}
       \begin{minipage}{\RoundedBoxWidth}}%
   {   \end{minipage}
    \end{lrbox}
    \begin{center}
    \begin{tikzpicture}%
       \node(Text)[draw=black!20,fill=white,rounded corners,%
             inner sep=2ex,text width=\RoundedBoxWidth]%
             {\usebox{\GrayRoundedBox}};
        \coordinate(x) at (current bounding box.north west);
        \node [draw=white,rectangle,inner sep=3pt,anchor=north west,fill=white] 
        at ($(x)+(6pt,.75em)$) {\boxheading};
    \end{tikzpicture}
    \end{center}}
    
\newenvironment{defproblemx}[2][]{\noindent\ignorespaces%
                                \FrameSep=6pt%
                                \parindent=0pt%
                \vspace*{-1.5em}
                \ifthenelse{\isempty{#1}}{%
                  \begin{GrayBox}{\textsc{#2}}%
                }{%
                  \begin{GrayBox}{\textsc{#2}  parameterized by~{#1}}%
                }
                \begin{tabular*}{\textwidth}{@{\hspace{.1em}} >{\itshape} p{1.2cm} p{0.86\textwidth} @{}}%
            }{
                \end{tabular*}%
                \end{GrayBox}%
                \ignorespacesafterend
            }  

\newcommand{\defproblem}[3]{
  \begin{defproblemx}{#1}
    Input:  & #2 \\
    Question: & #3
  \end{defproblemx}
}%

\definecolor{pgreen}{HTML}{81B622}
\definecolor{pyellow}{rgb}{1.0, 0.88, 0.21}
\definecolor{porange}{rgb}{1.0, 0.66, 0.07}
\definecolor{pred}{HTML}{ff4d00}

\definecolor{fred}{HTML}{ba181b}
\definecolor{fblue}{HTML}{30638e}
\colorlet{fblue2}{fblue!15}
\colorlet{fred2}{fred!15}

\newcommand{\tworows}[2]{\begin{tabular}{c}{#1}\\{#2}\end{tabular}}
\newcommand{\distto}[1]{\tworows{Distance to}{#1}}
\tikzstyle{para}=[rectangle,draw=black,minimum height=.8cm,fill=gray!10,rounded corners=1mm, on grid]

\tikzset{inner sep=2.5pt}
\DeclareMathOperator{\cut}{cut}

\DeclareMathOperator{\poly}{poly}
\DeclareMathOperator{\NP}{NP}
\DeclareMathOperator{\coNP}{coNP}
\DeclareMathOperator{\join}{\Cup}
\newcommand{\cu}{\textsc{Two-Sets Cut-Uncut}}
\newcommand{\dcs}{\textsc{2-Disjoint Connected Subgraphs}}
\newcommand{\maxcu}{\textsc{Largest Bond}}
\newcommand{\ppoly}{\ensuremath{\NP \subseteq \coNP$\slash$\poly}}
\newcommand{\nppoly}{\ensuremath{\NP \not\subseteq \coNP$\slash$\poly}}
\newcommand{\Nr}[1]{\ensuremath{N^r_{#1}}}
\newcommand{\Nb}[1]{\ensuremath{N^b_{#1}}}

\usepackage{etoolbox}

\newcommand{\appmark}{$\star$}
\newcommand{\appendixText}{}
\newcommand{\toappendix}[1]{%
    \ifconf
        \gappto{\appendixText}{
            #1
        }
    \else #1 \fi
}

\newcommand{\appendixproof}[3]{%
    \ifconf
        \gappto{\appendixText}{
            \paragraph*{Proof of \cref{#1}}
                #2
                #3
            }
    \else #3 \fi
}

\ifconf
    
\else
    \usepackage{thm-restate}
    \usepackage[colorlinks=true, allcolors=blue]{hyperref}
    \usepackage[capitalize,noabbrev]{cleveref}
    
    \newtheorem{theorem}{Theorem}
    \newtheorem{definition}{Definition}

    \newtheorem{observation}{Observation}
    \newtheorem{corollary}{Corollary}
\fi

\begin{document}

\maketitle

\begin{abstract}
    In \cu, we are given an undirected graph~$G=(V,E)$ and two terminal sets~$S$ and~$T$.
    The task is to find a minimum cut $C$ in~$G$ (if there is any) separating~$S$ from~$T$ under the following ``uncut'' condition.
    In the graph~$(V,E\setminus C)$, the terminals in each terminal set remain in the same connected component.
    In spite of the superficial similarity to the classic problem \textsc{Minimum~$s$-$t$-Cut}, \cu{} is computationally challenging. In particular, even deciding whether such a cut of \emph{any size} exists, is already NP-complete.
    We initiate a systematic study of \cu{} within the context of parameterized complexity.
    By leveraging known relations between many well-studied graph parameters, we characterize the structural properties of input graphs that allow for polynomial kernels, fixed-parameter tractability (FPT), and slicewise polynomial algorithms (XP).
    Our main contribution is the near-complete establishment of the complexity of these algorithmic properties within the described hierarchy of graph parameters.
    
    On a technical level, our main results are fixed-parameter tractability for the (vertex-deletion) distance to cographs and an OR-cross composition excluding polynomial kernels for the vertex cover number of the input graph (under the standard complexity assumption \nppoly).
\end{abstract}

\clearpage

\section{Introduction}
We study \cu, a natural optimization variant of \dcs.
In \dcs, we are given an undirected graph~$G$ and two disjoint sets~$S,T$ of vertices.
The question is whether there are two disjoint sets~$R,B$ of vertices such that~$S \subseteq R$, $T \subseteq B$, and both~$G[R]$ and~$G[B]$ (the graphs induced by~$R$ and~$B$, respectively) are connected.
\dcs{} is the special case of \textsc{Disjoint Connected Subgraphs} with two sets of terminals (a problem that played a crucial role in the graph-minors project by Robertson and Seymour~\cite{RS95}).
Consequently, \mbox{\dcs} has received considerable research attention, particularly from the graph-algorithms and the computational-geometry communities~\cite{CPPW14,GKLS12,HPW09,KMPSL22,PR11,TV13}.
In \cu, we not only want to decide whether there are disjoint connected sets containing terminal sets~$S$ and~$T$, respectively, but also minimize the size of the corresponding cut (if it exists).
Formally, \cu{} is defined as follows.
Therein, an~$S$-$T$-cut~$(R,B)$ is a partition of the set of vertices into~$R$ and~$B$ with~$S \subseteq R$ and~$T \subseteq B$.
The set~$\cut_G(R)$ contains all edges in~$G$ with exactly one endpoint in~$R$ (and the other in~$B$).

\defproblem{\cu}
{A connected undirected graph~$G=(V,E)$, two sets~$S,T \subseteq V$, and an integer~$\ell$.}
{Is there an~$S$-$T$-cut~$(R,B)$ of~$G$ with~$|\cut_G(R)| \leq \ell$ such that the vertices of~$S$ are in the same connected component of~$G[R]$ and the vertices of~$T$ are in the same connected component of~$G[B]$?}
We mention in passing that we assume the input graph to be connected as it becomes trivial if there are at least two connected components containing terminal vertices (vertices in~$S \cup T$) and if all terminals belong to one connected component, then we can discard all other connected components.
When~$G$ is connected, it is also easy to see that any optimal solution for \cu{} cuts the graph in exactly two connected components as any connected component not containing any terminal can be merged with any other connected component reducing the size of the cut in the process.
Finally, if~$S \cap T \neq \emptyset$, then the instance is a trivial no-instance.

\subparagraph*{Related work.}
\dcs{} was intensively studied and its complexity is quite well understood.
Gray et al.\;\cite{GKLS12} showed that \dcs{} is NP-hard on planar graphs and van 't Hoft et al.\;\cite{HPW09} showed NP-hardness even if~$|S| = 2$ and on~$P_5$-free split graphs.
Note that the problem becomes trivial if~$|S| = 1$.
Since split graphs are chordal, their results also show that \dcs{} is NP-hard on split graphs.
They also complemented the NP-hardness on~\mbox{$P_5$-free} graphs by providing a polynomial-time algorithm for~$P_4$-free graphs (also known as cographs).
Kern et al.\;\cite{KMPSL22} generalized this result by showing that for each graph~$H$, \dcs{} is polynomial-time solvable on~$H$-free graphs if and only if~$H$ is a subgraph of a~$P_4$ together with any number of isolated vertices (and otherwise NP-hard). 
Cygan et al.\;\cite{CPPW14} studied the parameterized complexity with respect to the number~$k = n - |S \cup T|$ of non-terminal vertices in the graph.
They showed that \dcs{} cannot be solved in~$O^*((2-\varepsilon)^k)$~time for any~$\varepsilon > 0$ unless the strong exponential time hypothesis fails.
Moreover, they showed that it does not admit a polynomial kernel for this parameter unless~\ppoly.
As \dcs{} is the special case of \cu{} where~$\ell = m$, all of the above hardness results 

The problem \cu{} was introduced by Bentert et al.\;\cite{BDFGK24} who showed that the problem is W[1]-hard parameterized by~$|T|$ even if~$|S|=1$ in general graphs but fixed-parameter tractable when parameterized by~$|S \cup T|$ in planar graphs.
They also showed fixed-parameter tractability on planar graphs, when parameterized by the minimum size of a set of faces in any planar embedding such that each terminal is incident to one of the faces in the set.
Moreover, \cu{} is a special case of \textsc{Mixed Multiway Cut-Uncut}.
In this problem, one is not restricted to two sets of terminals and one is given two integers~$k$ and~$\ell$ as input.
The question is whether one can delete at most~$k$ vertices and at most~$\ell$ edges to separate all terminals in different sets while maintaining connectivity within each terminal set.
Rai et al.~\cite{RRS16} showed that this problem is fixed-parameter tractable when parameterized by~$k+\ell$ which immediately implies that \cu{} is fixed-parameter tractable when parameterized by~$\ell$.

Another related problem is called \maxcu.
Here, we are looking for a largest cut that cuts a connected graph into exactly two connected components (and no terminal sets are given).
Duarte et al.\;\cite{D+21} showed that the problem is NP-hard on bipartite split graphs, fixed-parameter tractable when parameterized by treewidth, does not admit a polynomial kernel when parameterized by the solution size, can be solved in~$f(k) n^{O(k)}$ time, where~$k$ is the clique-width of the input graph, but not in~$f(k) n^{o(k)}$ time unless the exponential time hypothesis fails.
In particular, this also excludes fixed-parameter tractability.

Last but not least, \cu{} is closely related to \textsc{Network Diversion}, which has
been studied extensively by the operations-research and networks communities~\mbox{\cite{CWN13,Cur01,Erk02,Kal15,LCP19}}.
In this problem, we are given an undirected graph~$G$, two terminal vertices~$s$ and~$t$, an edge~$b = \{u,v\}$, and an integer~$\ell$.
The question is whether it is possible to delete at most~$\ell$~edges such that the edge~$b$ will become a bridge with~$s$ on one side and~$t$ on the other.
Equivalently, is there a minimal~$s$-$t$-cut of size at most~$\ell + 1$ containing~$b$.
While this problem seems very similar to the classic \textsc{Minimum $s$-$t$-Cut}, the complexity status of this problem (polynomial-time solvable or NP-hard) is widely open.
This problem is a special case of \cu{} where~$|S| = |T| = 2$ as there are only two cases.
Either~$s$ is in the same component as~$u$ or~$s$ is in the same component as~$v$.
These two cases correspond to instances of \cu{} with~$S = \{s, u\}$ and~$T = \{t, v\}$ and~$S=\{s,v\}$ and~$T = \{t,u\}$, respectively.

\subparagraph{Our contribution.}
We provide an almost complete tetrachotomy for \cu{} distinguishing between parameters that allow for polynomial kernels, fixed-parameter tractability, or slicewise polynomial (XP-time) algorithms.
Our results are summarized in \cref{fig:results}.
The rest of this work is organized as follows.
In \cref{sec:prelim}, we introduce concepts and notation used throughout the paper.
In \cref{sec:fpt}, we present fixed-parameter tractable and slicewise polynomial (XP-time) algorithms.
In \cref{sec:hard}, we exclude the possibility for further fixed-parameter tractable or XP-time algorithms by presenting W[1]-hardness and para-NP-hardness results, respectively.
\cref{sec:kernel} is devoted to both positive and negative results regarding the existence of polynomial kernels and we conclude with \cref{sec:concl}.

\begin{figure}[t!]
\centering
\begin{tikzpicture}[node distance=1.75*0.45cm and 3*0.45cm, every node/.style={scale=0.6}]
\node[para,fill=pyellow] (vc) {Minimum Vertex Cover};
\node[para, xshift=3.4cm,fill=pgreen] (ml) [right=of vc] {Max Leaf \#};
\node[para, xshift=-3.5cm,fill=pyellow] (dc) [left=of vc] {Distance to Clique};
\node[para,xshift=7.8cm,fill=pyellow] (ss) [right=of vc] {Solution Size};
\node[para,xshift=-7cm,pattern=north east lines, pattern color=pred] (terminals) [left=of vc] {\# of Terminals};
\node[para, xshift=-10mm,fill=porange] (mcc) [below left=of dc] {\tworows{Minimum}{Clique Cover}}
edge (dc);
\node[para,fill=pyellow] (dcc) [below= of dc] {\distto{Co-Cluster}}
edge (dc)
edge (vc);
\node[para,fill=pyellow] (dcl) [below left= of vc] {\distto{Cluster}}
edge (dc)
edge (vc);
\node[para, xshift=5mm,fill=pyellow] (ddp) [below=of vc] {\distto{disjoint Paths}}
edge (vc)
edge (ml);
\node[para,fill=pgreen] (fes) [below =of ml] {\tworows{Feedback}{Edge Set}}
edge (ml);
\node[para,fill=pyellow] (bw) [below right=of ml] {Bandwidth}
edge (ml);

\node[para] (is) [below=of mcc,fill=porange] {\tworows{Maximum}{Independent Set}}
edge (mcc);
\node[para,fill=pyellow] (dcg) [below= of dcc] {\distto{Cograph}}
edge (dcc)
edge (dcl);
\node[para] (dig) [below= of dcl] {\distto{Interval}}
edge (dcl)
edge (ddp);
\node[para,fill=pyellow] (fvs) [below= of ddp] {\tworows{Feedback}{Vertex Set}}
edge (ddp)
edge (fes);
\node[para, xshift=10mm, yshift=1mm,fill=pyellow] (td) [right=of fvs] {Treedepth}
edge [bend right=20] (vc);
\node[para,fill=pred] (mxd) [below= of bw] {\tworows{Maximum}{Degree}}
edge (bw);
\node[para, xshift=1mm,fill=pred] (bsw) [below right= of bw] {\tworows{Bisection}{Width}}
edge (bw);

\node[para,fill=pred] (ds) [below=of is] {\tworows{Minimum}{Dominating Set}}
edge (is);
\node[para, xshift= 15mm, yshift=-35mm,fill=pred] (dch) [below= of dcg] {\distto{Chordal}}
edge (dig)
edge (fvs);
\node[para, yshift=-71mm, xshift=-10mm,fill=pred] (dbp) [below left= of fvs] {\distto{Bipartite}}
edge (fvs);
\node[para, xshift=10mm, yshift=-5mm,fill=pyellow] (dop) [below= of fvs] {\distto{Outerplanar}}
edge (fvs);
\node[para, yshift=3mm,fill=pyellow] (pw) [below= of td] {Pathwidth}
edge (ddp)
edge (td)
edge (bw);
\node[para,fill=pred] (hid) [below= of mxd] {$h$-index}
edge [bend right=15] (ddp)
edge (mxd);
\node[para, yshift=-5mm,fill=pred] (gen) [left= of hid] {Genus}
edge (fes);

\node[para,fill=pred] (mxdia) [below=of ds] {\tworows{Max Diameter}{of Components}}
edge (ds)
edge[bend right=5] (td)
edge[bend right=25] (dcg);
\node[para, yshift=-110mm, xshift=-10mm,fill=pred] (dpf) [below= of dcc] {\distto{Perfect}}
edge (dch)
edge (dcg)
edge (dbp);
\node[para, yshift=-2mm,fill=pyellow] (tw) [below= of pw] {Treewidth}
edge (dop)
edge (pw);
\node[para, xshift=5mm, yshift=-15mm,pattern=north east lines, pattern color=porange] (clw) [below left= of tw] {Clique-width}
edge (dcg)
edge (tw);
\node[para, yshift=-15mm,fill=pred] (acn) [below = of gen] {\tworows{Acyclic}{Chromatic \#}}
edge (hid)
edge (gen)
edge (tw);
xtxtxt\node[para, yshift=-2.5mm,fill=pred] (dpl) [below = of dop] {\distto{Planar}}
edge (dop)
edge (acn);
\node[para,fill=pred] (avgdist) [below=of mxdia] {\tworows{Average}{Distance}}
edge (mxdia);
\node[para,fill=pred] (deg) [below= of acn] {Degeneracy}
edge (acn);
\node[para, yshift=-12mm, xshift=-25mm,fill=pred] (box) [below left=of acn] {Boxicity}
edge (dig)
edge (acn);
\node[para,fill=pred] (cn) [below =of dbp] {Chromatic \#}
edge (deg)
edge (dbp);
\node[para, yshift= -13mm, xshift= 30mm,fill=pred] (cho) [right= of cn] {Chordality}
edge (box)
edge (dch)
edge (cn)
edge (dcg);
\node[para,fill=pred] (avd) [below=of deg] {\tworows{Average}{Degree}}
edge (deg);

\node[para,fill=pred] (mnd) [below=of avd] {\tworows{Minimum}{Degree}}
edge (avd);

\node[para, yshift=-97mm, xshift=-15mm,fill=pred] (con) [below=of bsw] {\distto{Disconnected}}
edge (mnd)
edge (bsw);
\node[para, xshift=9mm,fill=pred] (don) [right=of cho] {Domatic \#}
edge (mnd);
\node[para,fill=pred] (mc) [below=of cn] {\tworows{Maximum}{Clique}}
edge (cn);
\node[para,fill=pred] (gi) [below=of avgdist] {Girth}
edge (avgdist);
\end{tikzpicture}
\caption{
    Overview of our results.
    An edge between two parameters~$\alpha$ and~$\beta$, where~$\alpha$ is above~$\beta$, indicates that in any instance, the value of~$\beta$ is upper-bounded by a function only depending on the value of~$\alpha$.
    Any hardness result for~$\alpha$ immediately implies the same hardness result for~$\beta$ and any positive result for~$\beta$ immediately implies the same positive result for~$\alpha$ (where we additionally require that the dependency is polynomial if we show or exclude a polynomial kernel).
    Green boxes indicate the existence of polynomial kernels, yellow boxes show that the parameter admits fixed-parameter tractability but no polynomial kernel, an orange box indicates polynomial-time algorithms for constant parameter values (XP) but no fixed-parameter tractability, and a red box shows that  the parameter is NP-hard for some constant parameter value.
    We mention that the status of \cu{} parameterized by distance to interval graphs, number of terminals (XP/para-NP-hard), and clique-width (fixed-parameter tractable/W[1]-hard) remain open.
}
\label{fig:results}
\end{figure}
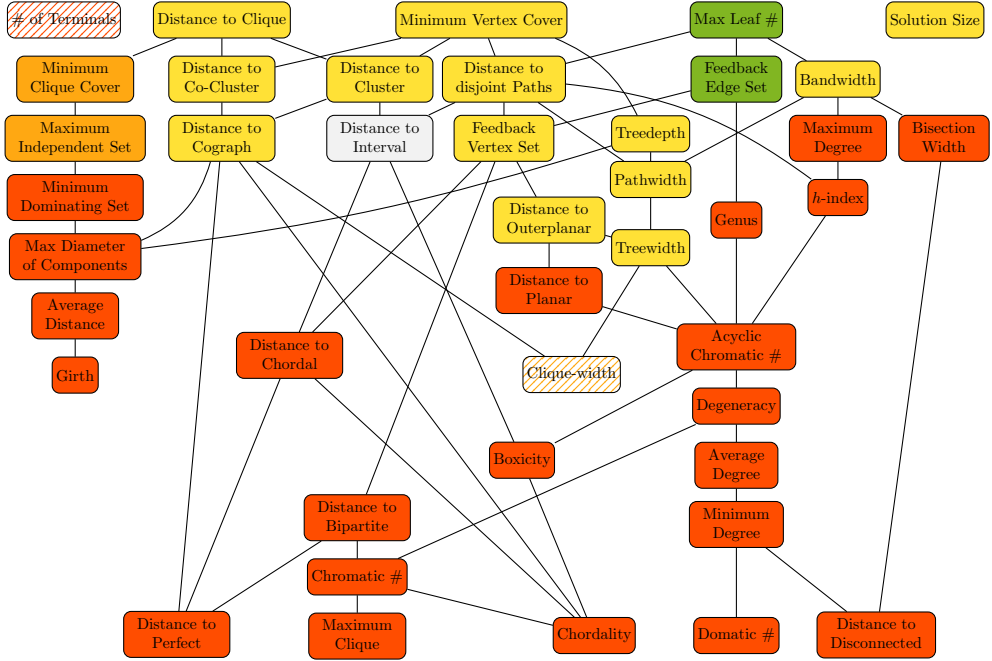

\section{Preliminaries}
\label{sec:prelim}

For a positive integer~$n$, let~${[n] = \{1, 2, \dots, n\}}$.
We use standard graph-theoretic terminology and all graphs in this work are undirected.
In particular, for an undirected graph~${G = (V, E)}$ we set~$n = |V|$ and~$m = |E|$.
For a subset~$V' \subseteq V$ of the vertices, we use~$G[V']$ to denote the subgraph of~$G$ induced by~$V'$ and denote by~$G - V'$ the subgraph~$G[V \setminus V']$. 
Moreover, for an edge set~$E' \subseteq E$, we denote by~$G - E' = (V, E \setminus E')$ the graph resulting from deleting the edges in~$E'$ from~$G$.
The \emph{degree}~$\deg_G(v)$ of $v$ is the number of vertices adjacent to~$v$ in~$G$.
A \emph{path} $P = (v_1,v_2,\ldots,v_{\ell})$ \emph{on~$\ell$~vertices} is a graph with vertex 
set~$\{v_1, v_2, \ldots, v_{\ell}\}$ and edge 
set~${\{ \{v_i, v_{i+1}\} \mid i \in [\ell-1]\}}$.
The vertices~$v_1$ and~$v_{\ell}$ are called \emph{endpoints}.
The \emph{length} of a path is its number of edges.
A \emph{connected component} in a graph is a maximal set~$V'$ of vertices such that for each pair~$u,v \in V'$, there is a path in the graph with endpoints~$u$ and~$v$.
A \emph{cut} in a graph is the set of edges between any partition of the vertices of a graph into two disjoint subsets.
A \emph{separation} in a graph~$G=(V,E)$ is a pair~$(X,Y)$ of sets of vertices with~$X \cup Y = V$ and no edges between~$X \setminus Y$ and~$Y \setminus X$.
The size of the separation is~$X \cap Y$.
The \emph{disjoint union} of two graphs~$G_1=(V_1,E_1)$ and~$G_2=(V_2,E_2)$ results in the graph~$(V_1 \cup V_2,E_1\cup E_2)$.
The \emph{join} of~$G_1$ and~$G_2$ results in the graph~$(V_1 \cup V_2,E_1\cup E_2 \cup \{\{u,v\}\mid u\in V_1 \land v \in V_2\})$, that is, we first take the disjoint union and then add all possible edges between the two graphs.
\ifconf Due to space constraints, we defer the definitions of the different graph parameters used in this work to the appendix.
We just mention that the \emph{distance to~$\Pi$} (for some graph class~$\Pi$) is the size of minimum vertex set whose removal results in a graph in~$\Pi$.
\fi
We refer to the Bachelor's thesis of Schröder \cite{Sch19} for an overview over how the different parameters are related to one another.

To streamline some of our arguments, we use the following natural reinterpretation of \cu.
The task is to color each vertex in the graph red or blue such that all vertices in~$S$ are red, all vertices in~$T$ are blue, the graphs induced by the set of all red vertices (and all blue vertices, respectively) are connected, and there are at most~$\ell$ edges with a red and a blue endpoint.
We call such edges multicolored.
We often keep sets~$R$ and~$B$ of red and blue vertices, respectively, and we use the notation~$\Nr{X}(v)$ and~$\Nb{X}(v)$ to denote all red and blue neighbors of~$v$ in a set~$X$ of vertices, respectively.

\subparagraph{Parameterized complexity.}
A \emph{parameterized problem} is a set of instances~$(I, k)$ where~$I \in \Sigma^*$ is a problem instance from some finite alphabet~$\Sigma$ and the integer~$k$ is the \emph{parameter}.
A parameterized problem~$L$ is \emph{fixed-parameter tractable} if $(I, k) \in L$ can be decided in~${f(k) \cdot |I|^{O(1)}}$~time, where~$f$ is a computable function only depending on~$k$.
We call~$(I, k)$ a \emph{yes-instance} (of~$L$) if~$(I, k) \in L$.
The class XP contains all parameterized problems which can be decided in polynomial time if the parameter~$k$ is constant, that is, in $f(k)\cdot |I|^{g(k)}$ time for computable functions~$f$ and~$g$.
It follows from the definition that each fixed-parameter tractable problem is contained in XP.
To show that a problem is not contained in XP, one can show that the problem remains NP-hard for some constant parameter value.
To show that a parameterized problem $L$ is presumably not fixed-parameter tractable, one may use a \emph{parameterized reduction} from a W[1]-hard problem to~$L$~\cite{DF13}.
A parameterized reduction from a parameterized problem $L$ to another parameterized problem~$L'$ is an algorithm that, given an instance~$(I, k)$ of~$L$, computes an instance~$(I', k')$ of $L'$ in~$f(k) \cdot |I|^{O(1)}$~time such that $(I, k)$ is a yes-instance if and only if $(I', k')$ is a yes-instance and~$k' \leq g(k)$ for two computable functions~$f$ and~$g$.
A \emph{kernelization} is an algorithm that, given an instance~$(I, k)$ of a parameterized problem~$L$, computes in~$|I|^{O(1)}$ time an instance~$(I', k')$ of~$L$ (the \emph{kernel}) such that~$(I, k)$ is a yes-instance if and only if~$(I', k')$ is a yes-instance and~$|I'|+k' \le f(k)$ for some computable function~$f$ only depending on~$k$.
We say that~$f$ measures the \emph{size} of the kernel.
If~$f$ is a polynomial, then we say that~$P$ admits a \emph{polynomial kernel}.
A problem is fixed-parameter tractable if and only if it admits a kernel of any size.
Assuming \nppoly, one can show that certain parameterized problems do not admit a polynomial kernel.
This can for example be done via OR-cross-compositions.
For the definition of OR-cross-compositions, we first need the following.
Given an NP-hard problem~$L$, an equivalence relation~$\mathcal R$ on the instances of~$L$ is a \emph{polynomial equivalence relation} if one can decide for any two instances in polynomial time whether they belong to the same equivalence class, and for any finite set~$S$ of instances, $\mathcal R$ partitions the set into at most~$(\max_{I \in S} |I|)^{O(1)}$ equivalence classes.

\begin{definition}[OR-cross-composition \cite{BJK14}]
	\label{def:or-cross-composition}
	Given an NP-hard problem~$Q$, a para\-meterized problem~$L$, and a polynomial equivalence relation~$\mathcal R$ on the instances of~$Q$,
	an OR-cross-composition of~$Q$ into~$L$ (with respect to~$\mathcal R$) is an algorithm that takes~$t$ instances~$I_1, I_2, \dots, I_t$ of~$Q$ belonging to the same equivalence class of~$\mathcal{R}$ and constructs in time polynomial in~$\sum_{i=1}^t |I_i|$ an instance~$(I, k)$ of~$L$ such that~$k$~is polynomially upper-bounded by~$\max_{i \in [t]} |I_i| + \log(t)$ and~$(I, k)$ is a yes-instance of~$L$ if and only if there exists an~$i \in [t]$ such that~$I_i$ is a yes-instance of~$Q$.
\end{definition}
If a parameterized problem admits an OR-cross-composition, then it does not admit a polynomial kernel unless \ppoly~\cite{BJK14}.

\toappendix{
\subparagraph*{Graph parameters and classes.}
We give an overview of the different graph parameters and graph classes used throughout the paper.
To this end, let~$G=(V,E)$ be a graph.
The \emph{maximum degree} of~$G$ is the largest degree of any vertex in~$V$.
The \emph{distance to~$\Pi$} for some graph class~$\Pi$ is the minimum number of vertices needed to be removed from~$G$ such that it becomes a graph in~$\Pi$.
A \emph{cograph} is a graph without induced paths of length three.
Equivalently, a cograph is a graph that can be represented by a cotree.
A cotree is a rooted binary tree in which the leaves correspond to the vertices in the cograph and the internal nodes correspond to taking the disjoint union or the join of the cographs corresponding to the two children.
A join of two graphs~$G_1=(V_1,E_1)$ and~$G_2=(V_2,E_2)$ results in the graph~$(V_1 \cup V_2,E_1\cup E_2 \cup \{\{u,v\}\mid u\in V_1 \land v \in V_2\})$, that is, we first take the disjoint union and then add all possible edges between the two graphs.
An \emph{interval graph} is a graph where each vertex can be represented by an interval of real numbers such that two vertices are adjacent if and only if their respective intervals overlap.
An \emph{independent set} in a graph is a set of pairwise non-adjacent vertices.
The vertex set of a \emph{bipartite graph} can be partitioned into two independent sets.
The \emph{vertex cover number} of~$G$ is the distance to an independent set.
A \emph{clique} in a graph is a set of pairwise adjacent vertices.
The \emph{minimum clique cover} of~$G$ is the minimum number of cliques needed to partition~$V$.
A \emph{dominating set} in a graph is a set of vertices such that each vertex not contained in the set has at least one neighbor in the set.
A \emph{tree decomposition} of~$G$ is a tree~$T$ with nodes~$X_1,X_2,\ldots,X_p$, where each~$X_i$ is a subset of~$V$ such that each vertex in~$V$ and both endpoints of each edge in~$E$ are contained in some~$X_i$ and all~$X_i$ that contain some vertex~$v$ form one connected component in~$T$.
The width of a tree decomposition is the size of its largest set~$X_i$ minus one and the \emph{treewidth} of~$G$ is the minimum width of any tree decomposition of~$G$.
The \emph{clique-width} of~$G$ is defined as the minimum number of labels needed to construct~$G$ using the following operations in which~$i$ and~$j$ are some arbitrary labels:
Creating a single vertex with label~$i$, the disjoint union of two graphs, adding all possible edges between the vertices of label~$i$ and the vertices of label~$j$, and changing the label of all vertices of label~$i$ to label~$j$.
The \emph{feedback edge number} of~$G$ is the minimum size of a set~$F$ of edges such that~$G-F$ is a forest.
Given an injective function~$f$ that maps the vertices in~$V$ to distinct integers, the bandwidth cost of~$f$ for~$G$ is defined as~$\max_{\{u,v\}\in E}|f(u) - f(v)|$.
The \emph{bandwidth} of~$G$ is the minimum bandwidth cost for~$G$ over all possible injective functions.
The \emph{bisection width} of~$G$ is the minimum size of a set~$F$ of edges such that the vertices of~$G-F$ can be partitioned into two parts of equal size (or with a difference of one in case~$n$ is odd) with no edges between the two parts.
}

\section{Parameterized Algorithms}
In this section, we present some parameterized algorithms for \cu.
\ifconf Due to space constraints, proof details of most proofs are deferred to an appendix.
Affected results are marked with a \appmark. \fi
We start with the distance to cographs. 
\label{sec:fpt}
\begin{theorem}
    \cu{} parameterized by the distance~$k$ to cographs can be solved in~$k^{O(k)} n^3$~time.
\end{theorem}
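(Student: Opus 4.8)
The plan is to exploit the structure of cographs heavily. Let $D$ be a minimum set of $k$ vertices whose removal leaves a cograph $H = G - D$; such a set can be found in $2^{O(k)} n^{O(1)}$ time (cograph deletion is FPT). The key leverage is that once we \emph{guess the color of every vertex in $D$} (there are at most $2^k$ choices), the remaining task is to extend the coloring to $H$. I would branch over all $2^k$ colorings of $D$ that are consistent with $S$ and $T$, and for each one solve the residual extension problem on the cograph $H$ where each vertex of $H$ already ``sees'' a fixed red/blue boundary coming from $D$.

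\textbf{Residual problem on the cograph.} After fixing the coloring of $D$, the connectivity requirement is the delicate part: the red vertices must form a connected subgraph and likewise the blue ones, but this connectivity is now allowed to route \emph{through} $D$. The natural approach is dynamic programming over the cotree of $H$. At each cotree node, which corresponds to an induced subcograph $H_v$, I would maintain a table indexed by how the partial coloring of $H_v$ interacts with $D$: specifically, for each of the (at most $k$) vertices of $D$, whether $H_v$ contains a red (resp.\ blue) neighbor of it, together with the partition of the red part of $H_v$ into connected ``pieces'' and likewise for blue. The crucial observation is that in a cograph, connectivity is governed entirely by join operations: at a join node, every red vertex on one side becomes adjacent to every red vertex on the other, so all red pieces on both sides (and all blue pieces) collapse into single components. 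This means the number of distinct connectivity states one must track is controlled by how the not-yet-merged pieces attach to $D$, giving a table of size $k^{O(k)}$ per cotree node.

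\textbf{Counting multicolored edges.} Alongside the connectivity bookkeeping, the DP must track the number of multicolored edges. Within $H$ this is straightforward to accumulate at each cotree operation (a join between a set with $r_1$ red, $b_1$ blue vertices and a set with $r_2$ red, $b_2$ blue vertices contributes $r_1 b_2 + r_2 b_1$ multicolored edges), and the multicolored edges incident to $D$ are determined once the coloring of $D$ and the red/blue counts and adjacencies of each cotree subgraph are fixed. So the DP state stores the minimum number of multicolored edges achievable for each connectivity-and-$D$-interaction profile, and at the root we check whether some profile consistent with global connectivity achieves at most $\ell$.

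\textbf{The main obstacle} I anticipate is formalizing the connectivity state at join and union nodes so that the table stays bounded by $k^{O(k)}$ rather than blowing up with the number of vertices. The subtlety is that a red piece in $H_v$ might currently be disconnected from the rest of the red solution but will be connected later \emph{only} via a path through $D$; hence a piece is ``safe to forget'' precisely when its eventual connection is certified through the $D$-side. The right way to capture this is to record, for each red piece (and blue piece), the subset of $D$-vertices it is adjacent to: two pieces that attach to a common red $D$-vertex are effectively connected through $D$, so one only needs to track pieces grouped by their $D$-neighborhood, and there are at most $2^k$ such groups per color. Once this equivalence is set up correctly, a union node concatenates the piece-families and a join node merges all same-color pieces into one; proving that this state is both sufficient (it determines feasibility) and necessary (it can be updated locally) is the technical heart, and the $k^{O(k)} n^3$ running time follows from $2^k$ colorings of $D$, times $k^{O(k)}$ states, times polynomial work per cotree node.
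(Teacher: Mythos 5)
Your proposal follows essentially the same route as the paper's proof: compute a cograph modulator $X$ in FPT time, branch over the $2^{k}$ colorings of $X$, and run a dynamic program over the cotree of $G-X$ whose states record the red-vertex count together with the partition of the red (resp.\ blue) modulator vertices induced by the connected components of the partial coloring, merging all same-colored pieces at join nodes and uniting overlapping attachment sets at disjoint-union nodes. The paper's only additional wrinkles are cosmetic ones you omit (adding one terminal of $S$ and one of $T$ to $X$ so that the root check reduces to a single partition, and contracting monochromatic components of $X$ into weighted vertices), so your sketch matches the paper's argument.
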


\begin{proof}
Let $(G = (V,E),S,T, \ell)$ be an instance of \cu{}. We transform~$G$ into a weighted graph by assigning a weight of one to each edge in~$G$.
We denote the weight of an edge~$e$ by~$w(e)$.
We first compute in~$O(3.303^k(m + n))$~time a set~$X \subseteq V$ of size at most~$k$ such that~$G'=(V',E') = G-X$ is a cograph \cite{NG12}.
A cotree~$\mathcal{T}$ of~$G'$ is a rooted binary tree where each vertex of~$G'$ corresponds to a leaf node of~$\mathcal{T}$ and an inner node~$t$ of~$\mathcal{T}$ either represents taking the disjoint union or the join of the cographs corresponding to the two children of $t$.
Each cograph has a cotree which can be computed in linear time.
For each node $t$ of $\mathcal{T}$, let~$\mathcal{T}_t$ be the subtree of~$\mathcal{T}$ rooted in~$t$, let~$G_t = (V_t,E_t)$ be the graph represented by~$\mathcal{T}_t$, and let~$n_t = |V_t|$.
For technical reasons, we want to assume that~$X$ contains at least one red and one blue vertex in an optimal solution.
Hence, if~$X$ does not contain a terminal from~$S$ already, then we add an arbitrary vertex from~$S$ to~$X$.
We do the same for~$T$.

We begin by guessing\footnote{Whenever we pretend to guess something, we actually iterate over all possibilities and consider for the presentation/proof an iteration leading to an optimal solution.} the coloring of $X$.
Then, we remove all edges between a red and a blue vertex in $X$. Let $\ell'$ be the number of edges that we removed.
We then contract each component of~$X$ into a single vertex. If the resulting (multi)graph has $j$ edges with the same endpoints (with one endpoint in $X$ and one in $V'$), then we remove all but one of the edges and set its weight to $j$.

We compute a coloring for the remaining vertices of $G$ by computing optimal solutions for partial instances for each node of $\mathcal{T}$ via dynamic programming.
A partial instance is a tuple $(t, r, C_r, C_b)$ where $t$ is a node of $\mathcal{T}$, $r$ is an integer at most~$n_t$ and~$C_r, C_b$ are partitions of subsets of $X$ including $\emptyset$.
For each partial instance, we want to store in a table~$D$ the minimum number of edges between blue and red vertices in $G[V_t \cup X]$ after all vertices in~$V_t$ are colored and exactly~$r$~vertices are colored red.
Additionally, we require that for each set~$C \in C_r$, there exists a connected component with vertex set~$Z$ in~$G[(V_t \cup X) \cap R]$ such that~$Z \cap X \cap R = C$.
Analogously for each set $C \in C_b$, there has to exist a connected component with vertex set $Z$ in $G[(V_t \cup X) \cap B]$ such that $Z \cap X \cap B = C$.
We will use these sets to store whether the red vertices are connected in $G[V_t \cup X]$, while also storing which vertices of $X\cap R$ are connected to the same connected component in~$G_t[R]$.
We use this information to ensure that in a coloring for the entire graph~$G$, the graph $G[R]$ is connected.
We do the same for the blue vertices with the set~$C_b$.

In the end, the optimal solution (minus the~$\ell'$ edges we already removed between vertices in~$X$) will be stored in~$D[w,r,C_r,C_b]$ for some value of~$r$, where~$w$ is the root of~$\mathcal{T}$ and~$C_r = \emptyset$ if~$r=0$ and~$C_r=\{X\cap R\}$, otherwise.
Similarly,~$C_b = \emptyset$ if~$r=n_w$ and~$C_b = \{X \cap B\}$, otherwise.
Note that this corresponds to a solution where all vertices of either color form a single connected component as all vertices are connected to all vertices of the same color in~$X$ (which is at least one vertex as constructed above).

Before we present the algorithm, we first define two operations on sets of sets.
For two sets~$A,B$ of sets of vertices, if there exists~$a \in A$ and~$b \in B$ with $a\cap b \neq \emptyset$, then we recursively define~$A \uplus B$ as~$((A\setminus a) \uplus (B\setminus b))\uplus \{a \cup b\}$ and as~$A \cup B$ otherwise. 
This operation can be seen as taking the union of the connected components of two subgraphs.
If there are two connected components (one in~$A$ and one in~$B$) which share a vertex, then both components are merged into one.
The components that do not share a vertex with any other component remain as they are. 
We also define~$A \join B$ as $A$ if~$B = \emptyset$, as~$B$ if~$A = \emptyset$ and as~$\{\bigcup_{X \in A \cup B} X\}$, otherwise.
With these definitions at hand, we compute the entries of~$D$ based on the type of node~$t$ as follows.

\subparagraph{Leaves.}

Let $t$ be a leaf node and let $a$ be the vertex of $G$ corresponding to $t$. We set 
\[
D[t,r,C_r, C_b] = 
\begin{cases}
\sum\limits_{v \in \Nb{X}(a)} w(\{a,v\}), &\text{ if } r=1, C_r = \{\Nr{X}(a)\}, C_b = \emptyset \text{ and } a \notin T\\
\sum\limits_{v \in \Nr{X}(a)} w(\{a,v\}), & \text{ if }  r=0,C_r= \emptyset, C_b = \{\Nb{X}(a)\} \text{ and } a \notin S\\
\infty,  & \text{ else.}
\end{cases}
\]

We show that the solutions for all partial instances are computed correctly.
It is easy to verify that whenever a table entry of~$D$ is set, then this corresponds to a valid solution for the partial instance.
So it remains to show that each optimal solution for any partial instance is considered.
Let~$(t,r,C_r,C_b)$ be a partial instance and consider an optimal solution for it.
Note that the partial instance only has a valid solution if~$r \in \{0,1\}$.
If~$r=1$, then the vertex~$a$ has to be colored red.
This is only possible if~$a \notin T$ since all vertices in~$T$ have to be colored blue.
The red vertices of~$X$ which are adjacent to vertices of~$G_t[R]$ are the vertices in~$\Nr{X}(a)$. If~$a$ is not adjacent to any red vertices of~$X$, then~$C_r = \{\emptyset\}$ or there is no solution to the partial instance.
Since~$G_t$ cannot have any blue vertices, $C_b$ has to be the empty set (and not the set containing the empty set).
Note that all edges between~$a$ and blue neighbors of~$a$ in~$X$ are multicolored and this is precisely what we computed above.
The argument for~$r=0$ (coloring~$a$ blue) is symmetric.

\subparagraph{Disjoint union.}
Let $t$ be a disjoint-union node with children $t_1$ and $t_2$.
We set
\[
D[t,r, C_r, C_b] = \smashoperator{ \min_{\substack{r_1 \\ C_r = C_{r_1} \uplus C_{r_2}\\ C_b = C_{b_1} \uplus C_{b_2}}}} (D[t_1, r_1, C_{r_1}, C_{b_1}] + D[t_2, r-r_1, C_{r_2}, C_{b_2}]).
\]

We again show that each optimal solution for any partial instance is considered.
To this end, we assume that the table entries for the children~$t_1$ and~$t_2$ are computed correctly.
Let~$(t,r,C_r,C_b)$ be a partial instance and consider an optimal solution for it.
Since the disjoint union of two graphs does not create any edges, it holds that all multicolored edges in the solution are contained in~$G[V_{t_1} \cup X]$ and~$G[V_{t_2} \cup X]$.
This gives a partitioning of the multicolored edges in the solution as there are no edges between vertices in~$X$.
Let~$r_1$ be the number of red vertices in~$V_{t_1}$ in the solution and consider the connected components in~$G[(V_{t_1} \cup X) \cap R]$.
Denote the respective partitioning by~$C_{r_1}$ and do the same for~$C_{b_1}, C_{r_2},$ and~$C_{b_2}$.
Note that~$C_r$ corresponds to the union of the connected components corresponding to~$C_{r_1}$ and~$C_{r_2}$ (and the same for~$C_b$).
Hence, it holds that~$C_r = C_{r_1} \uplus C_{r_2}$ and~$C_b = C_{b_1} \uplus C_{b_2}$.
It now also holds that the size of the considered solution is~$D[t_1,r_1,C_{r_1},C_{b_1}] + D[t_2,r-r_1,C_{r_2},C_{b_2}]$, which is precisely what we computed.

\subparagraph{Join.}
Let $t$ be a join node with children $t_1, t_2$.
We set (with~$r_2 = r-r_1$)
\begin{align*}
    D[t,r, C_r, C_b] = \smashoperator{\min_{\substack{r_1 \\ C_r = C_{r_1} \join C_{r_2}\\ C_b = C_{b_1} \join C_{b_2}}}} \{D[t_1, r_1, C_{r_1}, C_{b_1}] + D[t_2, r_2, C_{r_2}, C_{b_2}]  + r_1(n_2-r_2)+r_2(n_1-r_1)\}.
\end{align*}

We show a final time that each optimal solution for any partial instance is considered when the table entries for the children~$t_1$ and~$t_2$ are computed correctly.
Let~$(t,r,C_r,C_b)$ be a partial instance and consider an optimal solution for it.
Similar to the disjoint union, we can partition all multicolored edges of the solution.
In this case, we partition the multicolored edges into four parts, the edges in~$G[V_{t_1} \cup X]$,~$G[V_{t_2} \cup X]$, the newly created edges between red vertices in~$V_{t_1}$ and blue vertices in~$V_{t_2}$ and similarly between blue vertices in~$V_{t_1}$ and red vertices in~$V_{t_2}$.
Let~$r_1$ be the number of red vertices in~$V_{t_1}$ in the solution and again consider the connected components in~$G[(V_{t_1} \cup X) \cap R]$.
Denote the respective partitioning by~$C_{r_1}$ and do the same for~$C_{b_1}, C_{r_2},$ and~$C_{b_2}$.
If there is at least one red vertex in each of~$V_{t_1}$ and~$V_{t_2}$, then all red vertices of~$G_t$ will be connected in~$G_t[R]$. Thus the red vertices of~$X$ which where connected to any vertex of~$G_t[R]$ will be in the same set in~$C_r$.
If either~$V_{t_1}$ or $V_{t_2}$ do not contain any red vertices in the considered solution, then no edges between two red vertices are added and the connected components of $G_t[R]$ are the same as of $G_{t_1}[R] \cup G_{t_2}[R]$ (as at least one of the two sets is the empty set).
This is precisely what the $\join$ operator computes and the argument for the blue vertices is analogous. 
Hence, the optimal solution is~$D[t_1,r_1,C_{r_1},C_{b_1}] + D[t_2,r-r_1,C_{r_2},C_{b_2}] + r_1 (n_2 - (r-r_1)) + (r-r_1)(n_1 - r_1)$, which is what we compute in the dynamic program.

It remains to analyze the running time.
Note that both~$\uplus$ and~$\join$ can be computed in~$O(k^2)$~time.
The number of possible guesses for the coloring of~$X$ is at most~$2^{k}$.
The number of partial instances is at most~$O(n^2 (k+2)^{k+2})$ as there are at most~$n$ possibilities for~$t$ and~$r$ each and both~$C_r$ and~$C_b$ are partitions of subsets of $X$ which have size at most $k+2$ (as we added a terminal of~$S$ and of~$T$ to~$X$).
Computing each entry takes~$O((k+2)^{4(k+2)+2} \cdot n)$~time.
Thus, the overall runtime is in~$O(2^k \cdot (k+2)^{6k+14}\cdot n^3) \subseteq k^{O(k)} \cdot n^3$.
\end{proof}

We next show fixed-parameter tractability for the parameter treewidth.
We mention that the algorithm is a simple adaptation of a dynamic program for \maxcu{} \cite{D+21}.
In essence, each entry in the dynamic program over the tree decomposition (except for the leaves) are computed by combining the solutions for the children using a maximum over all combinations of solutions that color the vertices in the given bag in a certain way and ensure certain connectivity conditions.
By replacing the maximum with a minimum, we can solve a version of \cu{} without terminals (which is not a hard problem to solve).
However, we can also incorporate terminals by setting the value of all leaf nodes corresponding to a vertex~$v$ to infinity whenever~$v \in S$ and the given coloring of the bag colors~$v$ blue (or analogously~$v \in T$ and the coloring for~$v$ is red).
This yields the following.

\begin{observation}
    \cu{} is solvable in~$n k^{O(k)}$ time when parameterized by treewidth~$k$.
\end{observation}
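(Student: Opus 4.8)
The plan is to solve the equivalent two-coloring formulation by a standard connectivity dynamic program over a tree decomposition, essentially transplanting the \maxcu{} program of Duarte et al.~\cite{D+21} and replacing its maximization by a minimization. First I would invoke a constant-factor treewidth approximation to compute, in $2^{O(k)} n$ time, a nice tree decomposition of width $O(k)$ with introduce-edge nodes; since the eventual running time is already $k^{O(k)}$, a width that is merely linear in $k$ is good enough. I would also use the observation from the introduction that every optimal solution splits~$G$ into exactly two connected parts, so (for nonempty $S,T$; the degenerate cases are trivial) it suffices to compute the minimum number of multicolored edges over all colorings in which $G[R]$ and $G[B]$ are both connected, $S \subseteq R$, and $T \subseteq B$. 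This minimum equals the \cu{} optimum, which I then compare against~$\ell$.

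The dynamic-programming table would be indexed by a node~$t$ of the decomposition, a coloring $\chi \colon X_t \to \{R,B\}$ of its bag~$X_t$, and two partitions $\mathcal P_R, \mathcal P_B$ recording which bag vertices of each color currently lie in a common connected component of the partial graph~$G_t$ induced by the vertices below~$t$. The stored value is the minimum number of multicolored edges in~$G_t$ consistent with this data, maintaining the usual invariant that every connected component of $G_t[R]$ and of $G_t[B]$ contains at least one bag vertex. Terminals are trivial to enforce: at the node introducing a vertex $v \in S$ I would only permit the color~$R$ (and symmetrically only~$B$ for $v \in T$), setting the value to~$\infty$ for the forbidden color.

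The transitions would follow the textbook connectivity scheme. A leaf contributes the base case; an introduce-vertex node adds the vertex as a singleton block of the appropriate partition; an introduce-edge node merges the two blocks containing its endpoints if they share a color, and otherwise increments the multicolored-edge count by one; a forget node projects the partitions away from the forgotten vertex; and a join node combines two child states with identical bag colorings by taking the transitive-closure merge of their partitions and summing the two multicolored-edge counts. At the root, whose bag is empty, the answer is read off from the states certifying that $G[R]$ and $G[B]$ are each a single connected component.

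For the running time, a nice tree decomposition has $O(n)$ nodes of width $O(k)$, each bag admitting at most $2^k$ colorings and at most $B(k)^2 = k^{O(k)}$ pairs of partitions, for $k^{O(k)}$ states per node; every transition, including the partition merges at join nodes, costs $k^{O(k)}$, giving the claimed $n \cdot k^{O(k)}$ bound. The hard part will not be any single transition but the connectivity bookkeeping: guaranteeing that a color class may lose its last bag representative at a forget node only when it is one of the two components permitted to survive, so that the final coloring really has a single red and a single blue component. Concretely I would augment each state with two bits flagging whether a red (respectively blue) component has already been closed, forbidding a second closure of either color and requiring both flags to be set at the root. This is exactly the delicate point already resolved in the \maxcu{} program, and I would import its treatment essentially verbatim, merely swapping the optimization direction and attaching the terminal constraints described above.
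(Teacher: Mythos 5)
Your proposal is correct and takes essentially the same route as the paper: the paper likewise adapts the \maxcu{} dynamic program of Duarte et al.~\cite{D+21} over a tree decomposition, replacing the maximization by a minimization and handling terminals by assigning value~$\infty$ to states that color a vertex of~$S$ blue or a vertex of~$T$ red. The additional details you spell out (nice decomposition, coloring-plus-partition states, and the closure bookkeeping ensuring one component per color) are precisely the standard connectivity machinery the paper imports wholesale from that reference.
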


We now turn towards XP-time algorithms, that is, polynomial-time algorithms for constant parameter values.
We show that \cu{} parameterized by the maximum size of an independent set is in XP.
The main idea is to first observe that the graphs induced by the vertices in~$S$ (and in~$T$, respectively) cannot have too many connected components as this would imply a large independent set in the input graph.
Next, these connected components cannot be too far apart from one another as any long induced path contains a large independent set.
Based on these two observations, it is enough to guess a small number of vertices to ensure connectivity between all vertices in~$S$ and in~$T$, respectively.
For each guess, we can then compute a minimum cut between the vertices we already colored red and blue to find an optimal solution.

\ifconf
\begin{restatable}[\appmark]{proposition}{indset}
    \label{prop:indset}
    \cu{} parameterized by the size~$k$ of a maximum independent set in the input graph can be solved in~$O(n^{4k^2})$~time.
\end{restatable}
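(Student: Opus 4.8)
The plan is to exploit two structural consequences of the assumption that every independent set of~$G$ has size at most~$k$. First, both~$G[S]$ and~$G[T]$ have at most~$k$ connected components: picking one vertex from each component of~$G[S]$ yields a set that is independent in~$G$, since any edge between two chosen vertices would lie inside~$S$ and hence merge their components; more than~$k$ components would thus contradict the bound. Second, every induced path of~$G$ has at most~$2k$ vertices, because taking every other vertex of an induced path on~$p$ vertices gives an independent set of size~$\lceil p/2\rceil$. In particular, a shortest path in~$G$ between two vertices is induced and therefore uses at most~$2k$ vertices.

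First I would argue that in any solution a small set of ``connector'' vertices suffices to witness the connectivity of each side. Fix an optimal solution~$(R^\ast,B^\ast)$. Since~$G[R^\ast]$ is connected and contains all~$c\le k$ components of~$G[S]$, I can greedily join these components by repeatedly adding the internal vertices of a shortest path in~$G[R^\ast]$ between two still-separated components; such a path is induced in~$G$, so it contributes at most~$2k-2$ internal vertices, and after at most~$k-1$ rounds all of~$S$ lies in one component. This produces a set~$P_S\subseteq R^\ast$ of at most~$(k-1)(2k-2)\le 2k^2$ vertices with~$G[S\cup P_S]$ connected, and symmetrically a set~$P_T\subseteq B^\ast$ of at most~$2k^2$ vertices with~$G[T\cup P_T]$ connected; as~$R^\ast\cap B^\ast=\emptyset$, these two sets are disjoint. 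The algorithm therefore iterates over all pairs of vertex sets~$P_S,P_T$ of size at most~$2k^2$ each, discarding any pair for which~$P_S\cap P_T\neq\emptyset$, for which~$S\cup P_S$ and~$T\cup P_T$ intersect, or for which~$G[S\cup P_S]$ or~$G[T\cup P_T]$ is disconnected. There are~$O(n^{2k^2})$ choices for each set, giving~$O(n^{4k^2})$ iterations.

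For a surviving guess I would set~$R_0=S\cup P_S$ and~$B_0=T\cup P_T$ and compute, by a single max-flow call, a minimum-cardinality edge set separating~$R_0$ from~$B_0$; the algorithm reports a yes-instance iff the best value found over all guesses is at most~$\ell$. The correctness hinges on showing that such a minimum cut is automatically a valid solution, and here I would reuse the merging observation from the introduction. In the partition~$(R,B)$ induced by a minimum~$R_0$-$B_0$ cut, any connected component of~$G[R]$ or~$G[B]$ disjoint from~$R_0\cup B_0$ has an entirely monochromatic neighborhood of the opposite color, so recoloring it strictly decreases the cut and contradicts minimality. Hence every component meets~$R_0$ or~$B_0$; since~$R_0$ and~$B_0$ are each connected, this forces exactly one red component (containing all of~$S$) and one blue component (containing all of~$T$), i.e.\ a genuine \cu{} solution. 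Conversely, for the correct guess the optimal partition~$(R^\ast,B^\ast)$ restricts to an~$R_0$-$B_0$ separation, so the cut found there is at most the optimum; combining the two directions shows the algorithm outputs exactly the optimal cut size, and comparing it to~$\ell$ decides the instance.

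The main obstacle I expect lies in the interplay between the structural bound and the validity argument: one must be careful that restricting attention to~$2k^2$ connectors per side never discards an optimal solution (which is why the connectors are extracted from~$R^\ast$ and~$B^\ast$ as induced shortest paths), and that a plain minimum cut---computed with no connectivity constraint---already respects the connectivity requirement. Both points rest on the merging argument, which is precisely what makes it safe to replace the connectivity-constrained optimization inside each guess by an ordinary max-flow computation. The running time is then dominated by the~$O(n^{4k^2})$ guesses, each handled in polynomial time.
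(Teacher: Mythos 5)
Your proposal is correct and follows essentially the same route as the paper's proof: the same two structural observations (at most~$k$ components in~$G[S]$ and~$G[T]$, induced paths of at most~$2k$ vertices), the same guessing of at most~$(k-1)(2k-2)$ connector vertices per side, and a plain minimum cut per guess (your explicit recoloring argument for why the unconstrained minimum cut is automatically connected is a detail the paper leaves implicit). One small point: keep the sharp bound~$(k-1)(2k-2)$ rather than rounding up to~$2k^2$ before counting guesses, since~$n^{4k^2}$ guesses times a polynomial per-guess cut computation overshoots the claimed~$O(n^{4k^2})$, whereas~$n^{4k^2-8k+4}\cdot O(n^4)\subseteq O(n^{4k^2})$ for all~$k\geq 1$.
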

\else
\begin{restatable}{proposition}{indset}
    \label{prop:indset}
    \cu{} parameterized by the size~$k$ of a maximum independent set in the input graph can be solved in~$O(n^{4k^2})$~time.
\end{restatable}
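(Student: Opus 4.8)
The plan is to turn the problem, after a bounded amount of guessing, into a small number of ordinary minimum $s$-$t$-cut computations. Everything rests on two structural consequences of the hypothesis that every independent set of $G$ has size at most $k$. First, since $G[S]$ is an induced subgraph, choosing one vertex from each connected component of $G[S]$ yields an independent set of $G$; hence $G[S]$ has at most $k$ connected components $S_1,\dots,S_p$ with $p\le k$, and symmetrically $G[T]$ has at most $k$ components. Second, I will show that the components of $G[S]$ can be joined inside $G$ using only $O(k^2)$ extra vertices, and likewise for $T$; these ``connector'' vertices are exactly what we will guess.

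For the connector bound, consider any solution $(R,B)$ of finite size and let $K$ be the connected component of $G[R]$ that contains $S$ (it contains all of $S$ since $(R,B)$ is a valid solution). I would connect the pieces $S_1,\dots,S_p$ greedily: while more than one of them is still unconnected inside the current red structure, take a \emph{shortest} path in $G[K]$ between two not-yet-connected pieces. A shortest path is induced in $G[K]$, hence induced in $G$, and an induced path on $m$ vertices contains an independent set of size $\lceil m/2\rceil$; therefore it has at most $2k$ vertices and contributes at most $2k-2$ internal connector vertices. Each such step reduces the number of separate pieces by at least one, so at most $p-1\le k-1$ steps suffice, giving a set $Y_R\subseteq K\setminus S$ with $|Y_R|\le (k-1)(2k-2)\le 2k^2$ such that $G[S\cup Y_R]$ is connected. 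The same argument inside $G[B]$ yields $Y_B\subseteq B\setminus T$ with $|Y_B|\le 2k^2$ connecting $T$.

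The algorithm then guesses the pair $(Y_R,Y_B)$ of disjoint non-terminal sets, each of size at most $2k^2$; there are at most $n^{2k^2}$ choices for each, hence $n^{4k^2}$ in total. For a fixed guess set $R_0=S\cup Y_R$ and $B_0=T\cup Y_B$, discard the guess unless both $G[R_0]$ and $G[B_0]$ are connected; otherwise contract $R_0$ to a source and $B_0$ to a sink and compute a minimum edge cut separating them via max-flow, in polynomial time. We accept iff some guess produces a cut of size at most $\ell$. Correctness is then immediate in both directions: any partition $(R,B)$ the min-cut returns is a valid \cu{} solution because $S\subseteq R_0$ lies in a single component of $G[R]$ (as $G[R_0]$ is connected and $R_0\subseteq R$) and symmetrically for $T$, with cut size at most $\ell$; conversely, given an optimal solution $(R^{*},B^{*})$ of size at most $\ell$, the connector sets $Y_R^{*},Y_B^{*}$ from the structural step satisfy $R_0\subseteq R^{*}$ and $B_0\subseteq B^{*}$, so $(R^{*},B^{*})$ is one feasible $R_0$-$B_0$ separation and the minimum cut computed on that guess is at most $\ell$. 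With $n^{4k^2}$ guesses each handled in polynomial time we obtain the claimed running time.

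The main obstacle is the structural connector bound $|Y_R|\le 2k^2$ (and its analogue for $T$): specifically, arguing that the merging paths may be chosen induced so that the independent-set length bound applies, and checking that the per-merge cost $O(k)$ and the number of merges $O(k)$ compose to the required $O(k^2)$. Once that lemma is established, the reduction to minimum $s$-$t$-cut and the running-time bookkeeping are entirely routine.
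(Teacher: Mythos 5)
Your proposal follows essentially the same approach as the paper's proof: bound the number of components of $G[S]$ and $G[T]$ by $k$ via the independence number, bound the connector vertices needed per merge by $2k-2$ using induced (shortest) paths whose length is limited by the maximum independent set, guess the connector sets, and finish each guess with a minimum cut computation. One minor bookkeeping point: to land exactly within $O(n^{4k^2})$ you should guess sets of the size you actually proved, namely $(k-1)(2k-2) = 2k^2-4k+2$, rather than the rounded $2k^2$, so that the $n^{4k^2-8k+4}$ guesses leave enough slack to absorb the polynomial per-guess cost of the minimum-cut computation.
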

\fi

\appendixproof{prop:indset}{\indset*}
{
\begin{proof}
Let $(G=(V,E),S,T,\ell)$ be an instance of \cu{} where $G$ has a maximum independent set of size $k$. Let $S_1, S_2, \dots S_p$ be the connected components of $G[S]$ and let $T_1, T_2, \dots T_q$ be the connected components of $G[T]$. Note that $p \leq k$ and $q \leq k$ as we can otherwise chose one vertex from each component to get an independent set of size~$k+1$, a contradiction.
In the beginning only the vertices of $S$ are colored red and the vertices of~$T$ are colored blue.
We claim that in any solution, at most $(k-1)(2k-2)$ additional red vertices are needed to connect all vertices of $S$ and at most $(k-1)(2k-2)$ additional blue vertices are needed to connect all vertices of~$T$.
To show this, consider any solution.
To connect two connected components $S_i$ and $S_j$ of $G[S]$, there has to exist a path between $S_i$ and $S_j$ such that all vertices of the path are colored red.
Note that this also implies that there is an induced path between them where all vertices are colored red.
It is now possible to bound the length of this path by the size of the maximum independent set.
Any induced path between $S_i$ and $S_j$ contains at most $2k$ vertices as any induced path of length $2k+1$ contains an independent set of size $k+1$. Hence, to connect $S_i$ and $S_j$ in any solution at most $2k-2$ additional red vertices are needed.
Since there are $p$ connected components and $p \leq k$, at most $(k-1)(2k-2)$ vertices have to be colored red to make $G[R]$ connected. The argument for the vertices of $T$ is analogous. 

Our algorithm now guesses two sets~$R', B' \subseteq V$, each of size at most~${(k-1)(2k-2)}$.
We discard any guess where~$T \cap R' \neq \emptyset$ or $S \cap B' \neq \emptyset$. Additionally, we discard all guesses where~$G[S\cup R']$ or $G[T\cup B']$ are not connected. 
Finally, we color all remaining vertices of~$V$ by computing a minimum $S\cup R'$-$T \cup B'$-cut. 
If there are at most~$\ell$ multicolored edges in~$G$, then~$(G,\ell)$ is a yes-instance.

We next show that the running time is in~$O(n^{4^k})$.
Note that for~$k=1$, we do not need to guess any vertices and therefore the running time is~$O(m^{1+o(1)}) \subseteq O(n^4)$ as we only need to find a minimum cut \cite{B+23}.
For~$k \geq 2$, we need to guess~$2(k-1)(2k-2) = 4k^2 - 8k + 4 \leq 4k^2 - 4$ vertices.
Hence, there at most $n^{4k^2-4}$ possible guesses and for each guess finding a minimum cut can be done in $m^{1+o(1)} \in O(n^4)$ time. 
Thus, the overall runtime is in~$O(n^{4k^2})$.
\end{proof}
}

Finally, we show that \cu{} parameterized by clique-width is in XP.
We note that it remains open whether this parameter allows for fixed-parameter tractability.
Our algorithm is an adaptation of an algorithm for \maxcu{} due to Duarte et al.\;\cite{D+21}.
They use dynamic programming over a~$k$-expression of the input graph where they store the size of a largest cut with exactly~$s_i$ red vertices of each label~$i$ under certain connectivity conditions.
As in the case of treewidth, we can replace a maximum in their calculation by a minimum to solve a version of \cu{} without any terminals.
We can then incorporate terminals by first modifying the~$k$-expression into a~$3k$-expression where all vertices in~$S$ of label~$i$ get label~$i_S$ instead and all vertices in~$T$ of label~$i$ get label~$i_T$ instead.
We then simply discard any table entry in which the number of red vertices of label~$i_S$ does not equal the number of vertices of label~$i_S$ or where there are any red vertices of label~$i_T$.
Since their algorithm runs in~$O(n^{2k+4}3^{6 \cdot 2^k})$~time, this yields the following.

\begin{observation}
    \cu{} can be solved in~$O(n^{6k+4}3^{6 \cdot 8^k})$~time when parameterized by the clique-width~$k$.
\end{observation}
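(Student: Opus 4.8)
The plan is to treat the dynamic program of Duarte et al.~\cite{D+21} for \maxcu{} as a black box and adapt it by the same two-step recipe used for treewidth in the preceding observation. Their algorithm processes a $k$-expression of~$G$ from the innermost operations outward, maintaining a table indexed by, for each label~$i$, the number~$s_i$ of red vertices currently carrying label~$i$ together with connectivity information recording which label classes have so far merged into a common connected component on the red side and, separately, on the blue side. The stored value is the size of a largest bond (a cut splitting the graph into exactly two connected components) consistent with the state, and the recurrences update this value and the connectivity partitions across the four operations of a $k$-expression: vertex creation, disjoint union, edge insertion between two labels, and relabeling. The crucial point for us is that a bond requires \emph{both} sides to be connected, which is exactly the connectivity demanded by \cu{} once we invoke the observation that an optimal solution cuts the graph into precisely two connected components.

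First I would replace every maximization in the recurrences by a minimization, so that each table entry now records the minimum number of multicolored edges over all colorings consistent with its state. This swap is sound because the recurrences combine already-optimal subsolutions purely additively---the number of edges created by an edge-insertion between two labels is a deterministic function of the red/blue counts in those labels---so the optimality argument transfers verbatim with ``largest'' replaced by ``smallest''. At this point the table solves the terminal-free variant of \cu{}.

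Next I would encode the terminal constraints by refining the labeling rather than the recurrences. Given the $k$-expression, I would rewrite it into a $3k$-expression in which every label~$i$ is split into three labels~$i$, $i_S$, and~$i_T$: a creation operation for a vertex of~$S$ now produces label~$i_S$, a creation for a vertex of~$T$ produces label~$i_T$, and each original edge-insertion and relabel operation is duplicated over the three copies so that the represented graph is unchanged. In the resulting table I would keep only states in which the number of red vertices of each label~$i_S$ equals the total number of vertices of label~$i_S$ (pinning all of~$S$ to red) and in which no vertex of any label~$i_T$ is red (pinning all of~$T$ to blue). Because the red-side and blue-side connectivity is tracked exactly as before, the surviving states are precisely the colorings in which the whole red side is connected and contains~$S$, and the whole blue side is connected and contains~$T$.

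The step I expect to require the most care is verifying that the relabeling is faithful: I must check that splitting each label into three and duplicating the edge-insertion and relabel operations reproduces the original edge set exactly, and that the connectivity partitions computed over the $3k$ labels still certify connectivity of each color class. Once this is confirmed, the running time follows by substituting~$3k$ for the label count in the bound~$O(n^{2k+4}3^{6 \cdot 2^k})$ of Duarte et al., which yields~$O(n^{6k+4}3^{6 \cdot 8^k})$ since~$2^{3k} = 8^k$.
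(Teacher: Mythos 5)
Your proposal is correct and follows essentially the same route as the paper: adapt the dynamic program of Duarte et al.\ for \maxcu{} by replacing maximization with minimization, convert the $k$-expression into a $3k$-expression with labels~$i_S$ and~$i_T$ for terminals, and discard table entries violating the terminal color constraints, with the running time obtained by substituting~$3k$ into their bound. The extra care you flag about duplicating edge-insertion and relabel operations over the split labels is a sound implementation detail that the paper leaves implicit.
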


\section{Parameterized Hardness}
\label{sec:hard}
In this section, we show a number of hardness (both para-NP-hardness and W[1]-hardness) results for \cu{} and \dcs.
As \dcs{} is a special case of \cu, all hardness results for the former directly translate to the latter.
First, \dcs{} remains NP-hard even in bipartite graphs of bisection width one.
The idea for the reduction is to first add a copy of the graph (without any terminals) and connect one vertex of the graph to a vertex in~$S$.
Note that all vertices in the copy can be colored red and hence the size of an optimal solution remains the same.
Next, we can make the graph bipartite by subdividing each edge once.
This results in an equivalent bipartite instance of bisection width one and shows the following.

\begin{observation}
    \dcs{} is NP-hard in bipartite graphs with bisection width one.
\end{observation}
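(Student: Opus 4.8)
The plan is to reduce from a known NP-hard instance $(G,S,T)$ of \dcs{}, which we may assume is connected and has $T\neq\emptyset$ (otherwise the instance is trivial). The construction proceeds in two stages, each of which I would argue preserves the answer, and only the second stage is responsible for bipartiteness while the first is responsible for the small bisection width.

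In the first stage I would enforce bisection width one by doubling the graph. Take a disjoint terminal-free copy $G'$ of $G$, fix a vertex $a\in S$ and a vertex $a'$ of $G'$, and add the single bridge edge $\{a,a'\}$; call the result $H$, keeping the terminal sets $S,T\subseteq V(G)$. For the forward direction, any solution $(R,B)$ for $G$ lifts to $H$ by coloring all of $G'$ red: since $G'$ is connected and attaches via the bridge to the red vertex $a$, the set $R\cup V(G')$ induces a connected subgraph, while $B\subseteq V(G)$ is untouched and disjoint from $V(G')$. For the backward direction, the single bridge is the crucial feature: because a simple path can traverse $\{a,a'\}$ at most once, no red path between two vertices of $G$ can detour through $G'$, so intersecting a solution for $H$ with $V(G)$ keeps the red part connected; moreover the blue part cannot reach $G'$ at all, since the only gateway $a$ is forced red by $a\in S$ and $T\neq\emptyset$ guarantees the blue part is already nonempty inside $G$.

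In the second stage I would subdivide every edge of $H$ exactly once, yielding a graph $H'$ that is bipartite (with the original vertices on one side and the subdivision vertices on the other) and whose terminals are unchanged. The equivalence under subdivision is the standard argument: a solution on $H$ lifts by coloring each subdivision vertex with the common color of its two endpoints when they agree and leaving it uncolored otherwise, and a solution on $H'$ projects back to $V(H)$, using that a monochromatic path in $H'$ contracts to a monochromatic walk in $H$. Finally I would compute the bisection width of $H'$: placing the subdivided copy of $G$ on one side and the subdivided copy of $G'$ together with the single subdivision vertex of the bridge on the other gives two parts whose sizes differ by one and whose only connecting edge is one of the two edges of the subdivided bridge; since $H'$ is connected, no bisection can have zero cut edges, so the bisection width is exactly one.

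I expect the main obstacle to be the backward direction of the doubling step, namely verifying that the projection of a solution of $H$ onto $V(G)$ is a genuine \dcs{} solution for $G$. The single-bridge observation is what makes this clean, and it must be phrased carefully so that it simultaneously covers both ``red cannot loop through the copy'' (a simple path cannot use the unique bridge twice) and ``blue cannot enter the copy at all'' (its only entry point $a$ is red). Once the bridge argument is in place, the subdivision equivalence and the bisection-width computation are routine.
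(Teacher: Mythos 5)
Your proof is correct and follows essentially the same route as the paper: the paper's argument is exactly this two-stage construction, first attaching a terminal-free copy of $G$ to a vertex of $S$ by a single bridge (so the copy can be colored red and bisection width becomes one), then subdividing every edge once to obtain bipartiteness. You work out the backward directions and the final bisection-width count in more detail than the paper's brief sketch, but the underlying construction and equivalence arguments coincide.
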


Next, we show that \cu{} is W[1]-hard when parameterized by the clique cover number even if the size of a smallest dominating set is one.
Our reduction is based on a reduction by Bentert et al.\;\cite{BDFGK24} and we prove a couple of additional properties of the reduction that will be useful when excluding polynomial kernels for the solution size.

\ifconf
\begin{restatable}[\appmark]{proposition}{CCN}
    \label{prop:weird}
    \cu{} is W[1]-hard when parameterized by the clique cover number of the input graph even if the input graph contains a dominating set of size one, when~$|S|=1$, and when there exist constants~$c_1 \geq 0, c_2 > |T|$ such that (i)~$\ell = c_1 + |T| (c_2 - |T|)$ and (ii) any cut that keeps any set of~$j$ terminals in~$T$ connected to at least one other terminal in~$T$ while separating the terminal in~$S$ from all terminals in~$T$ has size at least~$c_1 + j (c_2-j)$.
\end{restatable}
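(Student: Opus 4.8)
The plan is to adapt the W[1]-hardness reduction of Bentert et al.~\cite{BDFGK24}, which already establishes W[1]-hardness of \cu{} parameterized by~$|T|$ with~$|S|=1$, and to verify that the instances it produces (after minor modifications) additionally have small clique cover number, admit a size-one dominating set, and satisfy the quantitative cut conditions~(i) and~(ii). The natural source is a multicolored selection problem such as \textsc{Multicolored Clique}, where the vertex set is partitioned into~$k$ color classes and the parameter is~$k$. The crucial point for switching the parameter from~$|T|$ to the clique cover number is to build the gadgets so that each color class (together with its selection vertices and its associated terminal) induces a \emph{clique}, plus only a constant number of auxiliary cliques for the source side and for routing connectivity. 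Then the whole constructed graph is covered by~$O(k)$ cliques while still having~$|T|=\Theta(k)$ terminals; since W[1]-hardness for~$|T|$ transfers to every parameter upper-bounding~$|T|$ up to a function, and here the clique cover number is~$|T|+O(1)$, this gives W[1]-hardness for the clique cover number.

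First I would re-express the reduction so that its building blocks are explicitly cliques (one per color class, plus a few fixed auxiliary cliques), and I would identify or introduce a single universal vertex~$d$ adjacent to all others, so that~$\{d\}$ is a dominating set of size one. In many such constructions the source vertex~$s$ is already a hub adjacent to all selection gadgets, so~$S=\{s\}$ can double as the dominating set; otherwise~$d$ can be added on~$S$'s side. Adding~$d$ raises the clique cover number by at most one, so the parameter bound is preserved, but one must check that~$d$ does not open shortcut cuts or contribute uncontrolled cut edges: its incident edges to the (fixed-size) terminal side should be uniform and absorbable into the constant~$c_1$.

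The heart of the argument is choosing~$c_1,c_2$ and proving~(i) and~(ii). The parabolic term~$j(c_2-j)$ strongly indicates that the construction contains a clique of size~$c_2$ whose vertices are split across the two sides of the cut, since splitting a~$c_2$-clique into parts of sizes~$j$ and~$c_2-j$ yields exactly~$j(c_2-j)$ crossing edges. I would engineer the gadget so that keeping~$j$ terminals mutually connected forces exactly~$j$ vertices of this clique onto the terminal side, while a fixed set of edges (charged to~$c_1$) must always be cut to separate~$s$ from~$T$. Setting~$\ell=c_1+|T|(c_2-|T|)$ as in~(i) then makes a cut of size at most~$\ell$ that keeps all of~$T$ connected exist precisely when the multicolored instance has a solution (the case~$j=|T|$).

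The main obstacle I anticipate is establishing the lower bound in~(ii) \emph{uniformly} over all cuts, not only the intended ones: I must show that no cut can keep~$j$ terminals connected and separate~$s$ while cutting fewer than~$c_1+j(c_2-j)$ edges, even when non-terminal vertices are used in unintended ways or connectivity is routed through the universal vertex. This calls for a careful charging argument showing that the~$c_1$ part (the unavoidable~$s$-separating edges) and the~$j(c_2-j)$ part (the clique split) are disjoint and each individually unavoidable, together with the observation that the minimum over all valid selections is attained exactly at genuine multicolored cliques, so the threshold~$\ell$ is met if and only if the instance is a yes-instance. Verifying that the clique reformulation and the universal vertex create no cheaper cut is the most delicate piece of the accounting.
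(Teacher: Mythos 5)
Your high-level architecture coincides with the paper's: take the reduction of Bentert et al.\ (W[1]-hardness for the parameter~$|T|$ with~$|S|=1$), turn each color class into a clique so that the clique cover number of the constructed graph becomes~$|T|+O(1)$, and make~$s$ (or an added vertex) universal to obtain a dominating set of size one. That part, including the parameter-transfer argument, is exactly what the paper does and is sound.

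The genuine gap is in your mechanism for the quantitative conditions (i) and (ii). You propose to realize the term~$j(c_2-j)$ by a dedicated clique of size~$c_2$ that is split into parts of sizes~$j$ and~$c_2-j$. But the number of crossing edges of such a split is the same for \emph{every} choice of~$j$ vertices of that clique, so at~$j=|T|$ every selection of one neighbor per terminal would meet the threshold~$\ell=c_1+|T|(c_2-|T|)$, whether or not the selected vertices encode a clique of the source instance. With that mechanism your final claim, that the minimum is attained exactly at genuine multicolored cliques, fails, and the reduction would accept no-instances. The paper avoids this by making~$c_2$ a composite degree count rather than the size of a single clique: it reduces from \textsc{Regular Multicolored Clique} (every vertex has degree exactly~$d$) and attaches to every original vertex~$n+2m$ pendant copies which, together with~$s$, form one huge clique~$C_s$. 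Each selected vertex then pays a fixed price of~$n+2m+1$ edges into~$C_s$, $\nicefrac{n}{k}-1$ edges into its own cliquified color class, and its~$d$ edges of~$G$; the \emph{only} possible savings are edges of~$G$ between two selected vertices. This is what makes~$c_2=n+2m+d+1$ uniform (regularity is essential here) and makes the value~$c_1+k(c_2-k)$ attainable precisely when the~$k$ selected vertices are pairwise adjacent in~$G$. Relatedly, $c_1=2n$ is not a fixed set of edges that is always cut, as you suggest; it aggregates terminal-incident and same-color-class edges whose count varies with~$j$, so the verification of (ii) requires exactly the per-vertex charging argument (terminals have pairwise disjoint neighborhoods, hence at least~$j$ selected vertices; extra selected vertices are always more expensive because~$n+2m+1>\nicefrac{n}{k}+d$) that you correctly flag as the delicate step but do not carry out. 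Without the regularity assumption, the pendant-copy padding, and the $G$-edge-savings mechanism, neither (i) nor (ii) can be established.
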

\else
\begin{restatable}{proposition}{CCN}
    \label{prop:weird}
    \cu{} is W[1]-hard when parameterized by the clique cover number of the input graph even if the input graph contains a dominating set of size one, when~$|S|=1$, and when there exist constants~$c_1 \geq 0, c_2 > |T|$ such that (i)~$\ell = c_1 + |T| (c_2 - |T|)$ and (ii) any cut that keeps any set of~$j$ terminals in~$T$ connected to at least one other terminal in~$T$ while separating the terminal in~$S$ from all terminals in~$T$ has size at least~$c_1 + j (c_2-j)$.
\end{restatable}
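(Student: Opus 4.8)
The plan is to adapt the W[1]-hardness reduction of Bentert et al.~\cite{BDFGK24}, which already establishes hardness of \cu{} parameterized by $|T|$ with $|S|=1$, and to verify that a suitably augmented version of it additionally produces a bounded clique cover number, a size-one dominating set, and the promised parabolic cut-size profile. I would start from the same W[1]-hard source problem as \cite{BDFGK24} (a multicolored $k$-selection problem, such as \textsc{Multicolored Clique} or \textsc{Multicolored Independent Set} with $k$ color classes), so that the output has exactly one source terminal $s$ and one terminal $t_i \in T$ per color class, giving $|T| = k$. To force a dominating set of size one I would make $s$ universal, i.e.\ adjacent to every other vertex; since $|S| = 1$ this does not affect connectivity of the (singleton) red terminal set, and it only means that every blue vertex contributes its edge to $s$ to the cut. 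The candidate vertices used to connect the terminals are organized into a clique, so that together with $s$ and a constant number of auxiliary cliques the whole vertex set is covered by $k + O(1)$ cliques. This bounds the clique cover number by a function of the source parameter and makes the construction a valid parameterized reduction for this parameter as well; I would also confirm it runs in FPT time.

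Next I would prove the standard equivalence: the source instance is a yes-instance if and only if the constructed instance admits an $S$-$T$-cut of size at most $\ell$ that keeps $T$ connected and separates $s$ from all of $T$. The forward direction colors the vertices picked by a source solution blue (so that $T$ stays connected through the selection clique) and everything else red, and checks that the cut has size exactly $\ell$; the backward direction argues that any cut of size at most $\ell$ keeping $T$ connected must select exactly one mutually compatible candidate per class, from which a source solution is recovered. The key quantitative ingredient, and the origin of properties~(i) and~(ii), is the elementary observation that a clique on $c_2$ vertices split into $j$ blue and $c_2 - j$ red vertices contributes exactly $j(c_2 - j)$ multicolored edges. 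Here $c_2$ is the size of the relevant selection clique (so $c_2 > |T|$ precisely because the number of candidates exceeds the number of terminals, leaving unselected candidates red), and $c_1$ collects the remaining selection-independent cut edges, for instance the edges incident to $s$ that are forced regardless of the coloring. Substituting $j = |T|$ then yields property~(i), namely $\ell = c_1 + |T|(c_2 - |T|)$.

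The main obstacle is establishing property~(ii) in full generality: every cut separating $s$ from all of $T$ that nonetheless leaves some set of $j$ terminals each connected to at least one other terminal must already cost at least $c_1 + j(c_2 - j)$, including the partial configurations in which fewer than $|T|$ terminals are connected (these are exactly the configurations one has to control for the later polynomial-kernel lower bound for the solution size). I would argue that connecting $j$ terminals forces at least $j$ distinct vertices of the size-$c_2$ clique onto the blue side while the remaining $c_2 - j$ stay red, since any deviation either disconnects one of these terminals from the others or re-attaches it to $s$; hence the cut restricted to that clique is already at least $j(c_2 - j)$, and adding the unavoidable base cost $c_1$ gives the bound. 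The delicate point, rather than the high-level reduction, is the extremal/convexity-style argument that no alternative routing can connect the $j$ terminals using fewer blue clique vertices and that the parabolic term cannot be undercut by redistributing cut edges elsewhere. Once this lower bound is in place, I would finally re-verify that the clique cover bound and the universal-vertex (domination) claim survive the augmentation, completing the parameterized reduction.
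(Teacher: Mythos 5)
Your high-level route is the same as the paper's: take the reduction of Bentert et al.\ \cite{BDFGK24} (which starts from \textsc{Regular Multicolored Clique} -- keep the regularity, since a fixed degree~$d$ is what makes the cut sizes exactly computable; plain \textsc{Multicolored Clique} would not give a fixed parabolic profile), make~$s$ universal to obtain a dominating set of size one, cliquify the construction to bound the clique cover number by~$k+1$, and then do the counting that yields properties~(i) and~(ii). The equivalence argument you sketch (exactly one selected candidate per color class, with cut-minimality forcing the selected candidates to be pairwise adjacent) is also exactly the paper's.

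The genuine gap is in the mechanism you propose for the quantitative part. You place \emph{all} candidate vertices into a single selection clique of size~$c_2$ and derive~$j(c_2-j)$ as the cost of splitting that clique into~$j$ blue and~$c_2-j$ red vertices. If you literally make all candidates pairwise adjacent, you erase the edge structure of the source graph~$G$: every transversal of the color classes then pays the same intra-clique cost, whether or not it is a clique in~$G$, so the backward direction can no longer recover a solution to the source instance (and encoding compatibility elsewhere is a different, unspecified construction). The paper instead makes each \emph{color class} a clique (each forming, with its terminal~$t_i$, one of the~$k$ cliques in the cover), keeps the edges of~$G$ between classes intact, and lets the padding vertices~$v_i^j$ of \cite{BDFGK24} together with~$s$ form the single remaining clique~$C_s$. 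Consequently~$c_2 = n+2m+d+1$ is not the size of any clique but an aggregate: each selected candidate has~$n+2m+1$ edges into~$C_s$, $n/k-1$ edges inside its color-class clique, and~$d$ graph edges, of which it saves~$j-1$ precisely when the~$j$ selected candidates form a clique in~$G$ -- this is where the~$-j^2$ term comes from. Accordingly, the proof of~(ii) is not a convexity argument about redistributing cut edges within one clique, but a disjointness-plus-degree count: the neighborhoods of distinct terminals~$t_i$ are disjoint, separating any candidate from~$C_s$ costs at least~$n+2m+1$ while it has fewer than~$n+2m$ neighbors outside~$C_s$, so a minimum cut keeping~$j$ terminals connected separates exactly one candidate per such terminal, and summing the forced incidences gives the bound. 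Your plan needs this replacement to go through.
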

\fi

\appendixproof{prop:weird}{\CCN*}
{
\begin{proof}
    Our proof is based on a reduction by Bentert et al.\;\cite{BDFGK24}.
    We first summarize their reduction, which shows that \cu{} is W[1]-hard when parameterized by the number of terminals even if~$|S|=1$.
    Starting from an instance~$(G,k)$ of \textsc{Regular Multicolored Clique}, a version of \textsc{Multicolored Clique} where each vertex in the input graph has degree exactly~$d$ for some~$d$, they construct an equivalent instance~$(H,S,T,\ell)$ of \cu{} as shown next and depicted in \cref{fig:weird}.
    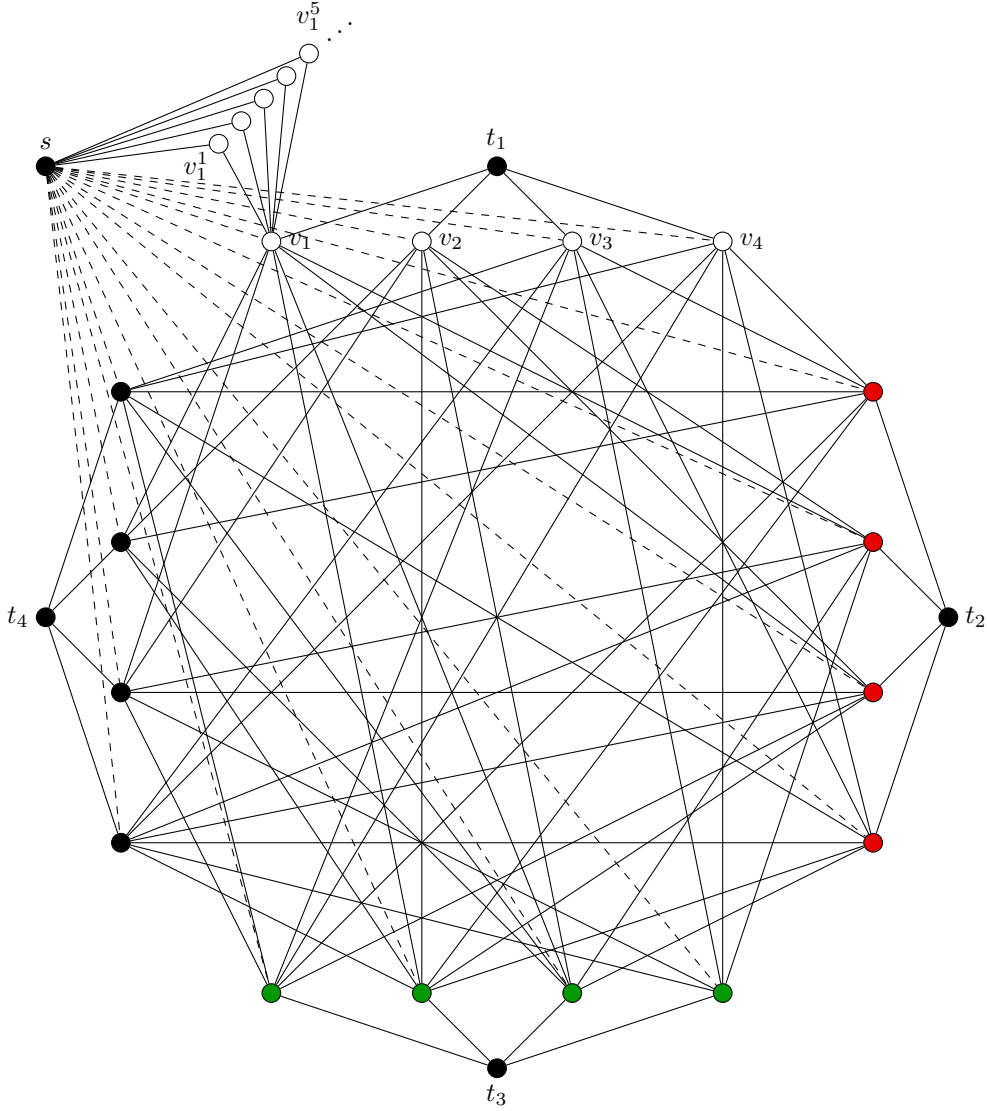
\begin{figure}[t!]
    \centering
	\begin{tikzpicture}
		\newcommand{\colorA}{blue}
		\newcommand{\colorB}{red!90!black}
		\newcommand{\colorC}{green!60!black}
		\newcommand{\colorD}{black}
		
		\foreach \x in {1,...,4}{
			\node[circle,draw, label=right:$v_\x$] (v-1\x) at (2*\x,0) {};
			\node[circle,draw,fill=\colorB] (v-2\x) at (10,-2*\x) {};
			\node[circle,draw,fill=\colorC] (v-3\x) at (2*\x,-10) {};
			\node[circle,draw,fill=\colorD] (v-4\x) at (0,-2*\x) {};
		}
		\foreach \i\j\x\y in {1/1/2/2,1/1/2/3,1/1/3/2,1/1/3/3,1/1/4/2,1/1/4/3,1/2/2/2,1/2/2/3,1/2/3/2,1/2/3/3,1/2/4/2,1/2/4/3,
							  1/3/2/1,1/3/2/4,1/3/3/1,1/3/3/4,1/3/4/1,1/3/4/4,1/4/2/1,1/4/2/4,1/4/3/1,1/4/3/4,1/4/4/1,1/4/4/4,
							  2/1/3/1,2/1/3/2,2/1/4/1,2/1/4/2,2/2/3/3,2/2/3/4,2/2/4/3,2/2/4/4,2/3/3/1,2/3/3/2,2/3/4/3,2/3/4/4,2/4/3/3,2/4/3/2,2/4/4/4,2/4/4/1,
							  3/1/4/1,3/1/4/3,3/2/4/2,3/2/4/4,3/3/4/1,3/3/4/2,3/4/4/3,3/4/4/4}{
			\draw (v-\i\j) -- (v-\x\y);
		}
			\node[circle,draw,fill=black, label=$t_1$] (t1) at(5,1) {};
			\node[circle,draw,fill=black, label=right:$t_2$] (t2) at(11,-5) {};
			\node[circle,draw,fill=black, label=below:$t_3$] (t3) at(5,-11) {};
			\node[circle,draw,fill=black, label=left:$t_4$] (t4) at(-1,-5) {};
			\foreach \x in {1,...,4}{
				\foreach \y in {1,...,4}{
					\draw (t\x) to (v-\x\y);
				}
			}
		
  	\node[circle,draw,fill=black, label=$s$] (s) at(-1,1) {};
			\foreach \x in {1,2,3,4}{
				\foreach \y in {1,2,3,4}{
					\draw[dashed] (s) to (v-\x\y);
				}
			}
        \foreach \x in {1,2,...,5}{
            \node[circle,draw] at(1+.3*\x,1+.3*\x) {} edge(s) edge(v-11);
        }
        \node at(2.9,2.9) {$\iddots$};
        \node at(2.5,3) {$v_1^5$};
        \node at(1,1) {$v_1^1$};
	\end{tikzpicture}
    \caption{Illustration of the reduction by Bentert et al.. The dashed edges represent $n+2m$ parallel~$P_3$'s as indicated between~$s$ and~$v_1$.}
    \label{fig:weird}
\end{figure}
    They start with an induced copy of~$G$, set~$S=\{s\}$ and~$T=\{t_1,t_2,\ldots,t_k\}$, and make~$t_i$ adjacent to all vertices of color~$i$ in~$G$.
    Moreover, they add vertices~$v^j_i$ for each vertex~$v_i$ in~$G$ and each~$j \in [n+2m]$, where~$n$ and~$m$ are the number of vertices and edges in~$G$, respectively.
    Each vertex~$v_i^j$ is made adjacent to~$v_i$ and~$s$.
    Finally, they set~$\ell = n - k + k(n+2m) + k(d-k+1)$.

    We next modify the construction to make~$s$ adjacent to all vertices in the induced copy of~$G$ in~$H$ and adjacent to all vertices in~$T$.
    Next, we assume that there are the same number~$\nicefrac{n}{k}$ of vertices of each color in~$G$ as this version of the problem is also known to be NP-hard.
    We further modify the reduction to make each vertex~$v_i$ in~$H$ adjacent to all other vertices that have the same color (in~$G$) and all vertices of the form~$v_i^j$ pairwise adjacent.
    We adjust the value of~$\ell = 2n + k(n+2m+d-k+1)$.
    Note that the new graph~$H$ has a dominating set of size one (the vertex~$s$ is adjacent to everything) and has a clique cover number of at most~$k+1$.
    The latter holds as all vertices of one color in~$G$ form a clique together with one vertex from~$T$ and all other vertices ($s$ and all vertices~$v_i^j$) form one clique~$C_s$.
    Moreover, observe that~$\ell = c_1 + |T|(c_2-|T|)$ for~$c_1 = 2n$ and~$c_2=(n+2m+d+1)$.
    
    It remains to prove that the adjusted instance is equivalent to the input instance of \textsc{Regular Multicolored Clique} and that any cut that keeps a set of~$j$ terminals in~$T$ connected to at least one other terminal in~$T$ while separating all terminals in~$T$ from~$s$ has size at least~$c_1 + j(c_2-j)$.
    We start with the latter.
    First note that any cut of size at most~$\ell$ cannot cut through the clique~$C_s$.
    Moreover, in order for any set of~$j$ vertices~$T' \subseteq T$ to be connected to at least one other vertex in~$T$, we need to add at least one neighbor of~$t_i$ to the connected component of~$t_i$ for each~$t_i \in T'$.
    Since the neighborhoods of two vertices~$t_i$ and~$t_j$ are by construction disjoint, this also implies that we need to separate at least~$j$ of the original vertices in~$G$ from~$C_s$.
    Moreover, any such cut is minimized when we pick exactly one neighbor of each~$t_i$ as each neighbor is adjacent to~$n+2m$ vertices in~$C_s$ and only~$\nicefrac{n}{k}+d < n+2m$ vertices outside of~$C_s$. 
    So consider any cut that separates all of~$T$ and~$j$ additional vertices from~$C_s$.
    Observe that the size of the cut is at least~$n + k - j + j(n+2m+d-j+1+\nicefrac{n}{k})$ as the vertices in~$T$ are incident to~$n+k$ edges and all but~$j$ of them are in the cut.
    Moreover, each of the~$j$ additional vertices are adjacent to~$\nicefrac{n}{k}-1$ vertices of the same color (which are not separated from~$C_s$), at least~$d-j+1$ vertices of different colors that are not separated from~$C_s$, and~$n+2m+1$ vertices in~$C_s$.
    Since we may assume that~$\nicefrac{n}{k} > k$ (otherwise there is a simple~$k^k$ time algorithm), it holds that~$n + k - j + j(n+2m+d-j+1+\nicefrac{n}{k}) \leq 2n+j(n+2m+d+1-j) = c_1 + j(c_2-j)$.
    
    To show that the constructed instance is equivalent to the original instance of \textsc{Regular Multicolored Clique}, we closely follow the proof by Bentert et al.\;\cite{BDFGK24}.
    If the constructed instance is a yes-instance, then we separate exactly one neighbor of each vertex in~$T$ from~$C_s$ as shown above.
    Let~$C$ be the set of these vertices.
    In order for a cut between~$T \cup C$ and the rest of the graph to be of size at most~$\ell$, we show that~$C$ needs to induce a clique in~$G$.
    First, each vertex in~$T$ is incident to exactly~$\nicefrac{n}{k}$ edges in the cut, that is, the vertices in~$T$ are incident to exactly~$n$ edges in the cut.
    This leaves a remaining budget of~$n + k(n+2m+d-k+1)$ for edges incident to the vertices in~$C$.
    Hence, each of these vertices can be adjacent to~$(n+2m+d-k+1+\nicefrac{n}{k})$ edges in the cut on average.
    Note that every vertex is adjacent to exactly~$n+2m+1$ vertices in~$C_s$ and to exactly~$\nicefrac{n}{k}-1$ vertices of the same color (which are all not separated from~$C_s$).
    Hence, on average each vertex in~$C$ can be incident to at most~$d-k+1$ edges to vertices of other colors in~$G$ that are not separated from~$C_s$.
    As all of these edges are exactly the edges in~$G$ and since~$G$ is~$d$-regular, each vertex needs to be incident to~$k-1$ edges to other vertices in~$C$, that is, $C$ needs to be a clique of size~$k$.

    If there is a (multicolored) clique~$C$ of size~$k$ in~$G$, then consider the cut in~$H$ between~$T \cup C$ and the rest of the graph.
    Each vertex in~$T$ is incident to~$\nicefrac{n}{k}+1$ edges and adjacent to one other vertex in~$T \cup C$.
    Hence, the cut contains~$k (\nicefrac{n}{k})$ edges incident to vertices in~$T$.
    Moreover, it contains~$k (\nicefrac{n}{k}-1+d-(k-1))$ edges between the original vertices in~$G$ in~$H$ and~$k (n+2m+1)$ edges between vertices in~$C$ and vertices in~$C_s$.
    In total, the cut has size~$k(2\nicefrac{n}{k}+d-k+n+2m+1) = 2n + k(n+2m+d-k+1) = \ell$.
    This concludes the proof.
\end{proof}
}

Finally, we show that \dcs{} remains NP-hard on graphs of maximum degree three based on a reduction due to van ’t Hof et al.\;\cite{HPW09}.
Note that both \dcs{} and \cu{} become trivial on graphs of maximum degree at most~$2$ as the input graph is then restricted to a path or a cycle.

\ifconf
\begin{restatable}[\appmark]{proposition}{md}
    \label{prop:md}
    \dcs{} is NP-hard even if the input graph has maximum degree three.
\end{restatable}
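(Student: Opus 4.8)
The plan is to take the NP-hardness construction of van 't Hof et al.~\cite{HPW09} for \dcs{} and post-process it with a degree-reduction transformation that keeps the answer invariant. Write $(G_0,S_0,T_0)$ for the instance produced by their reduction; a solution consists of disjoint vertex sets $R \supseteq S_0$ and $B \supseteq T_0$ such that $G_0[R]$ and $G_0[B]$ are each connected. The only obstruction to bounded degree is a set of high-degree vertices (in a split graph, the clique). For a vertex $v$ of degree $d \ge 4$ with neighbors $u_1,\dots,u_d$, I would \emph{split} $v$ into a path $v_1 - v_2 - \cdots - v_d$ and add the edge $\{v_i,u_i\}$ for each $i$, so that every $v_i$ has degree at most three (two path-edges and one edge to $u_i$, and fewer at the endpoints). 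Iterating this over all high-degree vertices yields a graph of maximum degree three whose size is polynomial in $|G_0|$.

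The correctness of splitting is immediate and faithful for \emph{terminal} vertices. If $v \in S_0$ (respectively $v \in T_0$), I put every copy $v_i$ into $S$ (respectively $T$); since all copies are then forced to the same color and the path $v_1 - \cdots - v_d$ keeps them connected, the path behaves exactly like the single monochromatic vertex it replaces. Both directions are checked by contracting each split path back to a single vertex, which preserves connectivity of each color class as well as the terminal constraints.

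The delicate case, and what I expect to be the main obstacle, is splitting a \emph{non-terminal} high-degree vertex. Here the copies $v_1,\dots,v_d$ are non-terminals, so a solution of the new instance is free to color the path $v_1 - \cdots - v_d$ with \emph{two} colors, say a red prefix and a blue suffix. Such a split path can simultaneously connect some neighbors $u_i$ on the red side and others on the blue side, whereas the single vertex $v$ it replaces could contribute to the connectivity of only one color class. In the pure decision problem \dcs{} there is no budget to penalize color changes, so no bounded-degree gadget built from ordinary (non-terminal) vertices can force a free gadget to be monochromatic; hence a naive splitting of non-terminal vertices is \emph{not} answer-preserving in general.

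I would resolve this in one of two ways. The robust route is to redesign the base reduction so that \emph{every} vertex of degree larger than three is a terminal, so that only the safe terminal-splitting above is ever applied: concretely, I would reduce from a bounded-occurrence variant such as \textsc{3-SAT} in which each variable occurs in at most three clauses (which remains NP-hard), implement each clause by a constant-size degree-three gadget, and route all ``fan-out'' of a variable through a terminal path, so that every high-degree object is a pinned terminal rather than a free vertex. Alternatively, staying literally with the construction of van 't Hof et al., I would argue that a bichromatic split path cannot be exploited: using the global structure of their reduction, one shows that whenever a split non-terminal path is colored with two colors, the induced partition can be uncrossed into one in which each such path is monochromatic without destroying connectivity or the terminal constraints, so that the bichromatic case still maps back to a genuine solution of $(G_0,S_0,T_0)$. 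Carrying out this uncrossing and case analysis for the specific gadgets is the technical heart of the proof.
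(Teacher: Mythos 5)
You correctly identify the central obstacle: in a pure decision problem there is no budget with which to penalize color changes, so a non-terminal degree-reduction path can be colored bichromatically and thereby provide connectivity to both color classes at once, something the original high-degree vertex could not do. However, neither of your proposed resolutions is carried out, and both are problematic as stated. The uncrossing route is an unsupported assertion: a bichromatic split path really can create spurious yes-instances (the split copies of a clique vertex of the split graph of van 't Hof et al.~\cite{HPW09} can simultaneously serve the red and the blue side), and whether such solutions can always be uncrossed in their specific construction is exactly the statement that needs proof; you defer it entirely. The terminal-pinning route is self-defeating: the fan-out structure of a variable is precisely the part of the reduction that must remain \emph{free}, because its color is what encodes the truth value. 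If you pin it as a terminal it can no longer encode a choice, and if you leave it free your bichromatic problem reappears. So the proposal stops, by your own admission, exactly at the technical heart of the argument.

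The paper's proof does start from a bounded-occurrence SAT variant, namely 3,4-SAT~\cite{Tov84}, but the mechanism that forces monochromaticity is different from pinning. Each variable~$x_i$ becomes two parallel \emph{non-terminal} paths $P_i$ and $\overline{P}_i$ on six vertices; these pairs are chained in series between two red terminals $f_1$ and $f_2$ with $S=\{f_1,f_2\}$, and each clause is a blue terminal of degree at most three attached to a private slot vertex on the paths of its literals, with $T$ being the set of clause vertices. Since clause vertices are blue, any red path from $f_1$ to $f_2$ must traverse $P_i$ or $\overline{P}_i$ in its entirety for every~$i$, so at least one of the two paths per variable is forced to be completely red: the monochromatic forcing comes from the \emph{global connectivity requirement of the red class}, not from terminals on the variable gadgets and not from a budget. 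Connectivity of the blue class then forces every clause vertex (the clause vertices are pairwise non-adjacent) to have a blue neighbor on a literal path, and a path containing a blue vertex cannot be the fully red one, which reads off a satisfying assignment. Every vertex has degree at most three by construction, so no post-processing is needed at all. This global-connectivity trick is the missing idea that your proposal would need to be completed.
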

\else
\begin{restatable}{proposition}{md}
    \label{prop:md}
    \dcs{} is NP-hard even if the input graph has maximum degree three.
\end{restatable}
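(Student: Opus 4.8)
The plan is to start from the NP-hardness reduction of van 't Hof et al.~\cite{HPW09} and transform the instance it produces into one of maximum degree three by a \emph{vertex-splitting} argument, using edge subdivisions where convenient (as in the bisection-width construction above). The first step is to recall their reduction and locate the vertices of large degree: the graph~$H$ it outputs, together with~$S$ and~$T$, is bounded-degree except at a small number of ``hub'' vertices (the clique of the split graph and the terminal hubs), whose sole purpose is to guarantee connectivity among a prescribed family of same-colored vertices. Since \dcs{} and \cu{} are trivial below maximum degree three, the goal is precisely to sparsify these hubs down to degree three while preserving the logical behavior of the gadgets.

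The main tool is the following splitting gadget. I would replace each high-degree vertex~$v$ by a path~$P_v = v_1 v_2 \cdots v_{\deg(v)}$ of fresh vertices and reattach the $i$-th edge formerly incident to~$v$ to~$v_i$, so that every internal~$v_i$ has degree exactly three (two path edges plus one reattached edge) and the two endpoints have degree two; a single subdivision absorbs any leftover incidence. The crucial observation is that the problematic hubs are \emph{terminals}: if~$v$ lies in~$S$ (respectively~$T$), I place all of~$P_v$ into~$S$ (respectively~$T$). Then~$P_v$ is automatically connected and monochromatic, is forced to~$v$'s color by its terminal membership, and carries exactly the original incidences, so it simulates~$v$ faithfully. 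A valid coloring of~$H$ thus lifts to one of the new graph by coloring each~$P_v$ with~$v$'s color, and conversely any valid coloring must color each~$P_v$ monochromatically and hence projects back to a valid coloring of~$H$; by construction every vertex then has degree at most three.

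The hard part will be reproducing the connectivity that a hub vertex provides when it mediates between vertices that may take \emph{either} color. A clique guarantees that \emph{every} same-colored subset of its vertices is connected, and no bounded-degree graph can imitate this for all subsets simultaneously. To get around this I would argue that the connectivity actually \emph{demanded} in any valid solution of van 't Hof et al.'s construction is structured: only connections along a fixed, tree-like pattern are ever needed, so the path~$P_v$ (arranged as a caterpillar whose spine carries the forced color and whose teeth carry the external edges), or a subdivided forced-color backbone with one attachment point per external edge, already realizes exactly these demands without exceeding degree three. Establishing this restricted-connectivity claim for their specific gadgets, and checking that the sparsification introduces no spurious solutions (so that NO-instances stay NO and YES-instances stay YES in both directions), is the delicate step; once it is in place, the degree bound and the equivalence follow routinely.
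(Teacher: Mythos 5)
Your reduction rests on the premise that the high-degree vertices in van 't Hof et al.'s construction are terminals, and this premise is false: their NP-hardness proof is for ($P_5$-free) split graphs, and the source of unbounded degree is the clique, whose vertices are largely \emph{non-terminals} that may take either color in a solution. This breaks your splitting gadget in both directions. If you replace a non-terminal hub~$v$ by a path~$P_v$ of non-terminals, you cannot force~$P_v$ to be monochromatic, so a solution of the new graph may color~$P_v$ with a red segment and a blue segment, providing simultaneous red-side and blue-side connectivity that the single vertex~$v$ could never provide; NO-instances can thus become YES-instances, and your one-line remark that any valid coloring ``must color each~$P_v$ monochromatically'' has no justification for non-terminal hubs. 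Conversely, a clique connects \emph{every} same-colored subset of its vertices, whereas a path only connects contiguous segments, so YES-instances can become NO-instances unless the red subset of each hub's neighborhood happens to be realizable as a segment under some fixed ordering. The ``restricted-connectivity claim'' you defer is therefore not a routine check --- it is the entire difficulty, and there is no apparent reason it holds for the split-graph gadgets, where which clique vertices end up red depends on the satisfying assignment being encoded.

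The paper sidesteps all of this by not sparsifying an existing instance at all: it gives a direct reduction from 3,4-SAT (each clause has at most three literals, each variable occurs in at most four clauses), where every gadget has maximum degree three \emph{by design}. Each variable~$x_i$ becomes two parallel six-vertex paths~$P_i$ and~$\overline{P}_i$ (one per literal), each clause becomes a single vertex of degree at most three attached to distinct internal path vertices (possible because of the occurrence bound), the paths are chained together between two vertices~$f_1,f_2$, and the terminals are~$S=\{f_1,f_2\}$ and~$T$ equal to the set of clause vertices. Connectivity of the red side then forces, for each~$i$, all of~$P_i$ or all of~$\overline{P}_i$ to be red (encoding the truth assignment), and each blue clause vertex needs a blue path-neighbor (encoding satisfaction). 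If you want a correct proof along bounded-degree lines, reducing from a bounded-occurrence SAT variant as the paper does is the natural fix; patching the split-graph reduction is unlikely to work.
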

\fi
\appendixproof{prop:md}{\md*}
{
\begin{proof}
We give a reduction from 3,4-SAT, which is known to be $\NP$-hard \cite{Tov84}.
Therein, one is given a Boolean formula~$\phi$ in which each clause contains at most three literals and each variable appears in at most four clauses.
Let $X=\{x_1,x_2,\dots, x_n\}$ and~$\overline{X}= \{\overline{x} \mid x \in X \}$ be a set of all positive and negative literals in~$\phi$ and let $C=\{C_1,C_2,\dots, C_m\}$ be the set of clauses in~$\phi$.
We construct an instance~$(G,S,T)$ of \dcs{} as follows.
For each clause~$C_i$, we add a vertex~$c_i$ to~$G$ and for each variable~$x_i$, we add paths~$P_i=(x_{i}^{1},x_{i}^{2},\dots, x_{i}^{6})$ and~$\overline{P}_i = (\overline{x}_{i}^{1},\overline{x}_{i}^{2}, \dots , \overline{x}_{i}^{6})$ to~$G$.
If a literal~$x_i \in X \cup \overline{X}$ appears in the clause $C_j$, then we add an edge between $c_j$ and one vertex of $\{x_{i}^{2},x_{i}^{3},x_{i}^{4}, x_{i}^{5}\}$ which is not adjacent to any other vertex~$c_{j'}$.
Such a vertex always exists as each variable appears in at most four clauses.
Finally, we add edges $\{x_i^6, \overline{x}_{i+1}^1\}$, $\{\overline{x}_{i}^6, x_{i+1}^1\}$, $\{x_{i}^6, x_{i+1}^1\}$ and $\{\overline{x}_{i}^6,\overline{x}_{i+1}^1\}$ for all~$i \in [n-1]$ and vertices~$f_1$ and~$f_2$ with edges~$\{f_1,x_1^1\}$, $\{f_1,\overline{x}_1^1\}$, $\{f_2,x_n^6\}$ and~$\{f_2,\overline{x}_n^6\}$. 
We conclude the construction by setting~$S=\{f_1,f_2\}$ and~$T=C$.
See \cref{fig:max_deg} for an illustration.
\begin{figure} 
    \centering
    \begin{tikzpicture}
        \foreach \x in {1,...,6}
                {\node (y\x) at (0.5*\x,5) [circle,draw,label=above:$x_{1}^\x$] {};}
        \foreach \x in {1,...,5}{
            \pgfmathtruncatemacro\y{\x+1};
                \draw (y\x) to (y\y);
                }
        \foreach \x in {1,...,6}
                {\node (y'\x) at (0.5*\x,3.5) [circle,draw,label=below:$\overline{x}_{1}^{\x}$] {};}
        \foreach \x in {1,...,5}{
            \pgfmathtruncatemacro\y{\x+1};
                \draw (y'\x) to (y'\y);
                }
        \foreach \x in {1,...,6}{
            \pgfmathtruncatemacro\z{\x+7};
                \node (z\x) at (0.5*\z,5) [circle,draw,label=above:$x_{2}^{\x}$] {};}
        \foreach \x in {1,...,5}{
            \pgfmathtruncatemacro\y{\x+1};
                \draw (z\x) to (z\y);
                }
        \foreach \x in {1,...,6}{
            \pgfmathtruncatemacro\z{\x+7};
                \node (z'\x) at (0.5*\z,3.5) [circle,draw,label=below:$\overline{x}_{2}^{\x}$] {};}
        \foreach \x in {1,...,5}{
            \pgfmathtruncatemacro\y{\x+1};
                \draw (z'\x) to (z'\y);
                }

        \foreach \x in {1,...,6}{
            \pgfmathtruncatemacro\z{\x+16};
                \node (n\x) at (0.5*\z,5) [circle,draw,label=above:$x_{n}^{\x}$] {};}
        \foreach \x in {1,...,5}{
            \pgfmathtruncatemacro\y{\x+1};
                \draw (n\x) to (n\y);
                }
        \foreach \x in {1,...,6}{
            \pgfmathtruncatemacro\z{\x+16};
                \node (n'\x) at (0.5*\z,3.5) [circle,draw,label=below:$\overline{x}_{n}^{\x}$] {};}
        \foreach \x in {1,...,5}{
            \pgfmathtruncatemacro\y{\x+1};
                \draw (n'\x) to (n'\y);
                }

        \draw (y6) to (z1);
        \draw (y'6) to (z1);
        \draw (y6) to (z'1);
        \draw (y'6) to (z'1);

        \draw[dashed] (z6) -- (n1);
        \draw[dashed] (z'6) -- (n'1);
        
        \draw[dashed] (z6) -- (7.1,4.1);
        \draw[dashed] (z'6) -- (7.1,4.4);

        \draw[dashed] (n1) -- (7.9,4.1);
        \draw[dashed] (n'1) -- (7.9,4.4);

        \node (f1) at (-0,4.25) [circle,draw,fill=fred2,draw=fred,label=left:$f_1$] {};
	    \node (f2) at (11.5,4.25) [circle,draw,fill=fred2,draw=fred,label=right:$f_2$] {};

            \draw (f1) to (y1);
            \draw (f1) to (y'1);
            \draw (n6) to (f2);
            \draw (n'6) to (f2);

        
        \node (c1) at (3,7) [circle,draw,fill=fblue2,draw=fblue,label=above:$c_1$] {};  
        \node (c2) at (5,7) [circle,draw,fill=fblue2,draw=fblue,label=above:$c_2$] {};
        \node at(7,7) {$\dots$};
        \node (c3) at (9,7) [circle,draw,fill=fblue2,draw=fblue,label=above:$c_m$] {};

            \draw[bend right=10] (c1) to (y2);
            \draw[bend left=7] (c1) to (z2);
            \draw[bend left=20] (c1) to (n'2);

            \foreach \x in {1,...,3}
               {\draw (c2) to (0.5*\x+4,6.25);}
            \foreach \x in {1,...,3}
               {\draw (c3) to (0.5*\x+8,6.25);}
  
        \end{tikzpicture}
    \caption{An illustration of the reduction behind \cref{prop:md} with~$C_1 = (x_1 \lor x_2 \lor \overline{x}_n)$.}
    \label{fig:max_deg}
\end{figure}

We next show that the maximum degree in~$G$ is three.
Note hat each vertex $c_i$ has degree at most three as each clause in~$\phi$ contains at most three variables.
For each path~$P_i$, the vertices $x_i^j$ with $2 \leq  j \leq 4$ have degree at most 3 because each $x_i^j$ is only adjacent to $x_i^{j-1}$ and $x_i^{j+1}$ and at most one vertex of $C$.
The vertices~$x_i^1$ and~$x_i^6$ have degree 3 since they are not adjacent to any vertices in $C$ and only have edges to~$x_{i-1}^6, \overline{x}_{i-1}^6,x_i^2$ and~$x_{i+1}^1,\overline{x}_{i+1}^1, x_i^5$, respectively (or to~$f_1$ or~$f_2$ in the case of~$x_1^1$, $\overline{x}_1^1$, $x_{n}^6$, and~$\overline{x}_{n}^6$).
The vertices~$f_1$ and~$f_2$ each only have degree two and the entire graph has therefore maximum degree three.

Since the reduction can clearly be computed in polynomial time, it only remains to show that the constructed instance is a yes-instance of \dcs{} if and only if~$\phi$ is satisfiable.
To this end, first assume that the constructed instance is a yes-instance and let~$(R,B)$ be a coloring of $G$ such that $G[R]$ and $G[B]$ are connected and $S \subseteq R$ and $T \subseteq B$.
We set each variable~$x_i$ to true if and only if any vertex of~$P_i$ is colored blue.
We next prove that this is a satisfying assignment. 
The vertices~$f_1$ and~$f_2$ are both colored red.
Any path that connects~$f_1$ to~$f_2$ in $G[R]$ cannot contain any vertices of~$C$ as~$C=T$.
Hence, any path between $f_1$ and $f_2$ has to contain all vertices of~$P_i$ or all vertices of~$\overline{P}_i$ for each~$i \in [n]$.
As a result, all vertices of~$P_i$ or all vertices of~$\overline{P}_i$ are colored red.
Since the vertices of $C=T$ are independent, one neighbor~$x$ of each vertex~$c_j$ has to be colored blue by the solution.
If $x = x_i^p$ for some~$i \in [n]$ and~$2 \leq p \leq 5$, then we set~$x_i$ by construction to true and~$c_j$ is satisfied.
If $x = \overline{x}_i^p$ for some~$i\in [n]$ and~$2 \leq p \leq 5$, then coloring one vertex of $\overline{P}_j$ blue forces all vertices of $P_j$ to be colored red and thus $x_i$ was by construction set to false.
The vertex~$c_i$ is only adjacent to a vertex of $\overline{P}_j$ if $\overline{x}_i$ appears in $C_i$ and therefore setting~$x_j$ to false satisfies~$C_i$.
This proves that~$\phi$ is satisfiable.

Now assume that there is a truth assignment~$\beta$ that satisfies~$\phi$.
We color all vertices of~$P_i$ blue and all vertices of~$\overline{P}_i$ red if $x_i$ is set to true by~$\beta$ and we color all vertices of~$P_i$ red and all vertices of~$\overline{P}_i$ blue, otherwise.
Next, we color the vertices of~$C$ blue and~$f_1$ and~$f_2$ red.
In this coloring exactly one path of $P_i$ and $\overline{P}_i$ is colored red and the other is colored blue. All paths~$P_i$ and~$\overline{P}_i$ are connected to~$P_{i-1}, \overline{P}_{i-1},P_{i-1},$ and~$\overline{P}_{i+1}$ (which the exception of $P_1,\overline{P}_{1}$ and $P_n,\overline{P}_{n}$ which are connected to~$f_1$ and~$f_2$ instead).
Hence, all red vertices in the paths are connected and all blue vertices in the paths are connected as well.
The vertices~$f_1$ and~$f_2$ are by construction adjacent to one red vertex in one of the paths
Finally for each $c_i \in C$, since~$\beta$ satisfies~$C_i$, there has to be a blue neighbor of~$c_i$ in one of the paths.
This shows that both~$G[R]$ and~$G[B]$ are connected in the constructed coloring and this concludes the proof.    
\end{proof}
}

\section{Polynomial Kernels}
In this section, we analyze which parameters allow for polynomial kernels.
Note that we can restrict our attention to parameters that allow for fixed-parameter tractability as this is equivalent to having a kernel of any size \cite{DF13}.
We first show that \cu{} does not admit a polynomial kernel when parameterized by the vertex cover number of the input graph plus the number of terminals.
Note that for this parameter, \dcs{} has a simple kernel of size~$O(k^3)$.
We can 2-approximate a vertex cover in polynomial time by repeatedly taking both endpoints of any uncovered edge into the solution.
We can then, for each pair of vertices in this vertex cover, mark~$k$ non-terminal vertices in the common neighborhood (or all such vertices if there are less than~$k$).
Removing all unmarked non-terminal vertices that do not belong to the approximated vertex cover results in a cubic kernel as we keep at most~$2k$ vertices in the approximate vertex cover and at most~$(2k)^2 \cdot 
 k = 4k^3$ marked vertices.

\label{sec:kernel}
\begin{theorem}
    \label{thm:nopolyvc}
    \cu{} parameterized by vertex cover number plus number of terminals does not admit a polynomial kernel unless \ppoly.
\end{theorem}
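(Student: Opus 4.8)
The plan is to rule out a polynomial kernel by giving an OR-cross-composition in the sense of \cref{def:or-cross-composition}; by the remark following that definition, such a composition immediately yields the claimed lower bound under \ppoly. I start by fixing the source problem to be a convenient NP-hard problem~$Q$ that admits a succinct ``frame plus light items'' encoding --- concretely one may take \textsc{Multicolored Clique}, whose machinery is already used in \cref{prop:weird}. The polynomial equivalence relation~$\mathcal R$ groups instances of~$Q$ by all relevant size parameters (number of vertices, number of colors, and the intended solution size), so that all instances in one class share a common combinatorial frame; syntactically malformed instances or instances that do not match any class are sent to a fixed trivial no-instance, and I pad the number~$t$ of instances to the next power of two with copies of a trivial no-instance. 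It then suffices to compose $t$ equivalent instances~$I_1,\dots,I_t$ into a single instance of \cu{} whose parameter is polynomially bounded in~$\max_i|I_i|+\log t$ and that is a yes-instance exactly when some~$I_i$ is.

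The construction has three parts. First, a shared \emph{skeleton}~$U$, of size polynomial in the per-instance size~$N=\max_i|I_i|$, that encodes the common frame of the class (for \textsc{Multicolored Clique} these are the color classes together with a fixed set of selection vertices); $U$ will carry all terminals and will serve as the vertex cover. Second, for each instance~$I_j$ a set~$W_j$ of \emph{light} vertices --- essentially an independent set --- attached only to~$U$, encoding the instance-specific data of~$I_j$ (for example its edges, represented as adjacency vertices). Third, a \emph{selector} of size polynomial in~$\log t$ that activates exactly one index~$i\in[t]$. The point of this layout is the parameter bound: every vertex outside~$U$ and the selector lies in some~$W_j$ and is pairwise nonadjacent within its part, so the vertex cover number is at most~$|U|$ plus the small cover of the selector, and the terminals are likewise confined to~$U$ and the selector. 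Both quantities are~$\poly(N)+\poly(\log t)$ and, crucially, independent of~$t$ apart from the~$\log t$ slack, exactly as \cref{def:or-cross-composition} requires.

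Correctness is controlled through the cut budget~$\ell$, which I set so that a valid $S$-$T$-cut of size at most~$\ell$ exists if and only if the solution can afford to genuinely solve the single activated instance while coloring every deactivated instance ``for free''. In the forward direction, given a solution to some~$I_i$ I set the selector to~$i$, color~$U$ according to that solution, color~$W_i$ to realize it, and give every other~$W_j$ its cheap default coloring; a routine count shows the number of multicolored edges equals~$\ell$ and that both color classes are connected. In the backward direction I argue that any budget-$\ell$ solution must push the selector into a single active index~$i$ (any other configuration overspends the budget) and must color~$U$ so that instance~$i$ is actually solved, so that~$I_i$ is a yes-instance.

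The hard part will be the selector together with the deactivation of the~$t-1$ unselected instances. In \cu{} a solution is a single global red/blue coloring that must keep both color classes connected, so I cannot treat the instances independently: the same coloring of the shared skeleton~$U$ must simultaneously solve the active instance and admit the cheap default coloring on every~$W_j$ with~$j\neq i$, all while respecting the uncut (connectivity) condition and not exceeding~$\ell$. Designing the attachment of each~$W_j$ to~$U$ and to the selector so that (i)~a deactivated gadget has a default coloring that is valid for every admissible coloring of~$U$ and costs a fixed, pre-budgeted amount, and (ii)~an activated gadget forces a faithful encoding of~$I_i$, is the technical crux; the global connectivity requirement is what makes this interplay delicate, and getting the budget accounting to make exactly one active index affordable is where the main work lies.
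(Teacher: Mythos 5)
Your high-level architecture is exactly the right one, and it matches the paper's: an OR-cross-composition in which the only part of the construction that grows with~$t$ is an independent set carrying no terminals, so that the vertex cover number plus number of terminals stays polynomial in the per-instance size (plus the~$\log t$ slack allowed by \cref{def:or-cross-composition}). The paper realizes this with an instance-selection gadget of~$2t$ independent, terminal-free vertices (one vertex~$v_j$ per instance plus dummies), a vertex-selection clique, and a verification clique, composing from \textsc{Vertex Cover} on 3-regular graphs. However, your proposal is a plan rather than a proof: you explicitly defer what you yourself call the technical crux --- the design of the selector, the deactivation of the~$t-1$ unselected instances under the global connectivity requirement, and the budget accounting that makes exactly one activated instance affordable. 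These are precisely the parts where all the work in the paper's proof lies: the choice of adjacencies so that each verification vertex in the red component has at most one ``buddy'' outside its gadget, the exact value of~$\ell$, and the backward direction showing that any cut of size at most~$\ell$ must select a single instance vertex~$v_j$, exactly one vertex per column of the vertex-selection gadget, and exactly one vertex per row of the verification gadget (enforced via degree-two terminals like the~$c_i^j$ and~$y$ vertices). Without these gadgets and the two counting arguments, there is no reduction to verify, so the statement is not established.

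Two further concrete points. First, your insistence on a selector of size~$\poly(\log t)$ is both unnecessary and a likely source of trouble: the definition only requires the \emph{parameter} to be bounded by~$\poly(\max_i|I_i| + \log t)$, so a selector with~$\Theta(t)$ vertices is perfectly fine as long as it is an independent set without terminals --- this is exactly what the paper does, and it sidesteps the need for any binary index-decoding machinery, which would be delicate to implement in a problem with a global connectivity (uncut) constraint. Second, your forward direction asserts that ``a routine count shows the number of multicolored edges equals~$\ell$,'' but in this construction nothing about that count is routine: the paper's corresponding computation has to track, separately, the cut edges inside each of the three gadgets and between gadgets, and the backward direction has to show the bound~$\ell$ is tight enough that any feasible solution encodes a vertex cover of the selected instance. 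Until you exhibit gadgets for which both counts go through, the claimed equivalence ``yes iff some~$I_i$ is yes'' is unsupported.
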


\begin{proof}
    We present an OR-cross composition from \textsc{Vertex Cover} in 3-regular graphs\footnote{A graph is 3-regular if each vertex has exactly three neighbors.}, which is known to be NP-hard \cite{FSS10}.
    We start with~$t$ instances, all with the same number~$n$ of vertices and the same solution size~$k$.
    Note that since all graphs have the same number of vertices and are 3-regular, they also have the same number~$m$ of edges.
    
    Our OR-cross composition consists of three different gadgets.
    The first gadget is called the instance-selection gadget and it consists of one vertex~$v_j$ for each input instance and $t$ additional dummy vertices~$D$.
    These~$2t$ vertices all form an independent set.
    Next, we add a vertex~$s$ and make it adjacent to all vertices~$v_j$ and a set~$W$ of~$4\binom{n}{2}$ vertices and make them adjacent to each vertex in the instance-selection gadget.
    We add~$s$ to~$S$ and~$W$ to~$T$. 

    The second gadget is called a vertex-selection gadget.
    We start with a clique of~${(n+1)\cdot k}$ vertices~$x_i^j$ for each~$i \in [n+1]$ and each~$j\in [k]$.
    All vertices~$x_{n+1}^j$ are contained in~$T$.
    Next, for each pair~$x_i^p$ and~$x_i^q$ with~$p \neq q \in [k]$ and each~$i \in [n]$, we add a vertex~$y_i^{p,q}$, add it to~$T$ and make it adjacent to~$x_i^p$ and~$x_i^q$.
    We do the same for each pair~$x_p^j$ and~$x_q^j$ for each~$p \neq q \in [n]$ and each~$j \in [k]$ and call the constructed terminal~$y_{p,q}^{j}$.
    Finally, for each~$j \in [k]$, we create a vertex~$z_j$, add it to~$S$, and make it adjacent to each vertex~$x_i^j$ for~$i \in [n]$.

    The third and final gadget is called the verification gadget and it consists of a clique of~$3\binom{n}{2}$ vertices~$a_i^j$ for~$i \in [\binom{n}{2}]$ and~$j\in [3]$.
    For each~$i \in [\binom{n}{2}]$, we add four additional vertices~$b_i,c^1_i,c_i^2,c^3_i$.
    The vertex~$b_i$ is added to~$S$ and the other three are added to~$T$.
    We make~$b_i$ adjacent to all vertices~$a_i^j$ for~$j\in [3]$.
    Moreover,~$c^1_i$ is adjacent to~$a_i^1$ and~$a_i^2$, vertex~$c_i^2$ is adjacent to~$a_i^1$ and~$a_i^3$, and~$c_i^3$ is adjacent to~$a_i^2$ and~$a_i^3$.

    It remains to connect the different gadgets.
    We arbitrarily order the vertices in each input instance and assign them numbers in~$[n]$.
    Next, we pick an arbitrary bijection~$f$ between numbers in~$[\binom{n}{2}]$ and pairs~$\{p,q\}$ with~$p \neq q \in [n]$.
    For each~$i \in [\binom{n}{2}]$, the vertex~$a_i^1$ is adjacent to~$v_j$ if and only if~$f(i)$ is \emph{not} an edge in the instance corresponding to~$v_j$.
    Moreover,~$a_i^1$ is adjacent to some arbitrary vertices in~$D$ to ensure that it has exactly~$t$ neighbors in the instance-selection gadget.
    The vertices~$a_i^2$ and~$a_i^3$ are adjacent to the~$t$ vertices in~$D$.

    Next, for each~$i\in [\binom{n}{2}]$, we make~$a_i^1$ adjacent to~$x^j_{n+1}$ in the vertex-selection gadget for all~$j \in [k]$.
    Let~$f(i) = \{p,q\}$ with~$p < q$.
    Then, we make~$a_i^2$ adjacent to~$x^j_p$ and~$a_i^3$ adjacent to~$x_q^j$ for all~$j \in [k]$.
    Finally, we set
    $\ell = 2\binom{n}{2}^2 + \binom{n}{2}(t+k+7) + k^2n + k(3n+2k-4) + t - m - 1$.

    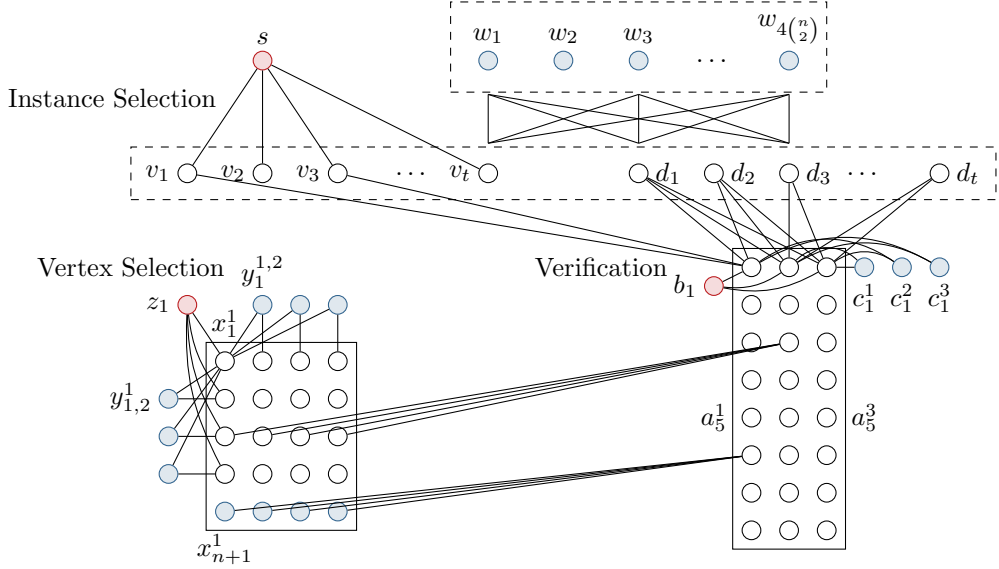
\begin{figure}
        \centering
        \begin{tikzpicture}
                \foreach \x in {1,...,3}{
					\node[circle,draw,label=left:{$v_{\x}$}] (v\x) at (\x,0) {};
                    \node[circle,draw,label=right:$d_{\x}$] (d\x) at (\x+6,0) {};
                }
                \node at (4,0) {$\dots$};
                \node at (10,0) {$\dots$};
				\node[circle,draw,label=left:{$v_t$}] (vt) at (5,0) {};
                \node[circle,draw,label=right:$d_t$] (dt) at (11,0) {};
                \node[rectangle,draw,dashed,minimum width=11.5cm,minimum height=.7cm] at(6,0) {};

                \node[circle,draw=fred,fill=fred2,label=above:$s$] (s) at (2,1.5) {} edge(vt);
                 \foreach \x in {1,...,3}{
					\draw (s) -- (v\x);
                }
                                
                 \foreach \x in {1,...,3}{
					\node[circle,draw=fblue,fill=fblue2,label=above:{$w_{\x}$}] (w\x) at (\x+4,1.5) {};
                }
                \node at (8,1.5) {$\dots$};
				\node[circle,draw=fblue,fill=fblue2,label=above:{$w_{4\binom{n}{2}}$}] (wt) at (9,1.5) {};
                \node[rectangle,draw,dashed,minimum width=5cm,minimum height=1.2cm] at(7,1.68) {};
                \foreach \x in {1,3,5}{
					\foreach \y in {1,3,5}{
                        \draw (\x+4,.4) -- (\y+4,1.05);
                    }
                }
                \node at(0,1) {Instance Selection};

                \foreach \x in {1,...,4}{
					\foreach \y in {1,...,4}{
                        \node[circle,draw] (x\x\y) at (.5*\y+1,-.5*\x-2) {};
                    }
                }
				\foreach \y in {1,...,4}{
                    \node[circle,draw,draw=fblue,fill=fblue2] (xn\y) at (.5*\y+1,-4.5) {};
                }
                \node at(1.5,-2) {$x_1^1$};
                \node at(1.5,-5) {$x_{n+1}^1$};
                \node[circle,draw=fred,fill=fred2,label=left:$z_1$] at (1,-1.75) {} edge(x11) edge[bend right=15](x21) edge[bend right=15](x31) edge[bend right=15](x41);
                \node[circle,draw=fblue,fill=fblue2,label=left:$y_{1,2}^1$] at (.75,-3) {} edge(x11) edge(x21);
                \node[circle,draw=fblue,fill=fblue2] at (.75,-3.5) {} edge(x11) edge(x31);
                \node[circle,draw=fblue,fill=fblue2] at (.75,-4) {} edge(x11) edge(x41);
                \node[circle,draw=fblue,fill=fblue2,label=$y_1^{1,2}$] at (2,-1.75) {} edge(x11) edge(x12);
                \node[circle,draw=fblue,fill=fblue2] at (2.5,-1.75) {} edge(x11) edge(x13);
                \node[circle,draw=fblue,fill=fblue2] at (3,-1.75) {} edge(x11) edge(x14);
                
                \node[rectangle,draw,minimum width=2cm,minimum height=2.5cm] at(2.25,-3.5) {};
                \node at(.25,-1.25) {Vertex Selection};
                    
                \foreach \x in {1,...,8}{
					\foreach \y in {1,...,3}{
                        \node[circle,draw] (a\x\y) at (.5*\y+8,-.5*\x-.75) {};
                    }
                }
                \node[circle,draw=fred,fill=fred2, label=left:$b_1$] at (8,-1.5) {} edge(a11) edge[bend right=20](a12) edge[bend right=20](a13);
                \node[circle,draw=fblue,fill=fblue2,label=below:$c_1^3$] at (11,-1.25) {} edge[bend right=30](a11) edge[bend right=30](a12);
                \node[circle,draw=fblue,fill=fblue2,label=below:$c_1^2$] at (10.5,-1.25) {} edge[bend right=35](a11) edge[bend right=35](a13);
                \node[circle,draw=fblue,fill=fblue2,label=below:$c_1^1$] at (10,-1.25) {} edge[bend right=45](a12) edge(a13);
                
                \node[rectangle,draw,minimum width=1.5cm,minimum height=4cm] at(9,-3) {};
                \node at(8,-3.25) {$a_5^1$};
                \node at(10,-3.25) {$a_5^3$};
                \node at(6.5,-1.25) {Verification};

				\foreach \y in {1,2,3}{
                    \draw (a12) -- (d\y);
                    \draw (a13) -- (d\y);
                }
                \draw (a12) -- (dt);
                \draw (a13) -- (dt);
                \draw (a11) -- (v1);
                \draw (a11) -- (v3);
                \draw (a11) -- (d1);
                \draw (a11) -- (d2);
				\foreach \y in {1,...,4}{
                    \draw (a61) -- (xn\y);
                    \draw (a32) -- (x3\y);
                }
            \end{tikzpicture}
        \caption{An illustration of the construction in the proof of \cref{thm:nopolyvc}. The solid boxes indicate cliques and the dashed boxes indicate independent sets.}
        \label{fig:nopolyvc}
    \end{figure}

    Note that the instance-selection gadget contains an independent set of size~$2t$ which does not contain any terminals.
    There are only~$1 + 4\binom{n}{2} + k(n+2) + k\binom{n}{2} + n \binom{k}{2} + 7\binom{n}{2} < 14n^3$ other vertices.
    Thus, the vertex cover number and the number of terminals of the resulting graph is in~$O(n^3)$.
    Since the instance can be computed in polynomial time (in~$n+t$), it only remains to show that the constructed instance is a yes-instance if and only if at least one of the input instances of \textsc{Vertex Cover} is a yes-instance.

    To this end, first assume that one of the~$t$ instances of \textsc{Vertex Cover} is a yes-instance, that is, the respective graph~$G_j$ contains a vertex cover~$K=\{p_1,p_2,\ldots,p_k\}$ of size~$k$.
    Let~$v_j$ be the vertex in the instance-selection gadget corresponding to that instance.
    We show that the constructed instance is also a yes-instance by describing the connected component~$C_S$ containing~$S$.
    This component contains all the vertices of~$S$ as well as~$v_j$,~$x_{p_i}^i$ for each~$i \in [k]$, and the following vertices of the verification gadget.
    For each~$i \in \binom{n}{2}$, if~$f(i)=\{p,q\}$ (with~$p < q$) is not an edge in~$G_j$, then we add~$a_i^1$ to~$C_S$.
    Otherwise, at least one of the endpoints~$p$ or~$q$ is contained in~$K$ and we add~$a_i^2$ if~$p \in K$ and~$a_i^3$ if~$p \notin K$.
    Note that we added exactly one vertex of each column in the vertex-selection gadget and exactly one vertex in each row of the verification gadget.
    It remains to show that the cut between~$C_S$ and the rest of the graph is of size at most~$\ell$.

    We first analyze the sizes of cuts within each of the gadgets.
    Note~$s$ is incident to exactly~$t$ edges and exactly one of the neighbors ($v_j$) is contained in~$C_S$.
    Hence,~$s$ is incident to~$t-1$ edges in the cut between~$C_S$ and the rest of the graph.
    Next, observe that~$v_j$ is incident to~$4\binom{n}{2}$ vertices in~$W$ (which are not contained in~$C_S$).
    Each vertex~$z_j$ in the vertex-selection gadget is incident to~$n$ vertices exactly one of which is in~$C_S$.
    Each of the other~$k$ vertices in~$C_S$ in the vertex-selection gadget are adjacent to~$((n+1)k-1)+(n-1)+(k-1)+1$~other vertices in the vertex-selection gadget, exactly~$k$ of which belong to~$C_S$.
    Next, each vertex~$b_i$ is adjacent to exactly two vertices not contained in~$C_S$ and each vertex~$a_i^j$ in~$C_S$ is adjacent to exactly~$2\binom{n}{2}+2$ vertices in the verification gadget that are not contained in~$C_S$.
    
    We next analyze the number of edges in the cut between~$C_S$ and the rest of the graph that go between different gadgets.
    Note that there are no edges between the instance-selection gadget and the vertex-selection gadget.
    Hence, we only need to consider edges leaving the verification gadget.
    We start with the size of such a cut assuming that no vertex in the verification gadget belongs to~$C_S$.
    Note that~$v_j$ has~$\binom{n}{2}-m$ edges to the verification gadget and each of the~$k$ vertices~$x_i^j$ in the vertex-selection gadget that belong to~$C_S$ have~$n-1$ edges to the verification gadget.
    This leads to a baseline cut of size~$\binom{n}{2}-m+k(n-1)$.
    Now notice that adding any vertex in the verification gadget to~$C_S$ increases the described cut by~$(t+k)$ if none of the neighbors of the vertex in other gadgets are contained in~$C_S$ and by~$t+k-2$ if one neighbor belongs to~$C_S$.
    By construction, it never happens for a vertex in the verification gadget that two neighbors outside the verification gadget are contained in~$C_S$.
    Moreover, we constructed the solution such that one neighbor is always contained in~$C_S$.
    Hence, the overall size of the described cut is~$\binom{n}{2}-m+k(n-1)+\binom{n}{2}(t+k-2)$.
    Combined with the size of the cuts within each gadget, the total cut size is
    \begin{align*}
        &\ t-1 - m + \binom{n}{2}(t-k+5) + k(nk+3n+2k-4) + \binom{n}{2}(2\binom{n}{2}+2)\\
         =&\ 2\binom{n}{2}^2 + \binom{n}{2}(t+k+7) + k^2n + k(3n + 2k -4) + t-m-1 = \ell.
    \end{align*}
    Note that both~$C_S$ and the rest of the graph induce a single connected component each.

    For the reverse direction, suppose that the constructed instance of \cu{} is a yes-instance.
    Let~$C_S \supseteq S$ be the set of vertices in the connected component containing~$S$ after removing the edges of a solution (a cut of size at most~$\ell$).
    First, we will argue that we can assume without loss of generality that~$C_S \setminus S$ contains exactly one vertex~$v_j$, exactly one vertex from the set~$\{a_i^1,a_i^2,a_i^3\}$ for each~$i \in \binom{n}{2}$, exactly one vertex from the set~$\{x_1^j,x_2^j,\ldots,x_n^j\}$ for each~$j \in [k]$, and at most one vertex from each set~$\{x_i^1,x_i^2,\ldots,x_i^k\}$ for each~$i \in [n]$.
    Note that this also implies that~$C_S \setminus S$ contains exactly~$\binom{n}{2}+k+1$ vertices as all other vertices belong to~$S$ or~$T$.
    Assume that~$C_S \setminus S$ contains at least two vertices from the instance-selection gadget.
    If connectivity within~$C_S$ is not of concern, then removing one of the two vertices from~$C_S$ will always decrease the size of the cut as each vertex in the instance-selection gadget is adjacent to at most~$3\binom{n}{2}$ vertices in~$C_S$ but also to~$4 \binom{n}{2}$ vertices in~$W$ (which are contained in~$T$ and therefore not in~$C_S$).
    Moreover, since all vertices in the verification gadget that have neighbors in the instance-selection gadget (all~$a$-vertices) form a clique and vertices in the instance-selection gadget are only incident to such vertices and~$s$, two vertices are never required to ensure connectivity within~$C_S$.
    On the other hand, at least one vertex from the vertex-selection gadget needs to be contained in~$C_S$ in order to connect~$s \in S$ with the rest of~$S$.
    
    Next, assume that two vertices from a set~$\{a_i^1,a_i^2,a_i^3\}$ are contained in~$C_S$.
    Then, these two vertices have a common neighbor~$c_i^j$ for some~$j \in [3]$ which is contained in~$T$ but only has these two neighbors.
    This contradicts the fact that we started with some solution to \cu.
    Again, at least one such vertex needs to be included in~$C_S$ in order to connect~$b_i \in S$ with the rest of~$S$.
    The same arguments apply to the sets~$\{x_1^j,x_2^j,\ldots,x_n^j\}$ and~$\{x_i^1,x_i^2,\ldots,x_i^k\}$ with the exception that there is no vertex in~$S$ enforcing that at least one vertex of~$\{x_i^1,x_i^2,\ldots,x_i^k\}$ belongs to~$C_S$.

    We next show that the vertices~$K$ encoded by the set of vertices in~$C_S \setminus S$ in the vertex-selection gadget form a vertex cover in the instance corresponding to the vertex~$v_j \in C_S \setminus S$ in the instance-selection gadget.
    As in the forward direction, the size of the cut between~$C_S$ and the rest of the graph has size at least~$\ell = 2\binom{n}{2}^2 + \binom{n}{2}(t+k+7) + k^2n + k(3n + 2k -4) + t-m-1$ and this bound is only achieved if each vertex in~$C_S \setminus S$ in the verification gadget (each $a$ vertex in~$C_S$) has exactly one neighbor in~$C_S \setminus S$ outside the verification gadget.
    We call such a neighbor the \emph{buddy} of the vertex.
    For each~$i \in [\binom{n}{2}]$, if~$a_i^1$ is in~$C_S$ and has a buddy, then this buddy must be~$v_j$ indicating that~$f(i)$ is not an edge in~$v_j$.
    If~$a_i^2$ or~$a_i^3$ is contained in~$C_S$ and has a buddy, then at least one of the endpoints of~$f(i)$ is contained in~$K$.
    This implies that for each pair~$\{p,q\}$ of vertices it holds that~$\{p,q\}$ is not an edge in the instance corresponding to~$v_j$ or~$p$ or~$q$ is contained in~$K$, that is, $K$ is a vertex cover in this instance.
    Note that the set~$K$ has size~$k$ as~$C_S \setminus S$ does not contain two vertices from the set~$\{x_1^j,x_2^j,\ldots,x_n^j\}$ for any~$j \in [k]$.
    This concludes the proof.
\end{proof}

We can make the instance-selection gadget into a clique in the above proof without changing any of the proof details except for the fact that the size of an optimal solution increases by exactly~$2t-1$ (as we still need to include exactly one vertex of the vertex-selection gadget in the connected component containing~$S$).
This gives the following.

\begin{corollary}
    \cu{} parameterized by distance to clique plus the number of terminals does not admit a polynomial kernel unless \ppoly.
\end{corollary}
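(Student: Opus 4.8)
The plan is to reuse the OR-cross composition from the proof of \cref{thm:nopolyvc} almost verbatim, changing only the instance-selection gadget and the budget~$\ell$. Recall that in that construction the instance-selection gadget consisted of the~$2t$ vertices~$\{v_1,\dots,v_t\} \cup D$ forming an independent set; this independent set is exactly what kept the vertex cover number small, but it is also what makes the distance to clique large (an independent set of size~$2t$ is far from a clique). The single modification is to turn these~$2t$ vertices into a clique. After this change, deleting every vertex outside the instance-selection gadget leaves precisely this clique, so the distance to clique is bounded by the number of remaining vertices, which is still~$O(n^3)$; the number of terminals is untouched since none of the gadget vertices are terminals. Hence the parameter distance to clique plus number of terminals remains polynomially bounded in~$n$, as required for an OR-cross composition, and the construction is still computable in polynomial time.

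Next I would argue that the yes/no equivalence is preserved once the budget is increased to~$\ell' = \ell + 2t - 1$. The heart of the original argument is that in any solution the component~$C_S$ containing~$S$ contains exactly one vertex~$v_j$ of the instance-selection gadget (one is needed to connect~$s$ to the rest of~$S$, and more than one is never beneficial). I would re-verify this last point for the clique version: moving a second instance-selection vertex into~$C_S$ still costs its~$4\binom{n}{2}$ edges to~$W \subseteq T$, a term that dominates the at most~$2t-1$ edges one might save inside the clique, so keeping exactly one such vertex red remains strictly optimal. Given that exactly one vertex~$v_j$ is red while all other~$2t - 1$ clique vertices are blue (each is connected to~$T$ through its edges to~$W$), the~$2t - 1$ clique edges incident to~$v_j$ are multicolored in every relevant solution and contribute a fixed additive~$2t - 1$ to the cut.

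Consequently, subtracting this fixed~$2t - 1$ from~$\ell'$ recovers exactly the budget~$\ell$ and the edge-counting of the original proof, so both directions of the equivalence go through unchanged: a constructed yes-instance still forces~$C_S \setminus S$ to encode a vertex cover of size~$k$ in the instance indexed by the unique~$v_j$, and conversely a vertex cover in some input graph yields a cut of size~$\ell'$ by the same component~$C_S$ as before together with its~$2t-1$ extra clique edges.

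The main obstacle I anticipate is precisely the re-verification that the instance-selection gadget still contributes exactly one vertex to~$C_S$: unlike in the independent-set version, there are now~$\binom{2t}{2}$ intra-gadget edges, so one must rule out that placing several instance-selection vertices into~$C_S$ could save on these internal edges. This is handled by observing that each red gadget vertex pays~$4\binom{n}{2}$ for its edges to~$W$, which strictly exceeds any internal saving obtainable from the clique; everything else in the proof of \cref{thm:nopolyvc} is inherited without modification.
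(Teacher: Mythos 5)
Your overall approach is exactly the paper's: its proof of this corollary is precisely the one-line observation that the instance-selection gadget of \cref{thm:nopolyvc} can be turned into a clique, that the optimal solution size then increases by exactly~$2t-1$ (since exactly one gadget vertex still ends up in the component containing~$S$), and that nothing else changes. Your parameter accounting (distance to clique bounded by the~$O(n^3)$ vertices outside the gadget, terminals untouched) and your budget~$\ell + 2t-1$ match this.

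However, the one quantitative justification you add for the key exchange step does not hold as stated. You argue that the~$4\binom{n}{2}$ edges to~$W$ paid by each additional red gadget vertex ``dominate the at most~$2t-1$ edges one might save inside the clique.'' In a cross-composition,~$t$ is in general super-polynomial in~$n$ (this is the entire point of the framework), so~$2t-1 \gg 4\binom{n}{2}$ and this domination is false; it even fails marginally, since recoloring the last blue clique vertex red (when all others are already red) saves~$2t-1$ intra-clique edges at a cost of only~$4\binom{n}{2}$. The correct argument is global rather than vertex-by-vertex: if~$j \geq 1$ gadget vertices are red, the gadget contributes~$j(2t-j)$ intra-clique cut edges plus~$4\binom{n}{2}\,j$ edges to~$W$. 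This function of~$j$ is concave, so on~$[1,2t]$ it is minimized at an endpoint; its value at~$j=1$ is~$2t-1+4\binom{n}{2}$, which is strictly smaller than its value~$8t\binom{n}{2}$ at~$j=2t$ (for~$n \geq 2$). Hence exactly one red gadget vertex is strictly optimal (the remaining terms, edges to~$s$ and to the~$a$-vertices, are handled exactly as in the original proof, which also supplies the connectivity argument that a single gadget vertex suffices). With this repair your proof goes through and coincides with the paper's.
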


We next show that \cu{} admits a linear kernel in the feedback edge number of the graph by simple data reduction rules for vertices of degree one and two.

\ifconf
\begin{restatable}[\appmark]{proposition}{fes}
    \label{prop:fes}
    \cu{} parameterized by feedback edge number~$k$ admits a kernel with at most~$5k$ vertices and~$6k$ edges.
\end{restatable}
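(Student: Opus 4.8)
The plan is to bound the number of vertices by $5k$ via local data-reduction rules that eliminate all degree-one vertices and shorten every maximal path of degree-two vertices; the edge bound then comes for free. Since $G$ is connected, a minimum feedback edge set has size $k = m - n + 1$, so $m = n-1+k$, and it suffices to prove $n \le 5k$, which yields $m \le 6k$. Throughout, I would apply the rules exhaustively and note that each of them (pendant deletion, edge contraction, and dissolving a degree-two vertex) keeps the graph connected and leaves the cyclomatic number $m-n+1$ unchanged, so the feedback edge number never increases.

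First I would remove degree-one vertices. A non-terminal leaf must receive the same color as its unique neighbor, since a differently colored leaf would be isolated in its color class, so it contributes nothing to the cut and can be deleted. For a degree-one terminal $v$ with neighbor $u$, the same argument shows $v$ and $u$ must share a color \emph{unless} the optimal solution isolates $v$ as a singleton color class; the latter can only happen when the terminal set containing $v$ equals $\{v\}$, a situation I would test directly in polynomial time and otherwise rule out, after which I contract $v$ into $u$ and move $u$ into the appropriate terminal set (reporting a no-instance if this forces $u$ into both $S$ and $T$). After this phase there are no degree-one vertices.

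The crux is shortening a maximal path $P$ of degree-two vertices running between two endpoints $u,w$ of degree at least three (or a single endpoint if $P$ closes into a cycle at $u=w$). The key structural observation is that in any solution the internal vertices of $P$ can reach their color class only through $u$ or $w$, so the coloring of $P$ is a (possibly empty) prefix in the color of $u$ followed by a (possibly empty) suffix in the color of $w$ with at most one transition edge; hence $P$ contributes $0$ or $1$ to the cut. This lets me replace $P$ by an equivalent gadget with at most one internal vertex: a non-terminal path is dissolved into the single edge $\{u,w\}$ (or, to avoid a parallel edge, a single retained internal vertex); a path whose terminals are all of one kind, say $S$, forces at least one of $u,w$ to be red and is replaced by the path $u\text{-}s'\text{-}w$ through one new vertex $s'\in S$; and a path carrying both $S$- and $T$-terminals forces $u$ and $w$ to opposite colors, so I add $u$ to $S$, add $w$ to $T$, and replace $P$ by the edge $\{u,w\}$. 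If along $P$ the terminals are not sorted (all of one set before all of the other, read from one endpoint), then no valid coloring of $P$ exists and I output a trivial no-instance. In each case one verifies that the gadget reproduces exactly the feasibility and cut contribution of $P$.

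Finally I would perform the counting on the reduced instance. With no degree-one vertices, the degree-sum identity $\sum_v \deg(v) = 2m = 2n-2+2k$ gives at most $2k-2$ vertices of degree at least three. Suppressing all degree-two vertices yields a connected multigraph $H$ on these at most $2k-2$ vertices whose cyclomatic number is still $k$, so $|E(H)| = |V(H)|-1+k \le 3k-3$; since each edge of $H$ corresponds to one maximal degree-two path, each now retaining at most one internal vertex, there are at most $3k-3$ degree-two vertices. Therefore $n \le (2k-2)+(3k-3) = 5k-5 \le 5k$ and $m = n-1+k \le 6k-6 \le 6k$. I expect the main obstacle to be the safety proofs of the path-shortening rules — in particular pinning down the prefix/suffix coloring structure and the sorted-order no-instance condition — together with the delicate singleton edge cases of the degree-one terminal rule.
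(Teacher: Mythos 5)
Your overall strategy coincides with the paper's: eliminate degree-one vertices, shrink every maximal degree-two path to at most one internal vertex, and then count (your explicit counting via the suppressed multigraph is in fact more self-contained than the paper's citation of prior work, and it is correct). However, your treatment of paths carrying terminals is built on a structural claim that fails in a reachable edge case, and for the mixed ($S$ and $T$) case this failure makes the reduction rule unsound. The claim ``the coloring of $P$ is a prefix in the color of $u$ followed by a suffix in the color of $w$, so $P$ contributes $0$ or $1$ to the cut'' presupposes that each color class has vertices outside $P$. If one terminal set lies \emph{entirely} among the internal vertices of $P$, an entire color class may be a middle segment of $P$, the path contributes $2$ cut edges, and $u,w$ may legally receive the \emph{same} color. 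Concretely, take $T=\{t\}$ with $t$ internal to $P$, internal order $s_1,t$ from $u$, and $S$ containing $s_1$ plus vertices outside $P$. An optimal solution colors everything red except $t$, cutting $2$ edges; but your rule adds $w$ to $T$, forcing $w$ blue, which additionally cuts the ($\geq 2$) edges from $w$ to its red neighbors outside $P$ and can turn a yes-instance into a no-instance (note you also do not adjust $\ell$). Likewise, your ``terminals not sorted $\Rightarrow$ trivial no-instance'' rule wrongly rejects instances such as internal order $s_1,t,s_2$ with $T=\{t\}$: coloring $t$ blue and everything else red is feasible with $2$ cut edges.

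The paper avoids exactly this pitfall by never constraining the endpoints in the mixed case. It instead locates consecutive terminals $v_i\in S$ and $v_j\in T$ on the path with no terminal between them, observes that \emph{some} edge between them must be cut in every solution, deletes the single edge $\{v_i,v_{i+1}\}$ and decrements $\ell$ by one, and then lets the degree-one rules dissolve the remainder of the path; crucially, those degree-one rules explicitly handle the situation where a terminal set is (or becomes) a singleton, which is precisely the edge case your path rule misses. Your single-kind gadget ($u$--$s'$--$w$ with $s'\in S$) happens to remain correct even though its stated justification (``at least one of $u,w$ must be red'') fails in the analogous edge case, but the mixed-terminal rule genuinely needs the paper's alternative treatment or an explicit special case for terminal sets confined to a path's interior.
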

\else
\begin{restatable}{proposition}{fes}
    \label{prop:fes}
    \cu{} parameterized by feedback edge number~$k$ admits a kernel with at most~$5k$ vertices and~$6k$ edges.
\end{restatable}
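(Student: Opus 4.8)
The plan is to design data reduction rules that eliminate low-degree vertices, so that the kernelized graph has minimum degree three except possibly at the terminals, after which a standard feedback-edge-set counting argument bounds the size. Recall that the feedback edge number~$k$ means we can write~$m = n - c + k$ where~$c$ is the number of connected components (here~$c=1$ since~$G$ is connected), so~$m = n - 1 + k$. A graph in which every vertex has degree at least three satisfies~$2m \geq 3n$, i.e.~$n \leq 2(m-n) \leq 2(k-1)$; combining with the edge count this yields the claimed linear bounds once all degree-one and degree-two vertices are handled. The subtlety is that we cannot freely delete or contract near terminals, so the rules must be phrased to preserve the terminal sets~$S$ and~$T$, the connectivity requirements, and the cut value~$\ell$.

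First I would handle \emph{degree-one} vertices. If a non-terminal vertex~$v$ has a single neighbor~$u$, then~$v$ must receive the same color as~$u$ in any feasible solution (coloring it differently would create a multicolored edge with no benefit, and would also risk disconnecting~$v$ from its color class), so we can safely delete~$v$ without changing~$\ell$. If~$v$ is a terminal, say~$v \in S$, then~$u$ is forced to be red as well (to keep~$v$ connected to the rest of its color class through its only neighbor), so we can contract the edge~$\{u,v\}$, moving the terminal status to~$u$ and adjusting~$S$ accordingly; this changes neither~$\ell$ nor the feedback edge number. Iterating removes all degree-one vertices.

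Next I would handle \emph{degree-two} vertices, which form the bulk of the work. A non-terminal degree-two vertex~$v$ with neighbors~$u,w$ lies on a path and its two incident edges behave as a single ``super-edge'' between~$u$ and~$w$: in any optimal solution~$v$ takes the color of one of its neighbors so that at most one of its two edges is multicolored, and one checks that contracting~$v$ into one of its neighbors (equivalently, replacing the path~$u$–$v$–$w$ by an edge~$\{u,w\}$, handling multi-edges and the resulting self-loops or parallel edges appropriately) preserves the optimal cut value. Here the main obstacle — and the step I expect to require the most care — is the bookkeeping around long induced paths of degree-two vertices and around degree-two terminals: a maximal path of degree-two non-terminals between two branch vertices must be shortcut to a single edge, but one must verify that this neither destroys a needed connection within a color class nor undercounts the cut contribution (since the whole path contributes at most one to the cut when both endpoints share a color, and exactly the cost of separating its endpoints otherwise). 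Degree-two terminals require special treatment analogous to the degree-one terminal case, forcing colors onto neighbors and contracting.

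Finally I would bound the size. After exhaustively applying the rules, every non-terminal vertex has degree at least three, and any surviving low-degree vertices are terminals that are themselves charged against the feedback edges or branch vertices incident to the shortcut paths. A careful accounting — the number of vertices of degree at least three is at most~$2(k-1)$, plus the branch vertices and terminals bounded in terms of~$k$ — yields at most~$5k$ vertices, and then~$m = n - 1 + k \leq 6k$ gives the edge bound. The constants~$5k$ and~$6k$ will come out of this counting once the reduction rules are shown to be safe and exhaustively applicable in polynomial time.
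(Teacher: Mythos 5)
Your overall skeleton (degree-one and degree-two reduction rules followed by the standard feedback-edge counting) is the same as the paper's, but three of your steps would fail as stated. First, your treatment of degree-two terminals is incorrect: if~$v \in S$ has degree two with neighbors~$u,w$, neither neighbor is forced to be red --- a solution may cut one of the two incident edges and route~$v$'s connectivity through the other --- so ``forcing colors onto neighbors and contracting'' is not a safe rule. Second, you never address maximal degree-two paths whose internal vertices contain terminals of \emph{both} types. Such a path cannot be compressed cost-free: some edge between a consecutive $S$-terminal and $T$-terminal on the path must be cut in every solution, so the correct rule deletes that edge and \emph{decrements}~$\ell$, contradicting your claim that all rules preserve~$\ell$. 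Third, your final counting asserts that surviving low-degree terminals are ``bounded in terms of~$k$,'' but the number of terminals is not bounded by the feedback edge number at all (consider a long induced path all of whose internal vertices lie in~$S$); without a rule that compresses runs of same-type terminals along a path --- the paper's rule replaces such a path by one with a \emph{single} internal vertex, which is then added to~$S$ or~$T$ --- the vertex bound does not follow.

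There is also a subtler issue with contracting a degree-two non-terminal path $u$--$v$--$w$ to an edge~$\{u,w\}$: if~$u$ and~$w$ are already adjacent, or if two such paths share endpoints, this creates parallel edges, and merging them (as a simple-graph instance requires) changes the optimal cut value, since two internally disjoint connections cost two edges to cut but a single merged edge costs one. You flag ``handling multi-edges appropriately,'' but the resolution is not bookkeeping; it is a structural change to the rule. The paper avoids the problem by never shortcutting a maximal induced path below length two, i.e., it always keeps one internal subdivision vertex, so each former path still contributes exactly one to any cut separating its endpoints and no parallel edges can arise. With that modification (and the terminal-path rules above), the counting via the cited lemma of Bentert et al.\ then gives the stated bounds of~$5k$ vertices and~$6k$ edges.
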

\fi
\appendixproof{prop:fes}{\fes*}
{
\begin{proof}
    We present data-reduction rules that eliminate all degree-1 vertices and bound the length of maximal induced paths, that is, paths whose internal vertices have degree two in the input graph.
    Using standard arguments, this will result in a kernel with~$O(k)$ vertices and edges \cite{BDKNN20}.

    First, note that since the input graph is connected, we can assume without loss of generality that there are no isolated vertices (vertices without any neighbors).
    Next, consider a vertex~$u$ with exactly one neighbor~$v$.
    If~$u \notin S \cup T$, then we can simply remove~$u$ as in any optimal solution, we will color~$u$ with the same color as~$v$ and therefore~$u$ will not be incident to any solution edges.
    If~$u \in S \cup T$ (we assume without loss of generality in~$S$), then there are two cases.
    If~$S = \{u\}$, then the instance is trivial as if~$T = \emptyset$, then it is a yes-instance and if~$T \neq \emptyset$, then the instance is a yes-instance if and only if~$\ell \geq 1$ as deleting the edge~$\{u,v\}$ is an optimal solution (as the input graph is connected).
    If~$|S| \geq 2$, then~$u$ needs to be in the same connected component as~$v$ in any optimal solution.
    Hence, we can remove~$u$ and add~$v$ to~$S$ instead.
    If~$v$ was already contained in~$T$, then we return a trivial no-instance.

    After applying the above procedure exhaustively, we are left with a graph without any vertices of degree at most one.
    Consider any maximal induced path, that is, a path~$P=(v_0,v_1,\ldots,v_{p})$ for some~$p$ such that~$\deg_G(v_0),\deg_G(v_{p}) > 2$ and~$\deg_G(v_i)=2$ for all~$i\in [p-1]$.
    Let~$I = \{v_i \mid i \in [p-1]\}$ be the set of internal vertices of~$P$.
    We consider the following cases based on whether~$I \cap S = \emptyset$ and~$I \cap T = \emptyset$.
    If~$I \cap S = \emptyset$ and~$I \cap T = \emptyset$, then we remove all vertices in~$I$ except for~$v_1$ and add the edge~$\{v_1,v_p\}$.
    Let~$P' = (v_0,v_1,v_p)$ be the resulting maximal induced path.
    Note that if~$v_0$ and~$v_p$ are colored with the same color, then the internal vertices of~$P$ can be colored with the same color and therefore do not increase the solution size.
    If~$v_0$ and~$v_p$ are colored differently, then we need to remove exactly one of the edges in~$P$ and this also holds true for~$P'$.
    If~$I \cap S \neq \emptyset$ and~$I \cap T = \emptyset$, then we again replace~$P$ by~$P'$ but this time we add~$v_1$ to~$S$.
    In this case, we can assume without loss of generality that all vertices in~$I$ are colored red as we need to delete at least one edge for each of~$v_0$ and~$v_p$ that are colored blue and we can assume without loss of generality that these edges are~$\{v_0,v_1\}$ and/or~$\{v_{p-1},v_p\}$.
    The case where~$I \cap S = \emptyset$ and~$I \cap T\neq \emptyset$ is analogous.
    Finally, assume that~$I \cap S \neq \emptyset$ and~$I \cap T \neq \emptyset$.
    Then let~$v_i \in S$ and~$v_j \in T$ such that no vertex between~$v_i$ and~$v_j$ is contained in~$S \cup T$.
    Note that such a pair of vertices always exists.
    We assume without loss of generality that~$i < j$.
    Since some edge between~$v_i$ and~$v_j$ has to belong to any solution, we can remove the edge~$\{v_i,v_{i+1}\}$ without loss of generality and reduce~$\ell$ by one.
    Note that both~$v_i$ and~$v_{i+1}$ now have degree one and applying the above procedure for degree-1 vertices exhaustively removes all vertices in~$I$ (or produces a trivial kernel).

    Hence, we can now assume that the graph does not contain any degree-1 vertices and all maximal induced paths have at most one internal vertex.
    Using standard arguments, the graph contains now at most~$5k$ vertices and at most~$6k$ edges, where~$k$ is the feedback edge number of the constructed graph \cite[Lemma 2]{BDKNN20}.
    Note that the feedback edge number of the graph never increases in the above procedures.    
    Hence, the produced instance has at most~$5k$ vertices and~$6k$ edges, where~$k$ is the feedback edge number of the input graph.
    This concludes the proof.
\end{proof}
}

We next show that the solution size does not allow for a polynomial kernel.
Therein, we use \cref{prop:weird} to construct an OR-cross composition in which a solution to the entire instance can only consist of a solution to one of the input instances.

\ifconf
\begin{restatable}[\appmark]{proposition}{nopolysol}
    \label{prop:nopolysol}
    \cu{} parameterized by solution size~$\ell$ does not admit a polynomial kernel unless \ppoly.
\end{restatable}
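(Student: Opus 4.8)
The plan is to build an OR-cross-composition whose source is the family of \cu{} instances produced by \cref{prop:weird}. Recall that each such instance has a single source terminal, $\tau \coloneqq |T|$ sink terminals, a dominating vertex, and---most importantly---the parabolic cost guarantee: its budget equals $g(\tau)$, where $g(j) \coloneqq c_1 + j(c_2 - j)$, and by property~(ii) any cut separating the source from all sinks while leaving $j$ of the sinks mutually connected has size at least $g(j)$. Since $c_1, c_2, \tau$ are polynomial in the instance size, so is $\ell = g(\tau)$; this is exactly what makes solution size an admissible target parameter. As the polynomial equivalence relation~$\mathcal R$ I would group instances by the quadruple $(\tau, c_1, c_2, |V|)$ (sending malformed inputs to a single junk class), which yields polynomially many classes; within one class all instances share the same budget $\ell = g(\tau)$ and the same sink count~$\tau$.

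The engine of the reduction is the super\-additivity of~$g$. For $j_1 + j_2 = \tau$ with $j_1, j_2 \ge 1$ one computes
\[
 g(j_1) + g(j_2) = g(\tau) + c_1 + 2 j_1 j_2 > g(\tau) = \ell,
\]
and, more generally, splitting the $\tau$ sinks among two or more instances strictly exceeds~$\ell$. Consequently, if the combined instance is engineered so that the only way to satisfy the uncut condition is to route the connectivity of a $\tau$-element sink set through the interior of the input instances, then within budget~$\ell$ all of this connectivity work must happen inside a \emph{single} input instance---and by property~(ii) that instance can realise it within its share $g(\tau)$ of the budget exactly when it is a yes-instance. This is precisely the ``a global solution is a solution to one input instance'' behaviour we want.

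Concretely, I would merge the $t$ sources into one dominating vertex $s \in \hat S$ (so that red-side connectivity is automatic) and introduce $\tau$ global sink terminals $u^1,\dots,u^\tau \in \hat T$. Each instance keeps its private interior, and its $p$-th sink-interface is wired to~$u^p$ through an instance-selection gadget---a binary tree of depth $O(\log t)$ whose leaves are the instances---so that selecting one instance and folding every other instance entirely onto the red side costs only $O(\tau \log t)$ in total rather than $\Theta(\tau t)$. I would then set $\hat\ell \coloneqq g(\tau) + \beta$, where $\beta = O(\tau \log t)$ is the fixed cost of deactivating the $t-1$ unused instances; since $\beta$ is polynomial in $\max_i |I_i| + \log t$, the parameter bound required by \cref{def:or-cross-composition} is met. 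The forward direction is direct: given a yes-instance~$I_{i^*}$, colour its interior by the corresponding cut, route all $\tau$ sinks through~$I_{i^*}$ at cost $g(\tau)$, and fold the remaining instances onto the red side at cost~$\beta$. For the reverse direction, I would argue that any cut of size at most~$\hat\ell$ leaves a budget of exactly $g(\tau)$ for the connectivity work; by the super\-additivity bound this work cannot be split, so a single instance~$I_{i^*}$ carries all of it and, by property~(ii), must admit a cut of size $g(\tau)$ keeping all $\tau$ sinks connected---that is, $I_{i^*}$ is a yes-instance.

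The main obstacle is the selection gadget: I must design the binary tree of routers so that (a)~deactivating an unused instance---absorbing its whole interior into the red component adjacent to the dominating vertex~$s$---incurs a cost that is both \emph{uniform} across instances and only $O(\tau \log t)$ in total, and (b)~this deactivation cost cannot be ``reused'' to cheat on the connectivity of the selected instance. Getting the budget bookkeeping to separate cleanly into the fixed deactivation term~$\beta$ plus the instance-dependent term~$g(j)$---so that the super\-additivity argument applies verbatim---is the delicate part; everything else reduces to the routine cut-counting already exercised in the proof of \cref{prop:weird}.
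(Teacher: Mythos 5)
Your high-level architecture coincides with the paper's own proof: an OR-cross-composition from the instances supplied by \cref{prop:weird}, binary selection trees of depth $\log t$ linking $\tau$ global sink terminals to the per-instance sinks, a trivially connected red side, and the superadditivity of $g(j)=c_1+j(c_2-j)$ driving the reverse direction. The gap is precisely the point you defer as ``the delicate part'': you provide no mechanism that actually enforces the decomposition of the budget $\hat\ell=g(\tau)+\beta$ into a gadget part $\beta$ and an instance part $g(\tau)$. Cut edges are fungible. Nothing in your construction prevents a solution from cutting \emph{fewer} than $\beta$ tree edges --- e.g., a connected root-containing subtree of a selection tree whose blue leaves are the $2^j$ consolidated leaves below a node at depth $\log t - j$ costs only $\log t - j$ tree edges --- and spending the surplus inside the instances; nor from spending up to $\beta=\Theta(\tau\log t)$ extra edges, an amount that \emph{grows with $t$}, inside a single no-instance whose optimal cut exceeds $g(\tau)$ by only an additive constant. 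Worse, configurations with blue-but-internally-unconnected sinks are not even covered by property~(ii), which only lower-bounds cuts in which each of the $j$ sinks stays connected to another sink inside its own graph, so you lack a usable lower bound for exactly the cheating solutions you would need to exclude. Hence the assertion ``any cut of size at most $\hat\ell$ leaves a budget of exactly $g(\tau)$ for the connectivity work'' does not follow, and the superadditivity argument cannot be applied verbatim.

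The paper's fix is a scaling device absent from your proposal: every edge of every input graph is replaced by $w=k\log(t)+1$ parallel paths of length two, and the budget is set to $\ell=w\ell'+k\log(t)$. Separating the endpoints of an original edge then costs exactly $w$, so $(\ell'+1)w>\ell$ implies that at most $\ell'$ original edges can be cut in total, while the entire tree budget $k\log(t)<w$ is worth less than a single original edge. This makes the accounting rigid in exactly the way your outline presupposes: no savings engineered inside the selection trees can ever pay for one additional original edge, the superadditivity penalty for splitting the $\tau$ sinks across two or more instances (at least one original edge, hence at least $w$) is strictly infeasible, and the selected instance must admit a valid cut using at most $\ell'$ of its own edges. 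With this edge-multiplication step (or an equivalent weighting) inserted, the rest of your outline goes through essentially as in the paper; without it, the reverse direction is incomplete.
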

\else
\begin{restatable}{proposition}{nopolysol}
    \label{prop:nopolysol}
    \cu{} parameterized by solution size~$\ell$ does not admit a polynomial kernel unless \ppoly.
\end{restatable}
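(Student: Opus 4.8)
The plan is to build an OR-cross-composition from the special instances of \cu{} produced by \cref{prop:weird}. Recall that each such instance has $|S|=1$, budget $\ell=c_1+|T|(c_2-|T|)$, and the crucial lower bound~(ii): any cut that leaves some $j$ terminals of $T$ connected to another terminal while separating the unique source has size at least $g(j):=c_1+j(c_2-j)$. Writing $\tau=|T|$, I would first fix a polynomial equivalence relation that places two instances in the same class exactly when they agree on $\tau$, $c_1$, $c_2$ (hence on $\ell$) and on the total number of vertices; since these quantities are polynomially bounded, this yields the required polynomially many classes, and I may assume the $t$ input instances $I_1,\dots,I_t$ all share $\tau$, $c_1$, $c_2$, and $\ell$.

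The composed instance would identify the $t$ sources into a single source $s$ and the $t$ terminal sets into a single shared set $T=\{t_1,\dots,t_\tau\}$, while keeping the internal gadgets (the multicolored-clique gadget together with the large copy-clique) as otherwise disjoint copies. The target budget would be essentially $\ell'=g(\tau)=\ell$, up to a lower-order additive term independent of $t$. Completeness is the intended-easy direction: if some $I_i$ is a yes-instance, I would realize its solution inside the $i$-th gadget (colour $T$ together with the corresponding clique $C_i$ blue and the rest of that gadget red), which costs exactly $g(\tau)$ by \cref{prop:weird}, and then argue that every other gadget can be coloured monochromatically so that it contributes nothing to the cut.

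For soundness I would exploit the superadditivity of $g$: a direct computation gives $g(j_1)+g(j_2)-g(j_1+j_2)=c_1+2j_1j_2>0$ whenever $j_1,j_2>0$, so splitting the terminals into two connected groups is strictly more expensive than connecting them within one group. Given any solution of cost at most $\ell'=g(\tau)$, all of $T$ lies in one blue component; grouping the terminals by the gadget through which each is connected and applying~(ii) gadget-by-gadget, the total cut is at least $\sum_i g(\tau_i)$ with $\sum_i\tau_i=\tau$. By superadditivity this is at least $g(\tau)$, with equality only if all terminals are connected through a single gadget $I_i$, and the structure realizing $g(\tau)$ inside $I_i$ is exactly a multicolored clique, so $I_i$ is a yes-instance. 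Together with the parameter bound $\ell'=g(\tau)\in\poly(\max_i|I_i|)$, independent of $t$, this constitutes an OR-cross-composition and excludes a polynomial kernel unless \ppoly.

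The hard part will be making the construction genuinely $t$-independent, since $\ell'$ must be bounded by $\poly(\max_i|I_i|+\log t)$ and therefore cannot grow linearly in $t$. Two points are delicate. First, the shared terminals are now adjacent to the colour-classes of \emph{all} $t$ gadgets, so I must ensure that the only terminal–internal edges actually cut lie inside the one selected gadget and that all analogous edges of the non-selected gadgets are monochromatic; the natural tool is the large copy-clique, whose cost forbids splitting a non-selected gadget and thus lets each such gadget be absorbed wholesale into one side. Second, and most problematic, I must keep the selected gadget's red part connected to $s$ \emph{without} forcing one cut edge per non-selected gadget at the source interface — a naive shared source spends $\Omega(t)$ there. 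Reconciling a $t$-independent source/interface design with the requirement that~(ii) still applies separately to each gadget (so the superadditive bound goes through), and verifying that absorbing a non-selected gadget is never cheaper than the intended colouring, is where the real work lies.
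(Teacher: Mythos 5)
Your high-level plan---an OR-cross-composition from the special instances of \cref{prop:weird}, the same polynomial equivalence relation, and soundness driven by the superadditivity computation $g(j_1)+g(j_2)-g(j_1+j_2)=c_1+2j_1j_2>0$---is exactly the skeleton of the paper's proof, and your arithmetic is the same inequality the paper uses to rule out connecting the terminals through two or more gadgets. However, the construction you propose does not work, and the obstacle you flag at the end is fatal rather than a deferred technicality. In the instances produced by \cref{prop:weird}, the source is adjacent to \emph{every} vertex of its gadget and each terminal of $T$ is adjacent to an entire colour class. Hence, after identifying the $t$ sources and the $t$ terminal sets, every non-selected gadget must be coloured monochromatically, but either colour cuts $\Omega(n)$ edges at its interface with the identified vertices (colouring it red cuts all terminal--colour-class edges, colouring it blue cuts all source edges), for a total of $\Omega(tn)$ cut edges---incompatible with a budget bounded by $\poly(\max_i|I_i|+\log t)$. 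You correctly identify this problem, but the proposal contains no mechanism to resolve it, and that mechanism is the actual content of the proof.

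The paper's resolution, missing from your proposal, has three components. First, the gadgets are kept vertex-disjoint and the (singleton) sources are merely joined into a clique, so no source is adjacent to any foreign gadget and every non-selected gadget can be coloured red at zero cost. Second, instead of identifying terminal sets, the paper introduces $k$ complete binary trees of depth $\log t$; their roots are the new terminals in $T$, and the $i$\textsuperscript{th} leaf of each tree is identified with one terminal of $T_i$. Connecting the roots to the selected gadget $G_i$ while severing them from all other gadgets then costs exactly one edge per tree level, i.e., $k\log t$ edges in total---this is the $t$-independent (up to $\log t$) interface you were unable to design. Third, every original edge is replaced by $w=k\log t+1$ parallel paths of length two and the budget is set to $\ell=w\ell'+k\log t$; since $w>k\log t$, the tree edges are of lower order, so any solution cuts the subdivided versions of at most $\ell'$ original edges, and only then does your superadditivity argument apply (gadget by gadget, via condition (ii) of \cref{prop:weird}) to force all $k$ roots to be connected through a single gadget $G_i$, which yields a solution for $I_i$.
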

\fi
\appendixproof{prop:nopolysol}{\nopolysol*}
{
\begin{proof}
    We present an OR-cross composition from \cu, where~$|S|=1$ and there exist constants~$c_1 \geq 0, c_2 > |T|$ such that~$\ell = c_1 + |T| (c_2 - |T|)$ and any cut that keeps any set of~$j$ terminals in~$T$ connected to at least one other terminal in~$T$ while separating the terminal in~$S$ from all terminals in~$T$ has size at least~$c_1 + j (c_2-j)$.
    This variant is NP-hard by \cref{prop:weird}.
    We consider the polynomial equivalence relation where all instances in the same equivalence class have the same solution size~$\ell'$, the same constants~$c_1,c_2$, and the same number~$k = |T|$ of terminals in~$T$.
    We assume that the number~$t$ of input instances is~$2^q$ for some integer~$q$.
    Note that this can be achieved by duplicating one of the input instances at most~$t$ times.

    Given~$t$ instances~$I_1 = (G_1,S_1,T_1,\ell'), I_2=(G_2,S_2,T_2,\ell'),\ldots,I_t=(G_t,S_t,T_t,\ell')$ of the same equivalence class, we construct an instance~$(H,S,T,\ell)$ of \cu{} as follows.
    We start with~$H$ being the disjoint union of all graphs~$G_i$ and~$S$ being the union of all~$S_i$.
    Let~$w = k \log(t) + 1$.
    We replace each edge~$\{u,v\}$ in the current graph by~$w$ paths of length two, that is, we add a set~$W_{u,v}$ of~$w$ new vertices and for each~$p \in W_{u,v}$, we add the edges~$\{u,p\}$ and~$\{p,v\}$.
    Next, we add~$k$ binary trees of depth~$\log(t)$, add the roots of these trees to~$T$, and identify the~$i$\textsuperscript{th} leaf of each of these trees with one vertex in~$T_i$ in such a way that each vertex in~$T_i$ is identified with the leaf of exactly one tree.
    See \cref{fig:nopolysol} for an illustration.
    Finally, we add edges between each pair of vertices in~$S$ and set~$\ell = w\ell'+k\log(t)$.
    This concludes the construction.
    \begin{figure}
        \centering
        \begin{tikzpicture}
            \node[ellipse,draw,label=below:$G_1$, minimum width=3cm, minimum height=3.5cm] at (0,1) {};
            \node[ellipse,draw,label=below:$G_{2}$, minimum width=3cm, minimum height=3.5cm] at (4,1) {};
            \node[ellipse,draw,label=below:$G_t$, minimum width=3cm, minimum height=3.5cm] at (10,1) {};
            \node[ellipse,draw,label=below:$T_1$, minimum width=2cm, minimum height=.9cm] at (0,1.5) {};
            \node[ellipse,draw,label=below:$T_2$, minimum width=2cm, minimum height=.9cm] at (4,1.5) {};
            \node[ellipse,draw,label=below:$T_t$, minimum width=2cm, minimum height=.9cm] at (10,1.5) {};
            \node[circle,draw=fred,fill=fred2] at(0,0) (S1) {};
            \node[circle,draw=fred,fill=fred2] at(4,0) (S2) {} edge(S1);
            \node[circle,draw=fred,fill=fred2] at(10,0) (St) {} edge[bend left=8](S1) edge(S2);
            \node[circle,draw,label=left:$u$] at(-.75,1.5) (T11) {};
            \node[circle,draw,label=right:$v$] at(-1.25,.5) (v) {};
            \node[circle,draw] at(-1.3,1.1) (w1) {} edge(T11) edge(v);
            \node[circle,draw] at(-1,1) (w2) {} edge(T11) edge(v);
            \node[circle,draw] at(-.7,.9) (w3) {} edge(T11) edge(v);
            \node[ellipse,rotate=-20,draw, minimum width=1cm,minimum height=.5cm,label=left:$W_{u,v}$] at(-1,1){};
            \node[circle,draw] at(-.25,1.5) (T12) {};
            \node[circle,draw] at(.25,1.5) (T13) {};
            \node[circle,draw] at(0.75,1.5) (T14) {};
            \node[circle,draw] at(3.25,1.5) (T21) {};
            \node[circle,draw] at(3.75,1.5) (T22) {};
            \node[circle,draw] at(4.25,1.5) (T23) {};
            \node[circle,draw] at(4.75,1.5) (T24) {};
            \node at(7,1) {$\dots$};
            \node[circle,draw] at(9.25,1.5) (Tt1) {};
            \node[circle,draw] at(9.75,1.5) (Tt2) {};
            \node[circle,draw] at(10.25,1.5) (Tt3) {};
            \node[circle,draw] at(10.75,1.5) (Tt4) {};

            \foreach \x in {1,...,4}{
                \node[circle,draw=fblue,fill=fblue2] at(3*\x-2.5,5) (r\x) {};
                \node[circle,draw] at(3*\x-3,4.5) (u\x) {} edge(r\x) edge($(3*\x-3.25,4)$) edge($(3*\x-2.75,4)$);
                \node[circle,draw] at(3*\x-2,4.5) (v\x) {} edge(r\x) edge($(3*\x-2.25,4)$) edge($(3*\x-1.75,4)$);
                \node at(3*\x-3,3.75) {$\vdots$};
                \node at(3*\x-2,3.75) {$\vdots$};
                \node[circle,draw] at(3*\x-3.5,3) (x\x) {} edge(T1\x)  edge(T2\x) edge($(3*\x-3.35,3.5)$);
                \node[circle,draw] at(3*\x-1.5,3) (x\x) {} edge(Tt\x) edge($(3*\x-1.65,3.5)$);
            }
        \end{tikzpicture}
        \caption{An illustration of the reduction behind \cref{prop:nopolysol}.}
        \label{fig:nopolysol}
    \end{figure}

    Since the solution size~$\ell$ is by construction polynomially upper-bounded by~$n + \log(t)$ and the reduction can be computed in polynomial time (in~$n+t$), it only remains to show that the constructed instance is a yes-instance if and only if at least one of the input instances is a yes-instance.
    To this end, first assume that one input instance~$I_i$ is a yes-instance.
    We construct a solution of size at most~$\ell$ in the constructed instance as follows.
    For each edge~$\{u,v\}$ in the solution, we add all edges between vertices in~$W_{u,v}$ and~$u$ to our solution cut.
    Note that this cut has size at most~$w\ell'$.
    Next, for each of the~$k$ binary trees, we separate the path from the root to the vertex in~$G_i$ from the rest of the tree.
    Note that this cuts exactly one edge in each layer and hence exactly~$\log(t)$ edges per tree.
    Thus, the constructed cut has size at most~$w \ell' + k\log(t)=\ell$.
    Note that all vertices in~$T$ (the roots of the binary trees) are connected to one another through the graph~$G_i$ and are separated from all other graphs~$G_j$ with~$j \neq i$ as they are separated from all other leaves in the binary tree and within~$G_i$ they are separated from~$S_i$.
    Since all vertices in~$S$ are connected by construction, this shows that the constructed instance is a yes-instance.
    
    For the reverse direction, assume that the constructed instance is a yes-instance and let~$F$ be a solution cut of size at most~$\ell$ in~$H$.
    First note that if for some pair~$u,v$ of vertices, an edge incident to a vertex in~$W_{u,v}$ is contained in~$F$, then we can assume without loss of generality that for each vertex in~$W_{u,v}$ exactly one incident edge is cut.
    Since~$w > k\log(t)$, this means that we can assume that~$F$ contains edges incident to at most~$\ell'$ sets~$W_{u,v}$.
    Let~$F'$ be the set of edges in the~$t$ original graphs corresponding to these~$\ell'$ sets.
    Note that if some set of~$j$ terminals in~$T$ are connected through some graph~$G_i$ in~$H - F$, then they are connected through a path in their respective binary trees and the~$j$ corresponding terminals in~$G_i$ have to be separated from~$S_i$.
    By assumption, such a cut (in the original graph~$G_i$) has size at least~$c_1 + j(c_2-j)$ and in~$H$, this corresponds to a cut of size~$w(c_1+j(c_2-j))$.
    In order to connect all terminal pairs, we have to connect all~$k$ through one graph~$G_i$ as if we use at least two graphs to connect sets of size~$j_1,j_2,\ldots,j_p$ with~$\sum_{i=1}^pj_i \geq k$, then this cut has size at least
    \[\sum_{i=1}^p w(c_1+j_i(c_2-j_i)) \geq \sum_{i=1}^p w(c_1+j_i(c_2-k)) \geq w(p c_1 + k(c_2-k)) \geq  w(c_1 + 1 + k(c_2-k)) > \ell,\] a contradiction.
    If all terminals are connected through a single graph~$G_i$, then the vertices in~$T_i$ remain connected to one another in~$H-F$ and are separated from~$S_i$.
    That is, there is a set of~$\ell'$ edges in~$G_i$ to separate all vertices in~$T_i$ from~$S_i$ while keeping all vertices in~$T_i$ connected.
    Since~$|S_i|=1$, this shows that instance~$I_i$ is a yes-instance of \cu{} and this concludes the proof.
\end{proof}
}

Finally, we show that \dcs{} does not admit a polynomial kernel parameterized by bandwidth.
\ifconf This is achieved via a simple OR-cross composition from \dcs{} where we place all instances along a line and place small connection gadgets between the different instances. \fi

\ifconf
\begin{restatable}[\appmark]{proposition}{nopolybw}
    \label{prop:nopolybw}
    \dcs{} parameterized by bandwidth does not admit a polynomial kernel unless \ppoly.
\end{restatable}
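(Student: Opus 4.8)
The plan is to give an OR-cross-composition from \dcs{} itself, which is NP-hard even on graphs of maximum degree three by \cref{prop:md} (bounded degree is convenient, since it makes both the connection gadgets and the linear layout easy to control). As the polynomial equivalence relation I would group instances by their number of vertices, their number of edges, and the sizes of $S$ and $T$, so that I may assume all $t$ input instances $(G_1,S_1,T_1),\dots,(G_t,S_t,T_t)$ share a common vertex set size $n$ and common terminal-set sizes. The key point to keep in mind throughout is that the parameter I must bound is the \emph{bandwidth} of the composed graph $H$, and the composition definition only requires this to be polynomial in $\max_i|I_i|+\log t$. Since bandwidth is trivially at most the number of vertices placed inside any fixed window of a linear layout, the whole difficulty of the parameter bound reduces to arranging $H$ so that every edge stays inside a window of width $O(n)$ that is \emph{independent of $t$}; the number of instances $t$ may only lengthen the line, never widen the window.

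The construction I would build places the $t$ instances consecutively along a line, $G_1,G_2,\dots,G_t$, with a small \emph{connection gadget} inserted between each consecutive pair and two global terminals $\sigma\in S$, $\tau\in T$ fixed at the extreme left and right ends. Intuitively, in any valid red/blue partition of a connected, essentially path-like graph the connected red region containing $\sigma$ must be a ``prefix'' and the connected blue region containing $\tau$ the complementary ``suffix'', so there is exactly one place along the line where the coloring switches from red to blue. The design goal of the connection gadgets is to force this single red-to-blue transition to occur \emph{strictly inside one instance} $G_j$, while every instance to its left is entirely red and every instance to its right is entirely blue. A monochromatic instance imposes no constraint, whereas the transition instance $G_j$ is forced to admit a connected bipartition in which its ``left interface'' is red together with $S_j$ and its ``right interface'' is blue together with $T_j$ --- precisely a yes-certificate for $(G_j,S_j,T_j)$. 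This yields the OR: $H$ is a yes-instance if and only if some $G_j$ is, with the forward direction obtained by solving the chosen $G_j$ and coloring all earlier instances red and all later ones blue.

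For the bandwidth bound I would lay out the vertices left to right as $\sigma$, then the block of vertices of $G_1$ and its gadget, then $G_2$, and so on up to $\tau$. Every edge then lies within a single instance block or between a block and an adjacent connection gadget, so its endpoints are at distance $O(n)$ in the layout; hence the bandwidth of $H$ is $O(n)=\operatorname{poly}(\max_i|I_i|)$, comfortably within the required bound. The correctness proof would then be the two routine directions sketched above, checking in the backward direction that a cut of bounded connectivity (here, just a valid partition, as \dcs{} carries no budget) cannot split the line ``between'' two instances and must instead split some $G_j$ consistently with its terminals.

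The hard part will be the design of the connection gadgets, specifically enforcing both that the transition cannot be realized ``for free'' on the interface between two consecutive instances and that, when it does occur inside $G_j$, the split faithfully respects $S_j$ and $T_j$. The tension is genuine: if I simply kept all $S_i,T_i$ as global terminals, then $G[R]$ connected would force all $S_i$ to lie in one red region and all $T_i$ in one blue region, yielding an AND rather than an OR. Conversely, I cannot enforce the terminal constraints of $G_j$ by placing private pendant terminals inside instances, because an instance that ends up on the ``wrong'' monochromatic side (a red-terminal gadget inside an all-blue post-transition instance, or vice versa) would orphan that terminal and destroy connectivity; this argument in fact shows that only the two extreme vertices $\sigma,\tau$ can safely be global terminals. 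The resolution I would pursue is to let each instance meet the backbone through a \emph{wide} interface --- several parallel port paths on each side rather than a single shared port --- so that the interface can encode the full terminal sets $S_j,T_j$ purely through connectivity, and so that the requirement that the global red region (and the global blue region) be \emph{single} connected sets forces all parallel ports to switch color at the same instance. Getting this interface to simultaneously (i) admit a monochromatic coloring with no constraint and (ii) admit a bichromatic coloring exactly when $(G_j,S_j,T_j)$ is solvable, all while preserving the $O(n)$ layout window, is the crux of the argument and where I would concentrate the technical effort.
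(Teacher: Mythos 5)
Your skeleton matches the paper's: an OR-cross-composition from \dcs{} itself, the $t$ instances laid out consecutively along a line with small connection gadgets between consecutive instances, and the bandwidth bound obtained exactly as you describe (every edge stays inside a window of width $O(n)$, so bandwidth is $O(n)$ regardless of $t$). However, the proposal has a genuine gap: the entire technical content of such a proof is the connection gadget itself, and you explicitly do not construct it. Your plan requires an interface that (i) is unconstrained under a monochromatic coloring of an instance and (ii) admits a bichromatic coloring exactly when $(G_j,S_j,T_j)$ is a yes-instance, and you defer this as ``the crux \dots where I would concentrate the technical effort.'' Without that gadget there is no reduction, and it is not at all clear that a gadget meeting your two requirements within your monochromatic-transition framework exists.

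Worse, the structural analysis that led you into this difficulty is flawed, and it talks you out of the route that actually works. You claim that keeping all $S_i,T_i$ as global terminals forces an AND because $G[R]$ connected would merge all $S_i$ into one red region. All $S_i$ do end up in one red region, but that does not force every instance to be solved internally: connectivity of the terminals of a non-chosen instance can be supplied \emph{externally} by the gadget vertices. This is precisely the paper's construction: between $G_i$ and $G_{i+1}$ it places two vertices, $u$ adjacent to \emph{all} of $S_i$, \emph{all} of $T_{i+1}$, one vertex of $T_i$ and one of $S_{i+1}$, and $v$ symmetrically. In any solution $u$ and $v$ get opposite colors (else one color class is disconnected across the gadget), and a red $u$ ties all of $S_i$ together for free, so non-chosen instances impose no constraint; their non-terminal vertices need not even separate anything, since \dcs{} has no budget. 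The OR then emerges at the unique instance $G_i$ where the gadget colorings flip (left gadget has $u$ red, right gadget has $u$ blue): there, each gadget vertex attaches to $G_i$ at only a \emph{single} vertex of its own color, so global connectivity of $R$ and $B$ forces $S_i$ to be connected inside $G_i[R\cap V(G_i)]$ and $T_i$ inside $G_i[B\cap V(G_i)]$, i.e., forces $(G_i,S_i,T_i)$ to be a yes-instance. So the fix is not a wide multi-port interface but rather abandoning the requirement that non-chosen instances be monochromatic.
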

\else
\begin{restatable}{proposition}{nopolybw}
    \label{prop:nopolybw}
    \dcs{} parameterized by bandwidth does not admit a polynomial kernel unless \ppoly.
\end{restatable}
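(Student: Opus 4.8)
The plan is to give an OR-cross-composition from \dcs{} itself, which is NP-hard by \cref{prop:md}. As the polynomial equivalence relation I would group instances that agree on the number of vertices, the number of edges, and the sizes $|S|$ and $|T|$ of the two terminal sets; within one class all instances $(G_1,S_1,T_1),\dots,(G_t,S_t,T_t)$ share a common size bound $n$. The feature I want to exploit is that bandwidth is a \emph{local} parameter: a graph of bandwidth $b$ admits a linear vertex layout in which every edge joins vertices at distance at most $b$, so if I place the instances one after another along a line and only ever add edges between a bounded number of ``boundary'' vertices of consecutive blocks, the bandwidth of the whole construction stays within $O(n)$, which is polynomially bounded in $\max_i|I_i|+\log t$ as required.

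For the composition itself I would thread a thin \emph{backbone} through the blocks $G_1,\dots,G_t$ from left to right, anchored by a global red terminal $s^{*}\in S$ on the far left and a global blue terminal $t^{*}\in T$ on the far right, and connect consecutive blocks only through small connection gadgets. The intended semantics is that of a single red/blue interface: in any solution the red component forms a ``left part'' and the blue component a ``right part'', so that there is exactly one block in which red and blue meet. I design the connection gadgets so that a block can be traversed for free when it is coloured monochromatically (all red, or all blue), whereas \emph{hosting} the interface inside $G_i$ --- red entering on the left, blue leaving on the right --- is possible if and only if $G_i$ is a yes-instance of \dcs{}. Concretely, the ports of the gadget are attached to $S_i$ on the red side and to $T_i$ on the blue side, so that hosting forces $G_i$ to be partitioned into a connected red part containing $S_i$ and a connected blue part containing $T_i$. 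Since we work in the \dcs{} regime $\ell=m$, the number of cut edges is irrelevant and only the connectivity and terminal constraints matter.

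Correctness would then split as usual. For the forward direction, if some $G_i$ is a yes-instance I colour $G_1,\dots,G_{i-1}$ entirely red, $G_{i+1},\dots,G_t$ entirely blue, and $G_i$ according to its own solution; the backbone and the connection gadgets are built precisely so that the red vertices form one connected set reaching $s^{*}$ and the blue vertices one connected set reaching $t^{*}$. For the reverse direction, I argue from a valid colouring of $H$ that the interface can occur in only one block, and that this hosting block must be \dcs{}-solvable; the construction is therefore a yes-instance exactly when at least one input is, and the parameter (bandwidth) is polynomial in a single instance, giving the claimed kernelization lower bound under \ppoly.

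The hard part will be the reverse direction, and specifically making the connection gadgets \emph{faithful}. I must ensure two things at once: that hosting a block is equivalent to \emph{full} \dcs{}-solvability --- in particular that \emph{all} of $S_i$ is forced red and all of $T_i$ forced blue, not merely one terminal on each side (note that $|S|=1$ already makes \dcs{} trivial, so this forcing is genuinely needed) --- and that a solution cannot ``cheat'' by routing connectivity around the interface or by spreading a partial switch across several blocks. I would control the latter by keeping every inter-block interface thin, so that the two global components cannot reconnect except through a hosting block, and I would derive the ``exactly one host'' property from a short counting/connectivity argument over these interfaces. Achieving this faithful encoding with gadgets small enough to keep the bandwidth in $O(n)$ --- rather than resorting to high-degree selection gadgets, which would immediately blow up the bandwidth --- is the technical crux of the construction.
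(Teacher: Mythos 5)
Your high-level skeleton---an OR-cross-composition from \dcs{} itself, the input graphs laid out along a line, small connection gadgets between consecutive blocks, and bandwidth $O(n)$---is exactly the paper's. But the piece you explicitly defer as ``the technical crux'' is the entire content of the proof, and the architecture you commit to cannot supply it. In \dcs{} the only vertices whose colors are forced are terminals, and the sets $R$ and $B$ need not cover $V$; adjacency (``ports attached to $S_i$'') constrains nothing, because a red vertex may freely have blue or uncolored neighbors when $\ell=m$. Your design colors whole non-host blocks monochromatically, which means the input terminal sets $S_j,T_j$ \emph{cannot} be terminals of the composed instance $H$ (a terminal of $H$ would have a fixed color in every solution, but in your scheme $T_j$ is red whenever $G_j$ lies left of the interface). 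So the only terminals of $H$ are your two anchors $s^{*}$ and $t^{*}$---and then $H$ is trivially a yes-instance, e.g.\ $R=\{s^{*}\}$ and $B=\{t^{*}\}$, independently of the inputs. There is no ``single red/blue interface'' to argue about, and no mechanism that forces the hosting block to be \dcs{}-solvable; the reverse direction has no foothold.

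The paper's construction resolves exactly this point by doing the opposite of monochromatic blocks: it keeps every $S_j$ inside $S$ and every $T_j$ inside $T$, so every block carries both colors in every solution, and the two gadget vertices $u_j,v_j$ between $G_j$ and $G_{j+1}$ decide \emph{where} connectivity is provided rather than where colors switch. Concretely, $u_j$ is adjacent to all of $S_j$, all of $T_{j+1}$, and to one vertex each of $T_j$ and $S_{j+1}$; the vertex $v_j$ is symmetric. Since the blocks form a path, every solution must route a red path and a blue path through each gadget, so $u_j$ and $v_j$ always get different colors; whichever orientation is chosen, each gadget connects one entire terminal set of each adjacent block ``for free'' through a hub vertex. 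Now take the block $G_i$ just to the right of the last gadget whose $u$-vertex is red (or $G_1$ if there is none): its flanking gadgets have opposite orientations, so $G_i$ receives only \emph{single-vertex} red attachments and \emph{single-vertex} blue attachments from outside, and hence $S_i$ and $T_i$ must be connected by disjoint red and blue sets inside $G_i$ itself---i.e., $G_i$ is a yes-instance. This ``free external connectivity for all blocks except one, forced internal connectivity at the block where the gadget orientation flips'' mechanism is the missing idea; without it, or something playing the same role, your construction does not encode the OR of the inputs.
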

\fi
\appendixproof{prop:nopolybw}{\nopolybw*}
{
\begin{proof}
    We present an OR-cross composition from \dcs.
    Given~$t$ instances~$I_1 = (G_1,S_1,T_1), I_2=(G_2,S_2,T_2),\ldots,I_t=(G_t,S_t,T_t)$, we construct a new instance~$(H,S,T)$ as follows.
    Initially~$H$ is the disjoint union of all~$G_i$ and~$S$ and~$T$ are the unions of all~$S_i$ and~$T_i$ respectively, where we put graph~$G_{i+1}$ to the right of graph~$G_i$.
    We then use the gadget depicted in \cref{fig:nopolybw} to connect~$G_i$ and~$G_{i+1}$ for all~$i \in [t-1]$.
    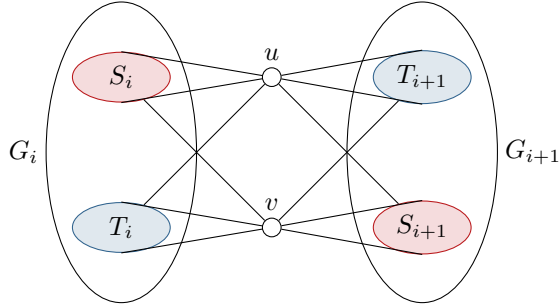
\begin{figure}[t]
        \centering
        \begin{tikzpicture}
            \node[ellipse,draw,label=left:$G_i$, minimum width=2cm, minimum height=4cm] at (0,0) {};
            \node[ellipse,draw,label=right:$G_{i+1}$, minimum width=2cm, minimum height=4cm] at (4,0) {};
            \node[ellipse,draw=fred,fill=fred2, minimum width=1.3cm, minimum height=.5cm] at(0,1) (S1) {$S_i$};
            \node[ellipse,draw=fblue,fill=fblue2, minimum width=1.3cm, minimum height=.5cm] at(0,-1) (T1) {$T_i$};
            \node[ellipse,draw=fred,fill=fred2, minimum width=1.3cm, minimum height=.5cm] at(4,-1) (S2) {$S_{i+1}$};
            \node[ellipse,draw=fblue,fill=fblue2, minimum width=1.3cm, minimum height=.5cm] at(4,1) (T2) {$T_{i+1}$};
            \node[circle,draw,label=$u$] at(2,1) {} edge(S1.north) edge(S1.south) edge(T2.north) edge(T2.south) edge(T1) edge(S2);
            \node[circle,draw,label=$v$] at(2,-1) {} edge(S2.north) edge(S2.south) edge(T1.north) edge(T1.south) edge(T2) edge(S1);
        \end{tikzpicture}
        \caption{A connection gadget in the proof of \cref{prop:nopolybw}.}
        \label{fig:nopolybw}
    \end{figure}%
    We call this gadget a \emph{connection gadget} and it simply consists of two vertices~$u$ and~$v$.
    We make~$u$ adjacent to all vertices in~$S_i$ and~$T_{i+1}$ as well as one arbitrary vertex in~$T_i$ and one arbitrary vertex in~$S_{i+1}$.
    The vertex~$v$ is adjacent to all vertices in~$T_i$ and~$S_{i+1}$ as well as one arbitrary vertex in each of~$S_i$ and~$T_{i+1}$.
    This concludes the construction.

    Since the instance can be computed in polynomial time, it only remains to show that the bandwidth of the constructed instance is polynomial in the maximum number~$n$ of vertices in any of the instances and that the constructed instance is a yes-instance if and only if at least one of the input instances is a yes-instance.
    For the former, note that placing all vertices of~$G_1$ (in any ordering) first, then the two vertices of the connection gadget between~$G_1$ and~$G_2$, then all vertices of~$G_2$ and so on results in an ordering where each edge within one of the graphs has length at most~$n-1$ and each edge incident to a vertex in the connection gadget has length at most~$n+1$.
    Since this covers all edges in the constructed instance, this shows that the bandwidth is upper-bounded by~$n+1$.

    We next show that the constructed instance is a yes-instance if and only if one of the input instances is a yes-instance.
    To this end, first assume that one instance~$I_i$ is a yes-instance.
    We construct a cut in the constructed instance as follows.
    We start with the solution cut in~$G_i$ that leaves both~$S_i$ and~$T_i$ connected and separates the two within~$G_i$.
    Next, for the connection gadget between~$G_j$ and~$G_{j+1}$ for all~$j \geq i$, we color~$u$ blue and~$v$ red.
    For all other connection gadgets, we color~$u$ red and~$v$ blue.
    Finally, in each graph~$G_j$ with~$i \neq j$, we compute a minimum~$S_j$-$T_j$-cut (ignoring connectivity) and add it to the solution.
    Note that all terminals in~$S_j$ and~$T_j$ for all~$j \neq i$ are connected to one another by a vertex in a connection gadget and also connected to one vertex of the same color in the next graph~$G_{j+1}$.
    Hence, the constructed instance is a yes-instance.

    For the reverse direction, assume that the constructed instance is a yes-instance.
    Note that in each connection gadget, any optimal solution colors both vertices of the gadget differently as otherwise at least one set of terminals is separated.
    Consider the set of connection gadgets in which~$u$ is colored red.
    If this set is empty, then we consider~$I_1$ and otherwise we consider the instance directly to the right of the last connection gadget in the set.
    We claim that the considered instance~$I_i$ is a yes-instance.
    As we assume that the~$u$ vertex in the connection gadget between~$G_{i-1}$ and~$G_i$ is colored red (if it exists) and the~$u$ vertex in the connection gadget between~$G_i$ and~$G_{i+1}$ is colored blue (again assuming it exists), each vertex in a connection gadget is adjacent to one terminal of the same color in~$G_i$, that is, they do not provide any additional connectivity between vertices in~$S_i$ and~$T_i$.
    Hence, the solution for the constructed instance contains a cut in~$G_i$ such that all vertices of~$S_i$ remain connected, are separated from all vertices in~$T_i$, which in turn remain connected to one another.
    That is, the instance~$I_i$ is a yes-instance.
    This concludes the proof.
\end{proof}
}

\section{Conclusion}
\label{sec:concl}
In this work, we studied the parameterized complexity of \cu, a natural optimization variant of \dcs.
We gave an almost complete tetrachotomy in terms of the existence of polynomial kernels, fixed-parameter tractability, and XP-time algorithms.
We conclude with a couple of open questions.
First, the complexity with respect to the distance to interval graphs remains unclear, with everything between fixed-parameter tractability and para-NP-hardness still being possibilities.
In particular, the complexity of \cu{} on interval graphs (polynomial-time solvable or NP-hard) is unknown.
Second, we showed that there is an XP-time algorithm for the parameter clique-width.
Moreover, it is known that \maxcu{} is W[1]-hard when parameterized by the clique-width. 
We conjecture that the same holds true for \cu.
Finally, it is known that \cu{} is W[1]-hard parameterized by the number of terminals (vertices in~$S \cup T$) \cite{BDFGK24}.
However, it is not known whether there is an XP-time algorithm and in particular, even whether \textsc{Network Diversion}, a special case of \cu{} with four terminals, is polynomial-time solvable or not has been a long-standing open question, which we repeat here.

\newpage
\bibliography{ref}

\ifconf
\clearpage

\section{Appendix}
\appendixText
\fi

\end{document}